\newtheorem{theorem}{Theorem}[section]
\newtheorem{lemma}[theorem]{Lemma}
\newtheorem{proposition}[theorem]{Proposition}
\newtheorem{corollary}[theorem]{Corollary}
\newtheorem{remark}{Remark}
\newcommand{\C}{{\mathcal C}}
\newcommand{\F}{{\mathbb F}}
\newcommand{\f}{{\mathbb F}}
\newenvironment{proof}[1][Proof]{\begin{trivlist}
\item[\hskip \labelsep {\bfseries #1}]}{\end{trivlist}}
\newcommand{\Rmnum}[1]{\expandafter\@slowromancap\romannumeral #1@}
\begin{document}
%
\title{New characterization and parametrization of LCD Codes}

\author{ Claude Carlet$^1$ \and Sihem Mesnager$^2$ \and Chunming Tang$^3$ \and  Yanfeng Qi$^4$
\thanks{This work was supported by SECODE project and
the National Natural Science Foundation of China
(Grant No. 11401480, 11531002, 11701129). C. Tang
also acknowledges support from 14E013,
CXTD2014-4 and the Meritocracy Research Funds  of China West Normal University.
Y. Qi also acknowledges support from Zhejiang provincial Natural Science Foundation of China (LQ17A010008 and LQ16A010005).
}

\thanks{C. Carlet is with the Department of Mathematics, University of Paris VIII, 93526 Saint-Denis, France and also with University of Paris XIII, CNRS, LAGA UMR 7539,  Sorbonne Paris Cit\'e, 93430 Villetaneuse, France (email: claude.carlet@univ-paris8.fr).}
\thanks{S. Mesnager is with the Department of Mathematics, University of Paris VIII, 93526 Saint-Denis, France, also with University of Paris XIII, CNRS, LAGA UMR 7539, Sorbonne Paris Cit\'e, 93430 Villetaneuse, France, and also with Telecom ParisTech 75013 Paris (email: smesnager@univ-paris8.fr).}
\thanks{C. Tang is with School of Mathematics and Information, China West Normal University, Nanchong, Sichuan,  637002, China. e-mail: tangchunmingmath@163.com
}

\thanks{Y. Qi is with School of Science, Hangzhou Dianzi University, Hangzhou, Zhejiang, 310018, China.
e-mail: qiyanfeng07@163.com
}

}


\maketitle

\begin{abstract}

Linear complementary dual (LCD) cyclic codes were referred historically to as reversible cyclic codes, which had applications in data storage. Due to a newly discovered application in cryptography, there has been renewed interest in LCD codes.
In particular, it has been shown that  binary LCD codes
 play an important role in  implementations against side-channel attacks and fault injection attacks.
 In this paper, we first present a new characterization of binary LCD codes in terms of  their symplectic basis. Using such a characterization,we solve a conjecture proposed by Galvez et al.  on the minimum distance of binary LCD codes. Next, we consider the action
 of the orthogonal group on the set of all LCD codes, determine all possible orbits of this action, derive simple closed  formulas of the
 size of the orbits, and present some asymptotic results of the size of the corresponding orbits.
 Our results show that almost all binary LCD codes are odd-like codes with odd-like duals,
 and  about half of $q$-ary LCD codes have orthonormal  basis, where $q$ is a power of an odd prime.
\end{abstract}

\begin{IEEEkeywords}
LCD codes, \and Odd-like LCD codes, \and   Even-like LCD codes, \and Group action, \and Orthogonal group, \and Symplectic group.
\end{IEEEkeywords}

%
\IEEEpeerreviewmaketitle

\section{Introduction}
Let $q$ be a power of a prime.  $\F_q$ and $\F_q^n$ denote the finite field with $q$ elements and $n$-dimensional vector space over $\F_q$
respectively. Let $\mathrm{wt}(\mathbf x)$  denote the weight of $\mathbf x \in \F_q^n$, i.e., the number of nonzero elements in $\mathbf x=(x_1, x_2,\ldots , x_n)$. For any $\mathbf x=(x_1, x_2,\ldots , x_n)$ and $\mathbf y=(y_1, y_2,\ldots , y_n)$ in $\F_q^n$,  the
\emph{Euclidean inner product} of $\mathbf x$ and $\mathbf y$  is defined by
\begin{align*}
\mathbf x \cdot \mathbf y=\sum_{i=1}^n x_i y_i.
\end{align*}
An $[n, k, d]$ linear code $\C$ over $\f_q$ is a $k$-dimensional subspace
of $\F_q^n$ with minimum (Hamming) distance $d$.  The \emph{dual} of $\C$
is defined by
\begin{align*}
\C^\perp=\{\mathbf w \in \f_q^n: \mathbf w \cdot \mathbf c =0 \text{ for all } \mathbf c \in \C\}.
\end{align*}
If $\C \cap \C^\perp =\{\mathbf{0}\}$, then $\C$ is called a \emph{linear complementary dual code} or an \emph{LCD code}.

Carlet and Guilley applied LCD codes in side-channel attacks (SCA) and fault non-invasive attacks \cite{BCCGM14,CG14}. A lot of works has been devoted to construct LCD codes.  In \cite{DLL16}, Ding et al. constructed several families of Euclidean LCD cyclic codes over finite fields and analyzed their parameters.  In \cite{LDL16} Li et al. studied two special families of LCD cyclic codes, which are both BCH codes.
Mesnager et al. \cite{MTQ16} presented a  construction of algebraic geometry Euclidean LCD codes. In \cite{CMTQ17},
Carlet et al. completely determined all $q$-ary ($q > 3$) Euclidean LCD codes and all $q^2$-ary ($q > 2$) Hermitian LCD codes for all possible parameters.
In the latest paper, Carlet et al. \cite{CMTQ17-sigma},  introduced the concept of linear codes with
 $\sigma$ complementary dual ($\sigma$-LCD), which includes known Euclidean LCD codes, Hermitian LCD codes, and Galois LCD codes. Their results extend those on the classical LCD codes and 
 show that $\sigma$-LCD codes allow the construction of  LCP of codes more easily and with more flexibility.

However, little is known on  the general structure  of LCD codes.
The goal of this paper is to characterize LCD codes and  to study the structure of the set  of LCD codes.  We first present a new characterization of LCD codes in terms of  their symplectic basis. Such a characterization,
 allows us to solve a conjecture proposed by Galvez et al. \cite{GKLRW17} on the minimum distance of binary LCD codes. Afterwards, we consider the action
 of the orthogonal group on the set of all LCD codes, determine all possible orbits of this action, give simple closed formulas of the
 size of the orbits, and present some asymptotic results of the size of the corresponding orbits.
 Our results show that almost all binary LCD codes are odd-like codes with odd-like duals,
 and  about half of $q$-ary LCD codes have orthogonal basis, where $q$ is a power of an odd prime.

The paper is organized as follows.
In Section \ref{sec-pre}, we recall some basic results on LCD codes and binary symmetric matrices.
In Section \ref{sec:char-2}, we firstly characterize binary LCD codes in terms of their basis. Based on these results we solve a conjecture proposed by Galvez et al. \cite{GKLRW17}.
In Section \ref{sec:action-2}, we consider the action of orthogonal group on the set of all binary LCD codes and study the orbits of this action.
In Section \ref{sec:action-odd}, we use the same method to characterize $q$-ary LCD codes, where $q$ is a power of an odd prime.

\section{Preliminaries}\label{sec-pre}
For a matrix $G$, $G^T$ denotes the transposed matrix of $G$. The following  characterization of LCD codes is due to Massey \cite{Mas92}.
\begin{proposition}\label{prop:LCD-Gram}
 Let $\C$ be a linear code with a generator matrix $G$ and a parity-check matrix $H$. Then the three following properties are equivalent:

(i) $\C$ is LCD;

(ii) the matrix $GG^T$ is invertible;

(iii) the matrix $HH^T$ is invertible.
\end{proposition}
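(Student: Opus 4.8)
The plan is to prove the chain (ii) $\Leftrightarrow$ (i) $\Leftrightarrow$ (iii) by reducing every statement to the triviality of the left null space of the Gram-type matrix $GG^T$, keeping the argument purely linear-algebraic so that it remains valid over an arbitrary finite field, where the Euclidean form may be degenerate and nonzero self-orthogonal vectors can occur. First I would record the elementary description of the dual: since the rows of the full-rank $k \times n$ matrix $G$ span $\C$, a vector $\mathbf{v} \in \F_q^n$ lies in $\C^\perp$ if and only if it is orthogonal to every row of $G$, that is, $\mathbf{v} G^T = \mathbf{0}$.

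Next, writing a generic codeword of $\C$ as $\mathbf{x} G$ with $\mathbf{x} \in \F_q^k$ and combining this with the dual description, a codeword $\mathbf{x} G$ lies in $\C \cap \C^\perp$ precisely when $(\mathbf{x}G)G^T = \mathbf{x}(GG^T) = \mathbf{0}$. Because $G$ has full row rank, the map $\mathbf{x} \mapsto \mathbf{x}G$ is injective, so it carries the left null space $\{\mathbf{x} : \mathbf{x}(GG^T) = \mathbf{0}\}$ bijectively onto $\C \cap \C^\perp$. Hence $\C \cap \C^\perp = \{\mathbf{0}\}$ if and only if that null space is trivial; and since $GG^T$ is a square $k \times k$ matrix, triviality of its left null space is equivalent to its invertibility. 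This yields the equivalence (i) $\Leftrightarrow$ (ii).

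For (iii) I would exploit the symmetry of the LCD property under dualizing: the condition $\C \cap \C^\perp = \{\mathbf{0}\}$ is unchanged when $\C$ is replaced by $\C^\perp$, because $(\C^\perp)^\perp = \C$ makes the intersection symmetric in the two codes. Since $H$ is a generator matrix of $\C^\perp$, applying the already-established equivalence (i) $\Leftrightarrow$ (ii) to the code $\C^\perp$ with $H$ playing the role of $G$ shows that $\C^\perp$ is LCD if and only if $HH^T$ is invertible; the symmetry then transports this to $\C$ and gives (i) $\Leftrightarrow$ (iii).

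The main obstacle to watch for is exactly the possible degeneracy of the Euclidean form over $\F_q$: unlike over $\mathbb{R}$, one cannot conclude that $GG^T$ is nonsingular by a positive-definiteness argument, and in fact $GG^T$ fails to be invertible precisely when $\C$ is not LCD. Phrasing the entire proof in terms of null spaces and the injectivity of $\mathbf{x} \mapsto \mathbf{x}G$ sidesteps any appeal to positivity, so that the full-rank hypotheses on $G$ and $H$ are the only structural facts required.
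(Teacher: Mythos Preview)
Your argument is correct. The paper does not supply its own proof of this proposition: it simply attributes the result to Massey \cite{Mas92} and states it without argument, so there is nothing to compare against at the level of technique. What you have written is essentially Massey's classical proof---identify $\C \cap \C^\perp$ with the left null space of $GG^T$ via the injection $\mathbf{x}\mapsto \mathbf{x}G$, then invoke the symmetry $\C \leftrightarrow \C^\perp$ to obtain the equivalence with (iii)---so your proposal is both valid and in line with the standard reference.
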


A matrix $M$ is symmetric if $M^T=M$.
A diagonal matrix is a matrix in which the entries outside the main diagonal are all zero.
We shall write $\mathrm{diag}[x_1, ..., x_k]$ for a diagonal matrix whose main diagonal entries  are $x_1, ..., x_k$.
The following proposition gives the classification of symmetric matrices over $\F_2$ under the equivalence relation $M \sim QMQ^T$  \cite{Ser12,Weyl16},
where  $Q$  is  nonsingular.
\begin{proposition}\label{prop:symmetic-matix}
Let $M$ be a symmetric $k \times k$ matrix of rank $t$ with entries in $\F_2$.

(i) If all diagonal entries of $M$ equal $0$,  then $t$ is even and there is a nonsingular matrix $Q$ such that
\begin{align*}
QMQ^T=\mathrm{diag}[
\underbrace{J_2, J_2, \ldots, J_2}_t, 0, \ldots, 0],
\end{align*}
where $J_2=
  \left [ \begin{array}{cc}
      0 & 1  \\
      1 & 0
    \end{array} \right ]$.

(ii) If at least one diagonal entry of $M$ is nonzero, then there is a nonsingular $k \times  k$ matrix $Q$ such that
\begin{align*}
QMQ^T=\mathrm{diag}[
\underbrace{1, 1, \ldots, 1}_{t}, 0, \ldots, 0].
\end{align*}
\end{proposition}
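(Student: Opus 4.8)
The plan is to read $M$ as the Gram matrix of the symmetric bilinear form $B(\mathbf x,\mathbf y)=\mathbf x M \mathbf y^T$ on $\F_2^k$ and to note that the congruence $M\sim QMQ^T$ is exactly the effect of a change of basis on the Gram matrix of $B$; so both assertions reduce to exhibiting a convenient basis. The starting observation, valid only over $\F_2$, is that $B(\mathbf x,\mathbf x)=\sum_i M_{ii}\,x_i$, because $x_i^2=x_i$ and the off-diagonal terms pair up to give a factor $2=0$. Hence the hypothesis of (i) that all diagonal entries vanish is equivalent to $B$ being \emph{alternating}, i.e. $B(\mathbf x,\mathbf x)=0$ for all $\mathbf x$, and this is a basis-free property preserved by congruence. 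This already shows the two cases are mutually exclusive and together exhaustive.

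For (i) I would invoke the standard symplectic-basis construction. First split off the radical $R=\{\mathbf x:B(\mathbf x,\cdot)=0\}$, which supplies the $k-t$ zero rows and columns and leaves a non-degenerate alternating form of rank $t$. On that part I choose $\mathbf u_1\neq\mathbf 0$, use non-degeneracy to find $\mathbf v_1$ with $B(\mathbf u_1,\mathbf v_1)=1$; since $B$ is alternating, $\{\mathbf u_1,\mathbf v_1\}$ has Gram matrix $J_2$, and the restriction of $B$ to the orthogonal complement of $\langle \mathbf u_1,\mathbf v_1\rangle$ is again non-degenerate and alternating. Iterating peels off one $J_2$ block at a time, so the rank equals twice the number of hyperbolic planes and is therefore even, which yields the asserted normal form.

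For (ii) I again split off the radical as above; the surviving non-degenerate form has rank $t$ and remains non-alternating, since a norm-$1$ vector projects modulo $R$ to a vector $\mathbf u$ with $B(\mathbf u,\mathbf u)=1$. I then induct on the dimension: picking $\mathbf u_1$ with $B(\mathbf u_1,\mathbf u_1)=1$ gives an orthogonal splitting $\langle\mathbf u_1\rangle\perp \mathbf u_1^{\perp}$ with $\mathbf u_1^{\perp}$ non-degenerate of rank $t-1$. If $\mathbf u_1^{\perp}$ is non-alternating I recurse. The sole obstruction is that $\mathbf u_1^{\perp}$ may be alternating; by part (i) it is then an orthogonal sum of $J_2$'s, and the whole form reads $\langle 1\rangle\perp J_2\perp\cdots\perp J_2$. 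This is broken by the absorption identity
\[
\langle 1\rangle\perp J_2\;\cong\;\mathrm{diag}[1,1,1],
\]
checked directly: in $\mathrm{diag}[1,1,1]$ the vector $\mathbf e_1+\mathbf e_2+\mathbf e_3$ has norm $1$ and its orthogonal complement carries exactly the form $J_2$. Applying this identity once per hyperbolic block replaces each $J_2$ by two extra $\langle 1\rangle$ summands, and after all blocks are consumed the form becomes $\mathrm{diag}[1,\dots,1]$ with precisely $t$ ones; reinstating the $k-t$ zero blocks gives the claim.

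The genuinely delicate point, with no analogue over fields of odd characteristic where Gram--Schmidt diagonalizes directly, is the interaction between the diagonal (norm) behaviour and the alternating behaviour: over $\F_2$ the naive orthogonalization can strand one inside an alternating subspace, and the absorption identity is exactly what is needed to escape it. I therefore expect the main work to lie in verifying that identity cleanly and in arranging the induction of (ii) so that the appeal to part (i) for an alternating complement is legitimate.
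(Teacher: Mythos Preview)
Your argument is correct: the bilinear-form interpretation, the radical splitting, the symplectic-basis induction for (i), and the induction with the absorption identity $\langle 1\rangle\perp J_2\cong\mathrm{diag}[1,1,1]$ for (ii) are all sound, and you have correctly identified and handled the one genuine subtlety in characteristic $2$ (that orthogonal complements of norm-$1$ vectors can become alternating).

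There is nothing to compare against: the paper does not prove this proposition but merely quotes it as a known classification, citing Serre and Weyl. Your write-up is essentially the standard proof one finds in those references, so it would serve perfectly well as a self-contained justification inserted into the paper.
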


\section{New characterization of binary LCD codes by their basis}\label{sec:char-2}
In this section, we will present a new characterization of binary LCD codes. Based on this characterization, we
prove a conjecture on minimum distance of binary LCD codes proposed by Galvez et al. \cite{GKLRW17}.

 A vector $\mathbf x=(x_1, x_2, \ldots, x_n)$ in $\F_2^n$ is even-like if
\begin{align*}
\sum_{i=1}^n  x_i=0
\end{align*}
and is odd-like otherwise. A code is said to be even-like if it has only even-like codewords, and is said to be odd-like if it is not even-like.

\begin{theorem}\label{thm:odd-basis}
Let $\C$ be an odd-like binary code with parameters $[n, k]$. Then $\C$ is LCD if and only if there exists a  basis $\mathbf c_1, \mathbf c_2, \ldots, \mathbf c_k$ of  $\C$ such that
for any $i ,j \in \{1,2, \ldots, k\}$,
$\mathbf c_i \cdot \mathbf c_j$ equals $1$ if $i=j$ and equals $0$ if $i\neq j$.
\end{theorem}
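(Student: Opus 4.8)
The plan is to read the desired conclusion as the statement that $\C$ admits an \emph{orthonormal} basis: the conditions $\mathbf c_i\cdot\mathbf c_j=1$ for $i=j$ and $0$ otherwise say exactly that the Gram matrix of $\mathbf c_1,\dots,\mathbf c_k$ is the $k\times k$ identity. With this reformulation the two tools already in hand suffice, namely Massey's Gram-matrix criterion (Proposition \ref{prop:LCD-Gram}) and the classification of binary symmetric matrices (Proposition \ref{prop:symmetic-matix}). The ``if'' direction I would dispose of first, and it needs no hypothesis on $\C$ being odd-like: if such a basis exists and $G$ is the generator matrix with rows $\mathbf c_1,\dots,\mathbf c_k$, then $GG^T=I_k$ is invertible, so $\C$ is LCD by Proposition \ref{prop:LCD-Gram}.

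For the ``only if'' direction I would fix any generator matrix $G$ of $\C$ with rows $\mathbf g_1,\dots,\mathbf g_k$ and set $M=GG^T$. Because $\C$ is LCD, Proposition \ref{prop:LCD-Gram} makes $M$ invertible, so $M$ is a symmetric $k\times k$ matrix of full rank $t=k$, and I can feed it into Proposition \ref{prop:symmetic-matix}. The whole argument then turns on forcing case (ii) of that proposition. The first observation I would record is that over $\F_2$ the $i$-th diagonal entry of $M$ is $\mathbf g_i\cdot\mathbf g_i=\sum_j g_{ij}^2=\sum_j g_{ij}=\mathrm{wt}(\mathbf g_i)\bmod 2$, hence it equals $1$ precisely when $\mathbf g_i$ is odd-like. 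The second observation is that the diagonal of $M$ cannot be identically zero: since the coordinate-sum is a linear functional, if every $\mathbf g_i$ were even-like then every codeword $\sum_i a_i\mathbf g_i$ would be even-like, making $\C$ even-like and contradicting the hypothesis. Therefore at least one diagonal entry of $M$ is nonzero.

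With a nonzero diagonal entry secured, case (ii) of Proposition \ref{prop:symmetic-matix} yields a nonsingular $Q$ with $QMQ^T=\mathrm{diag}[1,\dots,1]=I_k$. Setting $G'=QG$, the matrix $G'$ is again a generator matrix of $\C$ (as $Q$ is invertible), and its rows $\mathbf c_1,\dots,\mathbf c_k$ form a basis satisfying $G'(G')^T=QMQ^T=I_k$, that is $\mathbf c_i\cdot\mathbf c_j=\delta_{ij}$, which is exactly the claimed basis. I expect the only delicate point to be ruling out case (i) of Proposition \ref{prop:symmetic-matix} (where the normal form is a block-diagonal of $J_2$'s and never the identity); this is the crux, and it is precisely where the odd-like hypothesis enters, through the identity $\mathbf g_i\cdot\mathbf g_i=\mathrm{wt}(\mathbf g_i)\bmod 2$. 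Once case (ii) is guaranteed, everything else is a routine change of basis.
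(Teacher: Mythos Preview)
Your proof is correct and follows essentially the same route as the paper: both directions use Proposition~\ref{prop:LCD-Gram}, and for the ``only if'' direction you, like the paper, feed $GG^T$ into case~(ii) of Proposition~\ref{prop:symmetic-matix} after noting that the odd-like hypothesis forces a nonzero diagonal entry. Your explanation of \emph{why} the diagonal cannot vanish (via the linearity of the coordinate-sum functional) is a detail the paper leaves implicit, but otherwise the arguments coincide.
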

\begin{proof}
If there exists a  basis $\mathbf c_1, \mathbf c_2, \ldots, \mathbf c_k$ of  $\C$  such that
for any $i ,j \in \{1,2, \ldots, k\}$,
$\mathbf c_i \cdot \mathbf c_j$ equals $1$ if $i=j$ and equals $0$ if $i\neq j$. Let $G$ be the generator matrix of $\C$ that corresponds to the base $\mathbf c_1, \mathbf c_2, \ldots, \mathbf c_k$. Then, $GG^T=I_k$, where $I_k$ is the $k \times k$ identity matrix. From Proposition \ref{prop:LCD-Gram}, $\C$ is LCD.

Conversely, assume that $\C$ is LCD. Let $G'$ be a generator matrix of $\C$. Then from Proposition \ref{prop:LCD-Gram}, $G'G'^T$ is an invertible symmetric matrix of size $k \times k$.
Since $\C$ is odd-like, there is at least one nonzero  diagonal entry on $G'G'^T$. Then from Part (ii) of Proposition \ref{prop:symmetic-matix} there exists nonsingular $k\times k$ matrix
 $Q$ such that $QG'G'^TQ^T=(QG')(QG')^T= I_k$, where $I_k$ is the $k \times k$ identity matrix. Let $G=QG'$. Then, $G$ is also a generator matrix of $\C$. Let $\mathbf c_i$
 be the $i$-th row of matrix $G$ for $i\in \{1,2, \ldots, k\}$. Hence $\mathbf c_1,\mathbf c_2, \ldots, \mathbf c_k$ is the desired basis, since $GG^T=I_k$. It completes the proof.
\end{proof}

\begin{remark}
Theorem \ref{thm:odd-basis} shows that a binary odd-like code $\C$ is LCD if and only if $\C$ has an orthonormal basis.
\end{remark}

\begin{theorem}\label{thm:even-basis}
Let $\C$ be an even-like binary code with parameters $[n, k]$. Then $\C$ is LCD if and only if  $k$ is even and there exists a  basis $\mathbf c_1, \mathbf c_1', \mathbf c_2, \mathbf c_2', \ldots, \mathbf c_{\frac{k}{2}}, \mathbf c_{\frac{k}{2}}'$ of  $\C$  such that
for any $i ,j \in \{1,2, \ldots, \frac{k}{2}\}$,  the following conditions hold

(i) $\mathbf c_i \cdot \mathbf c_j= \mathbf c_i' \cdot \mathbf c_j'=0$;

(ii) $\mathbf c_i \cdot \mathbf c_j'=0$ if $i\neq j$;

(iii) $\mathbf c_i \cdot \mathbf c_i'=1$.
\end{theorem}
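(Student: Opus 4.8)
The plan is to mirror the proof of Theorem~\ref{thm:odd-basis}, replacing the use of Part~(ii) of Proposition~\ref{prop:symmetic-matix} by Part~(i). The decisive structural observation is that over $\F_2$ one has $\mathbf c \cdot \mathbf c = \sum_i c_i^2 = \sum_i c_i$, so a codeword is even-like precisely when $\mathbf c \cdot \mathbf c = 0$. Hence, if $\C$ is even-like, every row of any generator matrix $G$ satisfies $\mathbf c \cdot \mathbf c = 0$, which means the symmetric Gram matrix $GG^T$ has all diagonal entries equal to $0$. This is exactly the hypothesis of Part~(i) of Proposition~\ref{prop:symmetic-matix}, and it is the reason the canonical form here is built from the blocks $J_2$ rather than from $1$'s.

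For the sufficiency direction, I would suppose such a basis exists, order the basis vectors as $\mathbf c_1, \mathbf c_1', \mathbf c_2, \mathbf c_2', \ldots, \mathbf c_{k/2}, \mathbf c_{k/2}'$, and let $G$ be the corresponding generator matrix. Conditions~(i)--(iii) say precisely that the inner product of the $(2i-1)$-st and $(2i)$-th rows is $1$, that each row has zero self inner product, and that all remaining pairs are orthogonal; therefore
\begin{align*}
GG^T = \mathrm{diag}[\underbrace{J_2, J_2, \ldots, J_2}_{k/2}].
\end{align*}
Since $\det J_2 = 1$ over $\F_2$, this matrix is invertible, so Proposition~\ref{prop:LCD-Gram} gives that $\C$ is LCD.

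For the necessity direction, I would assume $\C$ is even-like and LCD, and let $G'$ be any generator matrix. By Proposition~\ref{prop:LCD-Gram} the symmetric matrix $G'G'^T$ is invertible, hence of rank $t = k$. By the observation above all its diagonal entries vanish, so Part~(i) of Proposition~\ref{prop:symmetic-matix} applies: $t$ is even---whence $k$ is even---and there is a nonsingular $Q$ with $QG'G'^TQ^T = \mathrm{diag}[J_2, \ldots, J_2]$, having exactly $t/2 = k/2$ blocks and no trailing zero blocks because the rank is full. Setting $G = QG'$ yields another generator matrix of $\C$ with $GG^T = \mathrm{diag}[J_2, \ldots, J_2]$. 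Labeling the $(2i-1)$-st row $\mathbf c_i$ and the $(2i)$-th row $\mathbf c_i'$, the block-diagonal form of $GG^T$ translates entry by entry into conditions~(i)--(iii), completing the proof.

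The argument is essentially routine once the parity observation is in place, and I do not expect a serious obstacle. The only points requiring a little care are to record that full rank forces $t=k$ (so that $k$ is even and no zero blocks appear in the canonical form), and to fix the bookkeeping of the row ordering so that the off-diagonal $1$'s of the $J_2$ blocks land on the pairs $(\mathbf c_i, \mathbf c_i')$, which is exactly what conditions~(ii) and~(iii) encode.
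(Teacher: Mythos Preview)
Your proposal is correct and follows essentially the same approach as the paper's proof: both directions rely on Proposition~\ref{prop:LCD-Gram} together with Part~(i) of Proposition~\ref{prop:symmetic-matix}, using the fact that an even-like code has Gram matrix with zero diagonal. Your write-up is slightly more explicit about why the diagonal vanishes (the identity $\mathbf c\cdot\mathbf c=\sum_i c_i$ over $\F_2$) and why full rank forces $t=k$, but the argument is the same.
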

\begin{proof}
If $k$ is even and there exists a  basis $\mathbf c_1, \mathbf c_1', \mathbf c_2, \mathbf c_2', \ldots, \mathbf c_{\frac{k}{2}}, \mathbf c_{\frac{k}{2}}'$ of  $\C$  such that
for any $i ,j \in \{1,2, \ldots, \frac{k}{2}\}$,  the following conditions hold

(i) $\mathbf c_i \cdot \mathbf c_j= \mathbf c_i' \cdot \mathbf c_j'=0$;

(ii) $\mathbf c_i \cdot \mathbf c_j'=0$ if $i\neq j$;

(iii) $\mathbf c_i \cdot \mathbf c_i'=1$.

 Let $G$ be the generator matrix of $\C$ that corresponds to the base $\mathbf c_1, \mathbf c_1', \mathbf c_2, \mathbf c_2', \ldots, \mathbf c_{\frac{k}{2}}, \mathbf c_{\frac{k}{2}}'$.
 Then, $GG^T=\mathrm{diag}[
\underbrace{J_2, J_2, \ldots, J_2}_{\frac{k}{2}}]$, where $J_2=\left [ \begin{matrix}
0 & 1\\
1 & 0
\end{matrix} \right ]
$. Hence, $GG^T$ is invertible. From Proposition \ref{prop:LCD-Gram}, we have that $\C$ is LCD.

Conversely, assume that $\C$ is LCD. Let $G'$ be a generator matrix of $\C$. Then from Proposition \ref{prop:LCD-Gram}, $G'G'^T$ is an invertible symmetric matrix of size $k \times k$.
Since $\C$ is even-like, all diagonal entries of  $G'G'^T$ equal $0$. Then from Part (i) of Proposition \ref{prop:symmetic-matix}, $k$ is even and there exists nonsingular $k\times k$ matrix
 $Q$ such that $QG'G'^TQ^T=(QG')(QG')^T= \mathrm{diag}[
\underbrace{J_2, J_2, \ldots, J_2}_{\frac{k}{2}}]$, where $J_2=\left [ \begin{matrix}
0 & 1\\
1 & 0
\end{matrix} \right ]
$. Let $G=QG'$. Then $G$ is also a generator matrix of $\C$. Let $\mathbf c_i$
 be the $(2i-1)$-th row of matrix $G$ and $\mathbf c_i'$
 be the $2i$-th row of matrix $G$  for $i\in \{1,2, \ldots, \frac{k}{2}\}$. Hence $\mathbf c_1, \mathbf c_1', \mathbf c_2, \mathbf c_2', \ldots, \mathbf c_{\frac{k}{2}}, \mathbf c_{\frac{k}{2}}'$ is the  desired basis, because $GG^T=\mathrm{diag}[
\underbrace{J_2, J_2, \ldots, J_2}_{\frac{k}{2}}]$. It completes the proof.
\end{proof}

\begin{remark}
A basis of $\C$  satisfying the conditions (i), (ii), and (iii) in Theorem \ref{thm:even-basis} is called a \emph{symplectic basis}. Theorem \ref{thm:even-basis} shows that an even-like code is LCD if and only if
it has a  symplectic basis.
\end{remark}

Let $d_{\mathrm{LCD}}(n,k)$ be the maximum of possible values of $d$ among $[n, k, d]$ binary LCD codes. Dougherty et al. \cite{DKOSS17}
 gave a linear programming bound on the largest size of   $d_{\mathrm{LCD}}(n,k)$. Using the idea of principal submatrices, Galvez et al. \cite{GKLRW17}
 proved that $d_{\mathrm{LCD}}(n,k)\le d_{\mathrm{LCD}}(n,k-1)$ if $k$ is odd and $d_{\mathrm{LCD}}(n,k)\le d_{\mathrm{LCD}}(n,k-2)$ if $k$ is even.
 They also conjectured that $d_{\mathrm{LCD}}(n,k)\le d_{\mathrm{LCD}}(n,k-1)$ for any $k$. We will prove this conjecture using the new characterization of binary LCD codes described above.
 To this end,  the following lemma is needed.
\begin{lemma}\label{lem:1-1}
 Let $\C$ be an even-like binary LCD code with parameters $[n, k]$. Then, there exists a  basis $\mathbf c_1, \mathbf c_1', \mathbf c_2, \mathbf c_2', \ldots, \mathbf c_{\frac{k}{2}}, \mathbf c_{\frac{k}{2}}'$ of  $\C$  such that
for any $i ,j \in \{1,2, \ldots, \frac{k}{2}\}$,  the following conditions hold

(i) $\mathbf c_i \cdot \mathbf c_j= \mathbf c_i' \cdot \mathbf c_j'=0$;

(ii) $\mathbf c_i \cdot \mathbf c_j'=0$ if $i\neq j$;

(iii) $\mathbf c_i \cdot \mathbf c_i'=1$;

(iv) $c_{i,1}=c_{i,1}'$, where $\mathbf c_i=(c_{i,1}, \ldots, c_{i,n})$ and $\mathbf c_i'=(c_{i,1}', \ldots, c_{i,n}')$.

 \end{lemma}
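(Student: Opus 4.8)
The plan is to take the symplectic basis guaranteed by Theorem~\ref{thm:even-basis} and adjust it one pair at a time until condition (iv) holds, while keeping (i)--(iii) intact. So I would start from a basis $\mathbf c_1, \mathbf c_1', \ldots, \mathbf c_{\frac{k}{2}}, \mathbf c_{\frac{k}{2}}'$ satisfying (i), (ii), (iii), which exists precisely because $\C$ is even-like and LCD. The central observation is that any modification confined to a single pair $\{\mathbf c_i, \mathbf c_i'\}$ is harmless to the cross-pair relations: since both $\mathbf c_i$ and $\mathbf c_i'$ are orthogonal to every $\mathbf c_j, \mathbf c_j'$ with $j \neq i$, replacing $\mathbf c_i$ or $\mathbf c_i'$ by a combination of the two cannot disturb conditions (i) and (ii) for indices outside $i$. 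Thus the problem decouples into $\frac{k}{2}$ independent two-dimensional problems, and I only need to fix the first coordinates within each pair.

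For a fixed pair I would look at the first coordinates $(c_{i,1}, c_{i,1}')$. If they are already equal (the cases $(0,0)$ and $(1,1)$), I leave the pair alone. If $(c_{i,1}, c_{i,1}') = (0,1)$, I replace $\mathbf c_i$ by $\mathbf c_i + \mathbf c_i'$, after which the first coordinates become $(1,1)$. If $(c_{i,1}, c_{i,1}') = (1,0)$, I replace $\mathbf c_i'$ by $\mathbf c_i' + \mathbf c_i$, after which they again become $(1,1)$. In every case the pair ends with equal first coordinates, so (iv) is achieved for index $i$, and performing this independently over all $i$ yields the claim.

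It then remains to confirm that each substitution preserves the within-pair symplectic relations and the basis property. Using $\mathbf c_i \cdot \mathbf c_i = \mathbf c_i' \cdot \mathbf c_i' = 0$ and $\mathbf c_i \cdot \mathbf c_i' = 1$ from (i) and (iii), a one-line computation over $\F_2$ gives $(\mathbf c_i + \mathbf c_i')\cdot(\mathbf c_i + \mathbf c_i') = 0$ and $(\mathbf c_i + \mathbf c_i')\cdot \mathbf c_i' = 1$, and symmetrically for the other substitution; moreover each substitution is an invertible change of basis, so the modified vectors still span $\C$. This verification is routine: there is no deep obstacle, and the only point one must genuinely be careful about is the decoupling claim itself, namely that we never combine vectors from different pairs, which is exactly what guarantees that repairing one pair cannot undo the work already done on another.

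Finally, I would note that this pair-by-pair procedure is simply an explicit instance of the transitivity of the symplectic group. Writing $G$ for the generator matrix whose rows are the $\mathbf c_i, \mathbf c_i'$ and $J = \mathrm{diag}[J_2, \ldots, J_2]$ for its Gram matrix, every relation-preserving change of basis has the form $G \mapsto SG$ with $SJS^T = J$, and condition (iv) demands that the first column $S\mathbf v$ of $SG$ land in the subspace $\{(w_1, w_1, \ldots, w_{\frac{k}{2}}, w_{\frac{k}{2}})^T\}$; since the symplectic group acts transitively on nonzero vectors this is always achievable (the case $\mathbf v = \mathbf 0$ being trivial). I would nonetheless keep the elementary version in the write-up, as it is self-contained, constructive, and avoids invoking Witt's extension theorem.
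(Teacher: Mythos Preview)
Your proof is correct and follows essentially the same approach as the paper: start from the symplectic basis of Theorem~\ref{thm:even-basis} and, within each pair where the first coordinates differ, add one vector of the pair to the other. The paper compresses your two mismatched cases $(0,1)$ and $(1,0)$ into one by a preliminary ``without loss of generality'' (swapping $\mathbf c_i$ and $\mathbf c_i'$ and reordering the pairs), but the underlying idea and the verification are identical; your additional remark on symplectic transitivity is a nice conceptual gloss not present in the paper.
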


\begin{proof}
From Theorem \ref{thm:even-basis}, there exists a  basis
 $\mathbf c_1, \mathbf c_1', \mathbf c_2, \mathbf c_2', \ldots, \mathbf c_{\frac{k}{2}}, \mathbf c_{\frac{k}{2}}'$ of  $\C$, which satisfy the conditions (i), (ii) and (iii).

Without loss of generality, assume that $c_{i,1}=c'_{i,1}+1=1$ for $1\le i\le l$ and
$c_{i,1}=c'_{i,1}$ for $l+1\le i\le k$, where $l$ is some positive integer, $\mathbf c_i=(c_{i,1}, \ldots, c_{i,n})$, and $\mathbf c_i'=(c_{i,1}', \ldots, c_{i,n}')$.
Let $\mathbf w_i=\mathbf c_i$ for $i\in \{1, \ldots, \frac{k}{2}\}$, $\mathbf w_i'=\mathbf c_i'+\mathbf{c}_i$  for $i\in \{1, \ldots, l\}$ and
$\mathbf w_i'=\mathbf c_i'$ for $i\in \{l+1, \ldots, \frac{k}{2}\}$.
It can be verified directly that the basis $\mathbf w_1, \mathbf w_1', \mathbf w_2, \mathbf w_2', \ldots, \mathbf w_{\frac{k}{2}}, \mathbf w_{\frac{k}{2}}'$
satisfy the conditions (i), (ii),  (iii) and (iv).
\end{proof}

\begin{theorem}\label{thm:d-d}
If $2 \le k \le  n$, then $d_{\mathrm{LCD}}(n,k)\le d_{\mathrm{LCD}}(n,k-1)$.
\end{theorem}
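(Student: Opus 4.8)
The plan is to prove the only nontrivial inequality by showing that from any \emph{optimal} binary LCD code $\C$ with parameters $[n,k,d]$ (so that $d=d_{\mathrm{LCD}}(n,k)$) one can construct an $[n,k-1]$ LCD code of minimum distance at least $d$; this immediately yields $d_{\mathrm{LCD}}(n,k-1)\ge d=d_{\mathrm{LCD}}(n,k)$. I would first split according to whether $\C$ is odd-like or even-like. If $\C$ is odd-like (which in particular always happens when $k$ is odd, since by Theorem \ref{thm:even-basis} an even-like LCD code has even dimension), then Theorem \ref{thm:odd-basis} provides an orthonormal basis $\mathbf c_1,\dots,\mathbf c_k$, and the subcode generated by $\mathbf c_1,\dots,\mathbf c_{k-1}$ has Gram matrix $I_{k-1}$, hence is LCD by Proposition \ref{prop:LCD-Gram}; it has dimension $k-1$ and, being a subcode of $\C$, has minimum distance $\ge d$. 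So the odd-like case is immediate.

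The real work is the case where $\C$ is even-like, so $k$ is even. Here any $(k-1)$-dimensional subcode of $\C$ is again even-like of odd dimension, hence never LCD (its Gram matrix is alternating of odd size, thus singular); one cannot stay inside $\C$ and must create a genuinely odd-like direction. I would do this by perturbing a symplectic basis vector in a single coordinate: replace some $\mathbf c_{k/2}$ by $\mathbf c_{k/2}+\mathbf e_1$, where $\mathbf e_1$ denotes the weight-one vector with a $1$ in the first coordinate, which makes the self inner product odd. The danger is that adding $\mathbf e_1$ can lower the weight of some minimum-weight codewords by one. The key is to choose the symplectic basis so that the perturbed coordinate is invisible to the retained basis vectors.

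Concretely: since $\C$ is LCD the Euclidean form is nondegenerate on $\C$, so the functional $\mathbf v\mapsto v_1$ is represented by a unique $\mathbf f\in\C$ with $\mathbf f\cdot\mathbf v=v_1$ for all $\mathbf v\in\C$. As $\C$ is even-like, $\mathbf f\cdot\mathbf f=0$, i.e. $\mathbf f$ is isotropic; if $\mathbf f\ne\mathbf 0$ I extend it to a symplectic basis of $\C$ (the usual inductive construction of hyperbolic pairs in a nondegenerate alternating space may start from any nonzero vector) with $\mathbf c_{k/2}=\mathbf f$, and if $\mathbf f=\mathbf 0$ any symplectic basis works and the argument is only easier. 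Then for every basis vector $\mathbf b$ other than $\mathbf c_{k/2}'$ the symplectic relations give $b_1=\mathbf f\cdot\mathbf b=\mathbf c_{k/2}\cdot\mathbf b=0$, while $(\mathbf c_{k/2}')_1=\mathbf c_{k/2}\cdot\mathbf c_{k/2}'=1$. Hence the first coordinate vanishes identically on $D:=\langle\mathbf c_1,\mathbf c_1',\dots,\mathbf c_{k/2-1},\mathbf c_{k/2-1}',\mathbf c_{k/2}\rangle$, the span of all basis vectors except the discarded $\mathbf c_{k/2}'$. I then set $\mathbf z=\mathbf c_{k/2}+\mathbf e_1$ and take $\C'=\langle\mathbf c_1,\mathbf c_1',\dots,\mathbf c_{k/2-1},\mathbf c_{k/2-1}',\mathbf z\rangle$, an $[n,k-1]$ code, and finish with two short verifications. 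For LCD: $\mathbf z\cdot\mathbf c_i=\mathbf z\cdot\mathbf c_i'=(\mathbf c_i)_1=(\mathbf c_i')_1=0$ for $i<k/2$ and $\mathbf z\cdot\mathbf z=\mathbf c_{k/2}\cdot\mathbf c_{k/2}+1=1$, so the Gram matrix of $\C'$ is $\mathrm{diag}[J_2,\dots,J_2,1]$, which is invertible, and $\C'$ is LCD by Proposition \ref{prop:LCD-Gram}. For distance: every codeword of $\C'$ has the form $\mathbf v+\gamma\mathbf e_1$ with $\mathbf v\in D$ and $\gamma$ the coefficient of $\mathbf c_{k/2}$ in $\mathbf v$; since $v_1=0$, adding $\gamma\mathbf e_1$ can only raise the weight, whence $\mathrm{wt}(\mathbf v+\gamma\mathbf e_1)=\mathrm{wt}(\mathbf v)+\gamma\ge\mathrm{wt}(\mathbf v)\ge d$ for $\mathbf v\ne\mathbf 0$ (and $\mathbf v=\mathbf 0$ forces $\gamma=0$). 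Thus $d(\C')\ge d$.

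I expect the main obstacle to be exactly this distance bookkeeping in the even-like case: a careless perturbation drops the minimum distance to $d-1$, and the whole argument hinges on arranging that the single perturbed coordinate is supported, among the chosen symplectic basis, only on the vector that is thrown away. The representative $\mathbf f$ of the coordinate functional is the device that makes this possible, and it is precisely here that the LCD (nondegeneracy) hypothesis together with the symplectic structure of even-like LCD codes (Theorem \ref{thm:even-basis}) are used; Lemma \ref{lem:1-1} gives a related normalization of the basis at the first coordinate, but it is the formulation through $\mathbf f$ that guarantees the weight can never decrease.
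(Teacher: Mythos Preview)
Your proof is correct and, in the odd-like case, identical to the paper's. In the even-like case you take a genuinely different and cleaner route. The paper splits into two subcases according to whether the first coordinate vanishes identically on $\C$; when it does not, the paper invokes Lemma \ref{lem:1-1} to normalize the basis so that $c_{i,1}=c_{i,1}'$, then builds $\C'$ from the $k-1$ vectors $\mathbf c_1'+\mathbf c_1+\mathbf e_1$, $\mathbf c_i+\mathbf c_1$, $\mathbf c_i'+\mathbf c_1$ (for $2\le i\le l$), and $\mathbf c_i,\mathbf c_i'$ (for $i>l$), and checks invertibility of the resulting Gram matrix by an explicit row reduction. Your device of taking $\mathbf f\in\C$ to be the Riesz representative of the coordinate functional $\mathbf v\mapsto v_1$ (which exists precisely because $\C$ is LCD) and starting the symplectic basis from $\mathbf c_{k/2}=\mathbf f$ collapses both subcases into one: the symplectic relations force $b_1=0$ for every retained basis vector, so the $\mathbf e_1$ perturbation can never lower a weight, and the Gram matrix is immediately $\mathrm{diag}[J_2,\dots,J_2,1]$ without any computation. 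What the paper's approach buys is concreteness (everything is written in coordinates and one sees the matrix); what yours buys is brevity, a unified treatment, and a transparent explanation of \emph{why} the construction preserves the minimum distance.
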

\begin{proof}
Let $k$ be odd and $\C$ be an $[n,k]$ LCD code with minimum distance $d=d_{\mathrm{LCD}}(n,k)$. Then, $\C$ is odd-like. From Theorem \ref{thm:odd-basis}, there exists a basis $\mathbf c_1,
\ldots, \mathbf c_{k}$ of $\C$ such that $\mathbf c_i\cdot \mathbf c_j=1$ if $i=j$ and $\mathbf c_i\cdot \mathbf c_j=0$ otherwise. From Theorem \ref{thm:odd-basis}, the code $\C'=\mathrm{Span}
\{\mathbf c_1,
\ldots, \mathbf c_{k-1}\}$ is an $[n, k-1]$ LCD code with minimum distance at least $d_{\mathrm{LCD}}(n,k)$.

Let $k$ be even and $\C$ be an $[n,k]$ LCD code  with minimum distance $d=d_{\mathrm{LCD}}(n,k)$. If $\C$ is odd-like, the results follow from a similar discussion in the case $k$ odd.

In the following, assume that $\C$ is even-like. If for any $\mathbf c=(c_1, \ldots, c_n) \in \C$, $c_1=0$. Let $\mathbf c_1, \mathbf c_1',\ldots, \mathbf c_{\frac{k}{2}}, \mathbf c_{\frac{k}{2}}'$ be a basis of $\C$ satisfying
the conditions in Theorem \ref{thm:even-basis} and
$\C'=\mathrm{Span}\{\mathbf c_1, \mathbf c_1',\ldots, \mathbf c_{\frac{k}{2}-1}, \mathbf c_{\frac{k}{2}-1}', \mathbf w_{\frac{k}{2}}\}$, where $\mathbf w_{\frac{k}{2}}= \mathbf c_{\frac{k}{2}} +\mathbf e_1$. Then, $\C'$ is an $[n,k-1]$ code with minimum distance at least $d_{\mathrm{LCD}}(n,k)$. Let $G'$ be the generator matrix corresponding to the basis $\mathbf c_1, \mathbf c_1',\ldots, \mathbf c_{\frac{k}{2}-1}, \mathbf c_{\frac{k}{2}-1}', \mathbf w_{\frac{k}{2}}$. Then, $G'G'^T=\mathrm{diag}[
\underbrace{J_2, J_2, \ldots, J_2}_{\frac{k}{2}-1}, 1]$, where $J_2=\left [ \begin{matrix}
0 & 1\\
1 & 0
\end{matrix} \right ]
$. It is observed that $G'G'^T$ is nonsingular. By Proposition \ref{prop:LCD-Gram}, $\C'$ is LCD. Thus, the results hold.

If for some $\mathbf c=(c_1, \ldots, c_n) \in \C$, $c_1\neq 0$. Then, there exists a  basis
 $\mathbf c_1, \mathbf c_1', \mathbf c_2, \mathbf c_2', \ldots, \mathbf c_{\frac{k}{2}}, \mathbf c_{\frac{k}{2}}'$ of  $\C$ satisfying the conditions in Lemma \ref{lem:1-1}.
 Without  loss of generality, assume that $c_{i,1}=c_{i,1}'=1$ for $i\in \{1, \ldots, l\}$ and $c_{i,1}=c_{i,1}'=0$ for $i\in \{l+1, \ldots, \frac{k}{2}\}$, where $l$ is some positive integer.
 Let $\C'=
 \mathrm{Span}\{\mathbf c_1' +\mathbf c_1 +\mathbf e_1\} +\mathrm{Span}\{ \mathbf c_i +\mathbf c_1, \mathbf c_i' +\mathbf c_1: 2\le i \le l\}
 + \mathrm{Span}\{ \mathbf c_i, \mathbf c_i' : l+1\le i \le \frac{k}{2}\}$.
 Then, $\C'$ is an $[n,k-1]$ code with minimum distance at least $d_{\mathrm{LCD}}(n,k)$. Let $G'$ be the generator matrix corresponding to the basis$\{\mathbf c_1' +\mathbf c_1 +\mathbf e_1\}
  \cup \{ \mathbf c_i +\mathbf c_1, \mathbf c_i' +\mathbf c_1: 2\le i \le l\}  \cup \{ \mathbf c_i, \mathbf c_i' : l+1\le i \le \frac{k}{2}\}$.
 It is observed  that, for $2\le i \le l$,
 \begin{align*}
 (\mathbf c_1'+ \mathbf c_1+ \mathbf e_1)\cdot (\mathbf c_i+ \mathbf c_1)=(\mathbf c_1'+ \mathbf c_1+ \mathbf e_1)\cdot (\mathbf c_i'+ \mathbf c_1)=1,
 \end{align*}
for $l\le i \le \frac{k}{2}$,
 \begin{align*}
 (\mathbf c_1'+ \mathbf c_1+ \mathbf e_1)\cdot \mathbf c_i=(\mathbf c_1'+ \mathbf c_1+ \mathbf e_1)\cdot \mathbf c_i'=0,
 \end{align*}
 and $(\mathbf c_1'+ \mathbf c_1+ \mathbf e_1)\cdot (\mathbf c_1'+ \mathbf c_1+ \mathbf e_1)=1$. Then,
 \begin{align*}
 G'G'^T=
 \left [ \begin{matrix}
 1 & \mathbf u\\
 \mathbf u^T & \mathrm{diag}[
\underbrace{J_2, J_2, \ldots, J_2}_{\frac{k}{2}-1}]
 \end{matrix} \right ],
 \end{align*}
 where $\mathbf u=(\underbrace{1, \ldots, 1}_{2(l-1)}, 0, \ldots, 0)\in \F_2^{k-2}$.
That is, $G'G'^T$ is a matrix of the following  form
\begin{align}\label{eq:G-GT}
\left [
\begin{array}{ccccccccccc}
1 & 1 & 1 &\ldots & 1& 1 & 0 &0 \ldots 0 & 0\\
1 & 0 & 1 &\ldots & 0& 0 & 0 &0 \ldots 0 & 0\\
1 & 1 & 0 &\ldots & 0& 0 & 0 &0 \ldots 0 & 0\\
\vdots & \vdots & \vdots &\ldots & \vdots & \vdots & \vdots &\vdots \ldots \vdots & \vdots\\
1 & 0 & 0 &\ldots & 0& 1 & 0 &0 \ldots 0 & 0\\
1 & 0 & 0 &\ldots & 1 & 0 & 0 &0 \ldots 0 & 0\\
0 & 0 & 0 &\ldots & 0& 0 & 0 &1 \ldots 0 & 0\\
0 & 0 & 0 &\ldots & 0 & 0 & 1 &0 \ldots 0 & 0\\
\vdots & \vdots & \vdots &\ldots & \vdots & \vdots & \vdots &\vdots \ldots \vdots & \vdots\\
0 & 0 & 0 &\ldots & 0& 0 & 0 &0 \ldots 0 & 1\\
0 & 0 & 0 &\ldots & 0 & 0 & 0 &0 \ldots 1 & 0
\end{array}
\right]
\end{align}

Adding the $2$-th column,  $3$-th column, ...,$(2l-1)$-th column of Matrix (\ref{eq:G-GT}) to  the first column of Matrix (\ref{eq:G-GT}), one has
\begin{align*}
\left [
\begin{array}{ccccccccccc}
1 & 1 & 1 &\ldots & 1& 1 & 0 &0 \ldots 0 & 0\\
0 & 0 & 1 &\ldots & 0& 0 & 0 &0 \ldots 0 & 0\\
0 & 1 & 0 &\ldots & 0& 0 & 0 &0 \ldots 0 & 0\\
\vdots & \vdots & \vdots &\ldots & \vdots & \vdots & \vdots &\vdots \ldots \vdots & \vdots\\
0 & 0 & 0 &\ldots & 0& 1 & 0 &0 \ldots 0 & 0\\
0 & 0 & 0 &\ldots & 1 & 0 & 0 &0 \ldots 0 & 0\\
0 & 0 & 0 &\ldots & 0& 0 & 0 &1 \ldots 0 & 0\\
0 & 0 & 0 &\ldots & 0 & 0 & 1 &0 \ldots 0 & 0\\
\vdots & \vdots & \vdots &\ldots & \vdots & \vdots & \vdots &\vdots \ldots \vdots & \vdots\\
0 & 0 & 0 &\ldots & 0& 0 & 0 &0 \ldots 0 & 1\\
0 & 0 & 0 &\ldots & 0 & 0 & 0 &0 \ldots 1 & 0
\end{array}
\right],
\end{align*}
which is nonsingular. From Proposition \ref{prop:LCD-Gram}, $\C'$ is LCD, which completes the proof.

\end{proof}

\section{The Action of the orthogonal group on the set of binary LCD codes}\label{sec:action-2}
In this section, we will consider the action of the  orthogonal group on the set of binary LCD codes.

The set of all binary LCD codes with parameters $[n,k]$ is denoted by $\mathrm{LCD}[n,k]$.
Let $\mathrm{LCD}_{o,o}[n,k]$ ($\mathrm{LCD}_{o,e}[n,k]$, respectively) be the set of all odd-like binary LCD codes $\C$ of length $n$ and dimension $k$ such that  $\C^\perp$ is odd-like
 (even-like, respectively). We can define $\mathrm{LCD}_{e,o}[n,k]$ and $\mathrm{LCD}_{e,e}[n,k]$  similarly.
Obviously there is no even-like binary LCD code with even-like dual and $\mathrm{LCD}_{e,e}[n,k]$ is an empty set. Thus,
  \begin{align}\label{eq:LCD-o-e}
  \mathrm{LCD}[n,k]=\mathrm{LCD}_{o,o}[n,k] \cup \mathrm{LCD}_{o,e}[n,k] \cup \mathrm{LCD}_{e,o}[n,k].
  \end{align}
  Further, the mapping $\C \rightarrow \C^{\perp}$ gives a one-to-one correspondence between $\mathrm{LCD}_{o,e}[n,k]$ and $\mathrm{LCD}_{e,o}[n,n-k]$.
  We  shall say that two binary LCD codes $\C$ and $\C'$ have the same type if they are both in  $\mathrm{LCD}_{o,o}[n,k]$, or $\mathrm{LCD}_{o,e}[n,k]$, or $\mathrm{LCD}_{e,o}[n,k]$.

For any $\mathbf{v}_1, \ldots, \mathbf{v}_k\in \F_2^n$, $\mathrm{Span}\{\mathbf{v}_1, \ldots, \mathbf{v}_k\}$ denotes the linear subspace of $\F_2^n$ spanned by $\mathbf{v}_1, \ldots, \mathbf{v}_k$. Let $\mathbf e_i=(0, \ldots, 1, \ldots, 0)\in \F_2^n$ be the vector with  $1$ in the $i$-th position. We define the following three matrices:
\begin{align}\label{eq:G-o-o}
G_{o,o}=\left [\begin{matrix}
\mathbf{e}_1\\
\mathbf{e}_2\\
\mathbf{e}_3\\
\mathbf{e}_4\\
\vdots\\
\mathbf{e}_{k-1}\\
\mathbf{e}_{k}
\end{matrix} \right ], G_{o,e}=\left [\begin{matrix}
\sum_{i=1}^{n-k+1} \mathbf{e}_i\\
\mathbf{e}_{n-k+2}\\
\mathbf{e}_{n-k+3}\\
\mathbf{e}_{n-k+4}\\
\vdots\\
\mathbf{e}_{n-1}\\
\mathbf{e}_{n}
\end{matrix} \right ] \text{ and }G_{e,o}=\left [\begin{matrix}
\mathbf{e}_1+\mathbf{e}_{2}\\
\mathbf{e}_{2}+\mathbf{e}_{3}\\
 \mathbf{e}_1+\mathbf{e}_2+\mathbf{e}_3+\mathbf{e}_{4}\\
\mathbf{e}_4+\mathbf{e}_{5}\\
\vdots\\
\sum_{i=1}^{k-1} \mathbf{e}_{i}+\mathbf{e}_{k}\\
\mathbf{e}_{k}+\mathbf{e}_{k+1}
\end{matrix} \right ].
\end{align}
Let $\C_{o,o}, \C_{o,e}$ and $\C_{e,o}$ denote the linear codes generated by the  matrices  $G_{o,o}$, $G_{o,e}$ and $G_{e,o}~(k~\text{even})$, respectively.

Let $0< k <n$. From Theorem \ref{thm:even-basis}, if $k$ ($n-k$, respectively) is odd, $\mathrm{LCD}_{e,o}[n,k]$ ($\mathrm{LCD}_{o,e}[n,k]$, respectively) is empty.
The following proposition shows that if $k$ ($n-k$, respectively) is even, $\mathrm{LCD}_{e,o}[n,k]$ ($\mathrm{LCD}_{o,e}[n,k]$, respectively) is non empty.
\begin{proposition}\label{prop:C-o-e}
Let $0<k<n$. Let $\C_{o,o}, \C_{o,e}$ and $\C_{e,o}$ be defined as above. Then, $\C_{o,o}\in \mathrm{LCD}_{o,o}[n,k]$. Moreover, if $k$ ($n-k$, respectively) is even,
$\C_{e,o}\in \mathrm{LCD}_{e,o}[n,k]$ ($\C_{o,e}\in \mathrm{LCD}_{o,e}[n,k]$, respectively).
\end{proposition}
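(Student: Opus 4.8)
The plan is to verify directly that each of the three explicitly constructed codes has the claimed type by exhibiting the relevant basis properties and invoking the characterizations already established. Since $\C_{o,o}$, $\C_{o,e}$, and $\C_{e,o}$ are given by concrete generator matrices whose rows are built from the standard basis vectors $\mathbf e_i$, the essential computation in each case is to form $GG^T$ and check two things: that it is invertible (so that $\C$ is LCD, by Proposition \ref{prop:LCD-Gram}), and that the parity of the rows matches the claimed odd-like/even-like designation for both $\C$ and $\C^\perp$.

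First I would handle $\C_{o,o}$. Here $G_{o,o}$ consists of the rows $\mathbf e_1, \ldots, \mathbf e_k$, so $G_{o,o}G_{o,o}^T = I_k$, which is invertible; hence $\C_{o,o}$ is LCD by Proposition \ref{prop:LCD-Gram}. Each generator $\mathbf e_i$ has weight $1$, so every generator is odd-like and $\C_{o,o}$ is an odd-like code; it remains to confirm that $\C_{o,o}^\perp$ is odd-like. Since $\C_{o,o}$ is spanned by $\mathbf e_1, \ldots, \mathbf e_k$, its dual is spanned by $\mathbf e_{k+1}, \ldots, \mathbf e_n$, which are likewise weight-one vectors, so $\C_{o,o}^\perp$ is odd-like as well. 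Thus $\C_{o,o}\in \mathrm{LCD}_{o,o}[n,k]$.

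Next I would treat $\C_{o,e}$ (assuming $n-k$ even). The first row is $\mathbf s := \sum_{i=1}^{n-k+1}\mathbf e_i$, which has odd weight $n-k+1$ (odd because $n-k$ is even), and the remaining rows $\mathbf e_{n-k+2}, \ldots, \mathbf e_n$ have weight $1$; so $\C_{o,e}$ is odd-like. Computing $G_{o,e}G_{o,e}^T$: the $(1,1)$ entry is $\mathbf s\cdot\mathbf s = n-k+1 \equiv 1 \pmod 2$, each diagonal entry among the remaining rows is $1$, and every off-diagonal entry vanishes because the support $\{1,\ldots,n-k+1\}$ of $\mathbf s$ is disjoint from the indices $n-k+2,\ldots,n$. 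Hence $G_{o,e}G_{o,e}^T=I_k$ is invertible and $\C_{o,e}$ is LCD. For the dual type I would exhibit an explicit generating set of $\C_{o,e}^\perp$ consisting of even-like vectors; the natural candidates are $\mathbf e_i+\mathbf e_{i+1}$ for $1\le i\le n-k$ together with suitable combinations, all of even weight, which shows $\C_{o,e}^\perp$ is even-like and thus $\C_{o,e}\in\mathrm{LCD}_{o,e}[n,k]$. The correspondence $\C\mapsto\C^\perp$ noted in the text means the $\C_{o,e}$ and $\C_{e,o}$ claims are essentially dual to each other, so I would prove the $\C_{e,o}$ case analogously: verify each generator of $G_{e,o}$ has even weight, compute that $G_{e,o}G_{e,o}^T$ equals a block-diagonal matrix of $J_2$ blocks (matching Theorem \ref{thm:even-basis}) hence is invertible, and confirm oddness of the dual.

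The main obstacle I anticipate is the bookkeeping for $\C_{e,o}$: its generator rows have an irregular, interleaved pattern (alternating between $\mathbf e_i+\mathbf e_{i+1}$ type rows and the cumulative sums $\sum_{i=1}^{j}\mathbf e_i+\mathbf e_{j+1}$), so verifying that $G_{e,o}G_{e,o}^T$ has exactly the symplectic $\mathrm{diag}[J_2,\ldots,J_2]$ form requires carefully tracking the overlaps of supports across pairs of rows rather than a one-line diagonal computation. I would organize this by indexing the rows in the paired order dictated by Theorem \ref{thm:even-basis} and checking conditions (i)--(iii) of that theorem pair by pair, which reduces the apparent complexity to a finite, routine (if tedious) set of inner-product evaluations.
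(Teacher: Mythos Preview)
Your approach is correct and close to the paper's, with one presentational difference: the paper establishes that each code is LCD by explicitly writing down the dual (e.g.\ $\C_{e,o}^\perp=\mathrm{Span}\{\sum_{i=1}^{k+1}\mathbf e_i,\mathbf e_{k+2},\ldots,\mathbf e_n\}$ and $\C_{o,e}^\perp=\mathrm{Span}\{\mathbf e_i+\mathbf e_{n-k+1}:1\le i\le n-k\}$) and observing that the intersection with $\C$ is trivial, whereas you instead compute $GG^T$ and invoke Proposition~\ref{prop:LCD-Gram}. Since both routes require describing the dual anyway (to read off its odd/even type), the two arguments are essentially the same; your version has the mild advantage of connecting directly to the basis characterizations of Theorems~\ref{thm:odd-basis} and~\ref{thm:even-basis}, while the paper's is slightly shorter because it skips the Gram-matrix computation for $G_{e,o}$. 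Two small remarks: the vectors $\mathbf e_i+\mathbf e_{i+1}$ for $1\le i\le n-k$ already form a full basis of $\C_{o,e}^\perp$, so no ``suitable combinations'' beyond these are needed; and the $G_{e,o}G_{e,o}^T$ computation you flag as tedious is in fact quick once you note that the $(2j-1)$-th row has support $\{1,\ldots,2j\}$ and the $(2j)$-th row has support $\{2j,2j+1\}$, from which all the required inner products follow by counting support overlaps modulo~$2$.
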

\begin{proof}
It is easy to verify that $\C_{o,o}^{\perp}= \mathrm{Span}\{\mathbf{e}_i: i\in\{ k+1, \ldots, n \}\}$. Then, $\C_{o,o} \cap \C_{o,o}^{\perp} =\{\mathbf{0}\}$ and $\C_{o,o}\in \mathrm{LCD}_{o,o}[n,k]$.

It is observed that
\begin{align*}
\C_{e,o}^{\perp}= \mathrm{Span}\{\sum_{i=1}^{k+1} \mathbf{e}_{i}, \mathbf{e}_{k+2}, \ldots, \mathbf{e}_{n}\},
\end{align*}
and
\begin{align*}
\C_{o,e}^{\perp}= \mathrm{Span}\{ \mathbf{e}_i +\mathbf{e}_{n-k+1}: i\in \{1,2, \ldots, n-k\}\}.
\end{align*}
Then, if $k$ ($n-k$, respectively) is even, $\C_{e,o} \cap \C_{e,o}^{\perp}=\{\mathbf{0}\}$ ($\C_{e,o} \cap \C_{e,o}^{\perp}=\{\mathbf{0}\}$, respectively), which completes the proof.
\end{proof}

\begin{remark}
Let $H_{o,o}$, $H_{o,e}$ and $H_{e,o}$ be matrices defined by
\begin{align}\label{eq:H-o-o}
H_{o,o}=\left [\begin{matrix}
\mathbf{e}_{k+1}\\
\mathbf{e}_{k+2}\\
\mathbf{e}_{k+3}\\
\mathbf{e}_{k+4}\\
\vdots\\
\mathbf{e}_{n-1}\\
\mathbf{e}_{n}
\end{matrix} \right ], H_{o,e}=\left [\begin{matrix}
\sum_{i=1}^{1}\mathbf{e}_i+\mathbf{e}_{2}\\
\mathbf{e}_2+\mathbf{e}_{3}\\
\sum_{i=1}^{3}\mathbf{e}_i+\mathbf{e}_{4}\\
\mathbf{e}_4+\mathbf{e}_{5}\\
\vdots\\
\sum_{i=1}^{n-k-1} \mathbf{e}_{i}+\mathbf{e}_{n-k}\\
\mathbf{e}_{n-k}+\mathbf{e}_{n-k+1}
\end{matrix} \right ] \text{ and }
H_{e,o}=\left [\begin{matrix}
\sum_{i=1}^{k+1} \mathbf{e}_i\\
\mathbf{e}_{k+2}\\
\mathbf{e}_{k+3}\\
\mathbf{e}_{k+4}\\
\vdots\\
\mathbf{e}_{n-1}\\
\mathbf{e}_{n}
\end{matrix} \right ].
\end{align}
\end{remark}
Then, from the proof of Proposition \ref{prop:C-o-e}, $H_{o,o}$, $H_{o,e}$ and $H_{e,o}$ are parity-check matrices of $\C_{o,o}$, $\C_{o,e}$ and $\C_{e,o}$ respectively.

For any $[n, k]$ linear code $\C$ and  $n\times n$ matrix $Q$, let $\C Q$ be the linear code defined by
\begin{align}\label{eq:C-Q}
\C Q=\{\mathbf c Q: \mathbf c \in \C\}.
\end{align}

In the rest of the paper, $\mathbb{GL}_n$ denotes  the general linear group  of degree $n$ over $\F_2$, which
 is the set of $n\times n$ invertible matrices over $\F_2$, together with the operation of ordinary matrix multiplication.
 A binary orthogonal matrix or orthogonal matrix is a square matrix with binary entries whose columns and rows are orthogonal unit vectors (i.e., orthonormal vectors), i.e.
$Q^{T}Q=QQ^{T }=I$, where $I$ is the identity matrix. The set of $n \times n$ orthogonal matrices forms a group $\mathbb{O}_n$, known as the orthogonal group.
Recall that an $n\times n$ matrix $Q$ is an orthogonal matrix if and only if $(\mathbf u Q)\cdot (\mathbf v Q)= \mathbf u \cdot \mathbf v$ for any $\mathbf u, \mathbf v \in \F_2^n$.
\begin{theorem}\label{thm:C-Q}
Let $\C_1$ and $\C_2$ be two binary $[n,k]$ LCD codes of the same type. Then, there exists an orthogonal matrix $Q\in \mathbb{O}_n$ such that $\C_2=\C_1 Q$.
Conversely, for any binary $[n,k]$ LCD code $\C$ and any orthogonal matrix $Q\in \mathbb{O}_n$, $\C Q$ is also an  LCD code with the same type as $\C$.
\end{theorem}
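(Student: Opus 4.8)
The plan is to prove the two directions separately, treating the ``conversely'' part first since it is the shorter one. Given an LCD code $\C$ with generator matrix $G$ and an orthogonal $Q\in\mathbb{O}_n$, the matrix $GQ$ generates $\C Q$ and $(GQ)(GQ)^T=GQQ^TG^T=GG^T$, so $\C Q$ is LCD by Proposition \ref{prop:LCD-Gram}. To see that the type is preserved, I would record two facts. First, $(\C Q)^\perp=\C^\perp Q$: indeed $\mathbf w\cdot(\mathbf c Q)=(\mathbf w Q^T)\cdot\mathbf c$, and since $Q^T=Q^{-1}$ the condition $\mathbf w\perp\C Q$ is equivalent to $\mathbf w Q^T\in\C^\perp$, i.e. $\mathbf w\in\C^\perp Q$. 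Second, every orthogonal matrix fixes the all-ones vector, $\mathbf 1 Q=\mathbf 1$: each column of $Q$ is a unit vector (from $Q^TQ=I$), hence of odd weight, so every coordinate of $\mathbf 1 Q$ equals $1$. Since a code $D$ is even-like exactly when $\mathbf 1\in D^\perp$, these two facts give immediately that $\C Q$ is even-like iff $\C$ is, and that $(\C Q)^\perp=\C^\perp Q$ is even-like iff $\C^\perp$ is; hence $\C Q$ has the same type as $\C$.

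For the forward direction the key reduction is that it suffices to send every code of a given type to the single canonical representative $\C_{o,o}$, $\C_{o,e}$ or $\C_{e,o}$ of that type: if $\C_1 Q_1$ and $\C_2 Q_2$ both equal the canonical code, then $\C_2=\C_1 Q_1 Q_2^{-1}$ with $Q_1 Q_2^{-1}\in\mathbb{O}_n$, because $\mathbb{O}_n$ is a group. The tool I would use throughout is the following criterion: if $\mathbf b_1,\ldots,\mathbf b_n$ and $\mathbf b_1',\ldots,\mathbf b_n'$ are two bases of $\F_2^n$ with equal Gram matrices, i.e. $\mathbf b_i\cdot\mathbf b_j=\mathbf b_i'\cdot\mathbf b_j'$ for all $i,j$, then the linear map sending $\mathbf b_i$ to $\mathbf b_i'$ is orthogonal; writing $B,B'$ for the matrices whose rows are these vectors, $Q=B^{-1}B'$ satisfies $QQ^T=B^{-1}(B'B'^T)B^{-T}=B^{-1}(BB^T)B^{-T}=I$.

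With this criterion in hand, for a code $\C$ of a fixed type I would build an adapted basis of the whole space as follows. Since $\C$ is LCD, so is $\C^\perp$, and $\F_2^n=\C\oplus\C^\perp$ is an orthogonal direct sum. Applying Theorem \ref{thm:odd-basis} or Theorem \ref{thm:even-basis} to $\C$ gives an orthonormal basis (if $\C$ is odd-like) or a symplectic basis (if $\C$ is even-like); doing the same for $\C^\perp$ gives the corresponding basis of $\C^\perp$. Their union is a basis of $\F_2^n$ whose Gram matrix is block diagonal, the two diagonal blocks being $I$ or $\mathrm{diag}[J_2,\ldots,J_2]$ according to whether $\C$ and $\C^\perp$ are odd-like or even-like, and the off-diagonal blocks vanishing by definition of $\C^\perp$. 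This Gram matrix depends only on the type of $\C$, so the canonical code of the same type, equipped with the analogous adapted basis, produces exactly the same Gram matrix. The criterion then yields an orthogonal $Q$ carrying the adapted basis of $\C$ to that of the canonical code, and since the first $k$ vectors span $\C$ and are sent to the first $k$ vectors spanning the canonical code, $\C Q$ equals the canonical code.

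The main obstacle is conceptual rather than computational: one must be certain that ``same type'' is precisely the invariant forcing the two full Gram matrices to coincide. This is exactly what the classification of symmetric matrices over $\F_2$ in Proposition \ref{prop:symmetic-matix} provides, since a nondegenerate symmetric form is congruent either to $I$ or to a sum of $J_2$ blocks, and the parity (odd-like versus even-like) of $\C$ and of $\C^\perp$ records which of these two shapes occurs in each diagonal block. One must resist the temptation to invoke a Witt-type cancellation theorem, which fails over $\F_2$ for symmetric bilinear forms; for instance $I_1\perp J_2\cong I_3$ while $J_2\not\cong I_2$, so the parity of the dual must be carried along as genuinely separate data rather than being recoverable from the intrinsic form on $\C$ alone. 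This is exactly why three types, and not two, arise, and why keeping the dual's parity explicit in the adapted basis is essential.
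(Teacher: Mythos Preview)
Your proof is correct and follows essentially the same approach as the paper: for the forward direction you build adapted bases of $\C\oplus\C^\perp$ via Theorems~\ref{thm:odd-basis} and~\ref{thm:even-basis}, observe that the resulting Gram matrix depends only on the type, and take the orthogonal transition matrix---exactly as the paper does (the paper goes directly from $\C_1$ to $\C_2$ rather than through a canonical representative, but that is an immaterial difference). For the converse you argue parity preservation via $\mathbf 1 Q=\mathbf 1$ and the explicit identity $(\C Q)^\perp=\C^\perp Q$, whereas the paper uses $\mathbf v\cdot\mathbf v=(\mathbf vQ)\cdot(\mathbf vQ)$; your route is a minor variant and has the small advantage of making the preservation of the dual's parity fully explicit.
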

\begin{proof}
We first consider the case $\C_1, \C_2\in \mathrm{LCD}_{o,o}[n,k]$. Then, $\C_1$ and $\C_1^{\perp}$ are odd-like LCD codes.
From Theorem \ref{thm:odd-basis}, there is an orthonormal basis $\mathbf c_1, \ldots, \mathbf c_k$ of $\C_1$ and  an orthonormal basis $\mathbf c_{k+1}, \ldots, \mathbf c_{n}$ of $\C_1^{\perp}$.
Then, $\mathbf c_1, \ldots, \mathbf c_n$ is an orthonormal basis of $C_1 \oplus C_1^{\perp}=\F_2^n$. Similarly, there is another orthonormal basis $\mathbf w_1, \ldots, \mathbf w_n$ of $\F_2^n$ such that $\C_2=\mathrm{Span}\{\mathbf w_1, \ldots, \mathbf w_k\}$ and $\C_2^\perp=\mathrm{Span}\{\mathbf w_{k+1}, \ldots, \mathbf w_{n}\}$.
Let $Q_1$ and $Q_2$ be orthogonal matrices defined by
\begin{align*}
Q_1=\left [ \begin{matrix}
\mathbf c_1\\
\mathbf c_2\\
\vdots \\
\mathbf c_n
\end{matrix} \right ] \text{ and } Q_2=\left [ \begin{matrix}
\mathbf w_1\\
\mathbf w_2\\
\vdots \\
\mathbf w_n
\end{matrix} \right ].
\end{align*}
Let $Q=Q_1^{-1} Q_2$. Then, $Q$ is an orthogonal matrix and $\mathbf c _i Q=\mathbf w_i$ for $i\in \{1,2, \ldots, k\}$. Thus, $\C_1 Q=\C_2$.

If $\C_1, \C_2\in \mathrm{LCD}_{o,e}[n,k]$, $\C_1$ is an odd-like LCD code and $\C_1^{\perp}$ is even-like LCD code. By Theorem \ref{thm:odd-basis}, there is an orthonormal
basis $c_{1}, \ldots, c_{k}$ of $\C_1$. From Theorem \ref{thm:even-basis}, there is a basis $\mathbf c_{k+1},\mathbf c_{k+1}',  \ldots, \mathbf c_{k+\frac{n-k}{2}}, \mathbf c_{k+\frac{n-k}{2}}'$ of $\C_1^{\perp}$, which satisfies the conditions (i), (ii) and (iii) in Theorem \ref{thm:even-basis}. Similarly, $\C_2$ has an orthonormal
basis $\mathbf{w}_{1}, \ldots, \mathbf{w}_{k}$  and $\C_2^{\perp}$ has a basis $\mathbf w_{k+1},\mathbf w_{k+1}',  \ldots, \mathbf w_{k+\frac{n-k}{2}}, \mathbf w_{k+\frac{n-k}{2}}'$, which satisfies the conditions (i), (ii) and (iii) in Theorem \ref{thm:even-basis}. Let $Q_1$ and $Q_2$ be  matrices defined by
\begin{align*}
Q_1=\left [ \begin{matrix}
\mathbf c_1\\
\vdots \\
\mathbf c_k\\
\mathbf c_{k+1}\\
\mathbf c_{k+1}'\\
\vdots \\
\mathbf c_{k+\frac{n-k}{2}}\\
\mathbf c_{k+\frac{n-k}{2}}'
\end{matrix} \right ] \text{ and } Q_2=\left [ \begin{matrix}
\mathbf w_1\\
\vdots \\
\mathbf w_k\\
\mathbf w_{k+1}\\
\mathbf w_{k+1}'\\
\vdots \\
\mathbf w_{k+\frac{n-k}{2}}\\
\mathbf w_{k+\frac{n-k}{2}}'
\end{matrix} \right ].
\end{align*}
Then,

\begin{align*}
Q_1Q_1^T= Q_2 Q_2^T=
\left [ \begin{array}{cc}
I_k & 0\\
0 & \mathrm{diag}[\underbrace{J_2, \ldots, J_2}_{\frac{n-k}{2}}]
\end{array} \right ].
\end{align*}

Let $Q=Q_1^{-1} Q_2$.  One obtains $QQ^T=Q_1^{-1} (Q_2 Q_2^T) (Q_1^{-1})^T=Q_1^{-1} (Q_1 Q_1^T) (Q_1^{-1})^T=I_n$, that is, $Q$ is an orthogonal matrix.
It is observed that $\mathbf c_i Q=\mathbf w_i$ for every $i\in \{1, \ldots, k\}$. Thus, $\C_1 Q=\C_2$.

If $\C_1, \C_2\in \mathrm{LCD}_{e,o}[n,k]$, by a similar argument as the case $\C_1, \C_2\in \mathrm{LCD}_{o,e}[n,k]$,
one can prove that there is a  $Q\in \mathbb{O}_n$ such that $\C_2=\C_1 Q$. Hence, we prove that if $\C_1, \C_2$ have the same type, there is always an orthogonal matrix $Q$
such that $\C_2=\C_1 Q$.

Conversely, let $\C$ be an $[n,k]$ LCD code and $Q\in \mathbb O_n$. Recall that $\mathrm{wt}(\mathbf v) \pmod{2} \equiv \mathbf v \cdot \mathbf v = (\mathbf v Q) \cdot (\mathbf v Q)$
for any $\mathbf v\in \F_2^n$. Then,
$\C Q$ is odd-like (even-like, respectively) if and only if $\C$ is odd-like (even-like, respectively). Let $G$ be a generator matrix of $\C$. Then, $GQ$ is a generator matrix of $\C  Q$.
Note that $(GQ)(GQ)^T=GG^T$. Thus, $\C Q$ is LCD and its type is as the same as $\C$, which completes the proof.

\end{proof}

Theorem \ref{thm:C-Q} shows that the orthogonal group $\mathbb{O}_n$ acts on $\mathrm{LCD}[n,k]$ by $(\mathcal C,   Q) \longmapsto \C Q$, where $\C\in \mathrm{LCD}[n,k]$ and $Q\in \mathbb{O}_n$. The following theorem presents the decomposition of $\mathrm{LCD}[n,k]$ into $\mathbb{O}_n$-orbits.

\begin{theorem}
Let $k$ and $n$ be two positive integers such that $k< n$.

(i) If $n$ is odd  and $k$ is odd, $\mathrm{LCD}[n,k]$ can be decomposed as  the following  disjoint union of orbits
\begin{align*}
\mathrm{LCD}[n,k]=\C_{o,o} \mathbb O_n \cup \C_{o,e} \mathbb O_n.
\end{align*}

(ii) If $n$ is odd  and $k$ is even, $\mathrm{LCD}[n,k]$ can be decomposed as the following disjoint union of orbits
\begin{align*}
\mathrm{LCD}[n,k]=\C_{o,o} \mathbb O_n \cup \C_{e,o} \mathbb O_n.
\end{align*}

(iii)
  If $n$ is even  and $k$ is odd, $\mathrm{LCD}[n,k]$ can be decomposed as the following disjoint union of orbits
\begin{align*}
\mathrm{LCD}[n,k]=\C_{o,o} \mathbb O_n.
\end{align*}

(iv) If $n$ is even  and $k$ is even, $\mathrm{LCD}[n,k]$ can be decomposed as the following  disjoint union of orbits
\begin{align*}
\mathrm{LCD}[n,k]=\C_{o,o} \mathbb O_n \cup \C_{o,e} \mathbb O_n \cup \C_{e,o} \mathbb O_n.
\end{align*}
\end{theorem}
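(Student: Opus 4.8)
The plan is to reduce the four-case statement to a single mechanism already supplied by Theorem \ref{thm:C-Q}: two binary $[n,k]$ LCD codes lie in the same $\mathbb{O}_n$-orbit if and only if they share the same type. Combined, the two directions of that theorem say that the $\mathbb{O}_n$-orbit of any LCD code equals its entire type class, so that $\C_{o,o}\mathbb{O}_n=\mathrm{LCD}_{o,o}[n,k]$, $\C_{o,e}\mathbb{O}_n=\mathrm{LCD}_{o,e}[n,k]$, and $\C_{e,o}\mathbb{O}_n=\mathrm{LCD}_{e,o}[n,k]$ whenever the respective classes are non-empty. Starting from the partition (\ref{eq:LCD-o-e})---which is a \emph{disjoint} union because a code is either odd-like or even-like and the same dichotomy applies to its dual---I would therefore only need to decide, for each parity of $n$ and $k$, which of the three type classes is non-empty, and to point to a representative of each non-empty class.

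First I would record the non-emptiness criteria. Proposition \ref{prop:C-o-e} gives $\C_{o,o}\in \mathrm{LCD}_{o,o}[n,k]$ unconditionally, so the class $\mathrm{LCD}_{o,o}[n,k]=\C_{o,o}\mathbb{O}_n$ is always present. For the two even-like classes I would invoke Theorem \ref{thm:even-basis}: an even-like binary LCD code must have even dimension. Hence $\mathrm{LCD}_{e,o}[n,k]$ (the code itself even-like) is empty unless $k$ is even, and $\mathrm{LCD}_{o,e}[n,k]$ (the dual, of dimension $n-k$, even-like) is empty unless $n-k$ is even. Conversely, Proposition \ref{prop:C-o-e} furnishes the representatives $\C_{e,o}$ when $k$ is even and $\C_{o,e}$ when $n-k$ is even, so these necessary parity conditions are also sufficient.

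It then remains to run the four parity cases mechanically. If $n$ and $k$ are both odd, then $n-k$ is even and $k$ is odd, so only $\mathrm{LCD}_{o,o}$ and $\mathrm{LCD}_{o,e}$ survive, giving case (i). If $n$ is odd and $k$ is even, then $n-k$ is odd and $k$ is even, so $\mathrm{LCD}_{o,e}$ vanishes while $\mathrm{LCD}_{e,o}$ persists, giving case (ii). If $n$ is even and $k$ is odd, then both $n-k$ and $k$ are odd, both even-like classes vanish, and only $\mathrm{LCD}_{o,o}$ remains, giving case (iii). If $n$ and $k$ are both even, then $n-k$ and $k$ are both even, all three classes are non-empty, and case (iv) follows. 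In every case the displayed decomposition is obtained by replacing each non-empty type class with the orbit of its representative and invoking the disjointness of (\ref{eq:LCD-o-e}).

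I do not anticipate a genuine obstacle: the substantive content is already carried by Theorem \ref{thm:C-Q}, which identifies types with $\mathbb{O}_n$-orbits, and by the explicit existence statements of Proposition \ref{prop:C-o-e}. The only point demanding care is the parity bookkeeping---matching the necessary condition ``even-like forces even dimension'' from Theorem \ref{thm:even-basis} against the sufficiency furnished by the generator matrices in (\ref{eq:G-o-o})---together with the (immediate) observation that the three type classes are pairwise disjoint.
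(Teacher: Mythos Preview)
Your proposal is correct and follows essentially the same approach as the paper's own proof, which simply invokes Equation~(\ref{eq:LCD-o-e}), Proposition~\ref{prop:C-o-e}, and Theorem~\ref{thm:C-Q} after recalling the parity constraints on $\mathrm{LCD}_{o,e}[n,k]$ and $\mathrm{LCD}_{e,o}[n,k]$. You have spelled out the case analysis more explicitly, but the logical structure is identical.
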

\begin{proof}
Firstly, recall that $\mathrm{LCD}_{o,e}[n,k]=\emptyset$ if $n-k$ is odd and $\mathrm{LCD}_{e,o}[n,k]=\emptyset$ if $k$ is odd.
Then, the results follow from Equation (\ref{eq:LCD-o-e}), Proposition \ref{prop:C-o-e} and Theorem \ref{thm:C-Q}.
\end{proof}

To determine the size of the $\mathbb O_n$-orbit of a code $\C$,  we need to  the order of the  stabilizer of $\C$, which is defined by
\begin{align*}
\mathrm{St}(\C)=\{Q\in \mathbb{O}_n: \C Q=\C\}.
\end{align*}
For an even positive integer, $\mathbb{S}\mathrm{p}_k$  denotes the symplectic group of degree $k$, a   $k\times k$  matrix $Q$ is in $\mathbb{S}\mathrm{p}_k$ if and only if
\begin{align*}
Q \mathrm{diag}[\underbrace{J_2, \ldots, J_2}_{\frac{k}{2}}]Q^T=\mathrm{diag}[\underbrace{J_2, \ldots, J_2}_{\frac{k}{2}}].
\end{align*}

\begin{lemma}\label{lem:St-C}
Let $\C$ be a binary $[n,k]$ LCD code, $G$ be a generator matrix of $\C$ and $H$ be a generator matrix of $\C^{\perp}$.
Then, $Q\in \mathrm{St}(\C)$ if and only if $Q= \left [ \begin{array}{c}
  G\\
  H
  \end{array} \right ]^{-1}
\left [ \begin{array}{cc}
 Q_1 & 0\\
 0 & Q_2
 \end{array} \right ] \left [ \begin{array}{c}
  G\\
  H
  \end{array} \right ],
$ where $Q_1\in \mathbb{GL}_k, Q_2\in \mathbb{GL}_{n-k}$ such that $Q_1 (G G^T) Q_1^T= G G^T$ and $Q_2 (H H^T)  Q_2^T= H H^T$.
\end{lemma}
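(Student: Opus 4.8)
The plan is to reduce the statement to a single matrix identity by working with the $n\times n$ matrix
\[
P=\begin{bmatrix} G\\ H\end{bmatrix}
\]
obtained by stacking a generator matrix of $\C$ on top of a generator matrix of $\C^{\perp}$. Since $\C$ is LCD, we have $\C\cap\C^{\perp}=\{\mathbf 0\}$ and $\dim\C+\dim\C^{\perp}=n$, so $\C\oplus\C^{\perp}=\F_2^n$ and the rows of $P$ form a basis of $\F_2^n$; hence $P\in\mathbb{GL}_n$. The first thing I would record is that $PP^T$ is block diagonal: because every row of $G$ lies in $\C$ and every row of $H$ lies in $\C^{\perp}$, one has $GH^T=0=HG^T$, so
\[
PP^T=\begin{bmatrix} GG^T & 0\\ 0 & HH^T\end{bmatrix},
\]
and by Proposition \ref{prop:LCD-Gram} both diagonal blocks $GG^T$ and $HH^T$ are invertible (as $\C$ and $\C^{\perp}$ are both LCD). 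This block splitting is the engine of the whole argument.

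For the forward implication, suppose $Q\in\mathrm{St}(\C)$, that is $Q\in\mathbb O_n$ and $\C Q=\C$. Since $Q$ is orthogonal it preserves the Euclidean inner product, so $\C^{\perp}Q=(\C Q)^{\perp}=\C^{\perp}$; thus $Q$ stabilizes $\C^{\perp}$ as well. Stabilizing $\C$ means each row of $GQ$ lies in $\C$, so $GQ=Q_1G$ for a unique $k\times k$ matrix $Q_1$; comparing ranks (both $G$ and $GQ$ have rank $k$) forces $Q_1\in\mathbb{GL}_k$, and likewise $HQ=Q_2H$ with $Q_2\in\mathbb{GL}_{n-k}$. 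Stacking these two relations gives $PQ=DP$ with $D=\mathrm{diag}[Q_1,Q_2]$, whence $Q=P^{-1}DP$, which is the asserted form.

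It remains to match the orthogonality of $Q$ with the two quadratic conditions on $Q_1,Q_2$. Conjugating the relation $QQ^T=I_n$ by $P$ — multiplying on the left by $P$ and on the right by $P^T$ — turns $Q=P^{-1}DP$ into the equivalent identity $D(PP^T)D^T=PP^T$. By the block-diagonal form of $PP^T$ computed above, this single identity decouples into
\[
Q_1(GG^T)Q_1^T=GG^T\quad\text{and}\quad Q_2(HH^T)Q_2^T=HH^T,
\]
which are exactly the stated constraints.

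Running the same computation in reverse yields the converse: if $Q=P^{-1}DP$ with $Q_1,Q_2$ satisfying the two displayed identities, then $D(PP^T)D^T=PP^T$, hence $QQ^T=I_n$ and $Q\in\mathbb O_n$; moreover $PQ=DP$ gives $GQ=Q_1G$, so $\C Q\subseteq\C$ and therefore $\C Q=\C$ by invertibility of $Q$. I do not anticipate a genuine obstacle here: the only step that truly needs care is the transfer of stabilization from $\C$ to $\C^{\perp}$, which is precisely where orthogonality of $Q$ (rather than mere invertibility) is used; once the block-diagonality of $PP^T$ is in hand, everything else is bookkeeping.
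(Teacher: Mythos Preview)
Your proposal is correct and follows essentially the same route as the paper's proof: both use the stacked matrix $P=\left[\begin{smallmatrix}G\\H\end{smallmatrix}\right]$, observe that $PP^T$ is block diagonal because $GH^T=0$, transfer the stabilization from $\C$ to $\C^{\perp}$ via the orthogonality of $Q$, and then match $QQ^T=I_n$ with the two block conditions $Q_i(\,\cdot\,)Q_i^T=\cdot$ through the identity $D(PP^T)D^T=PP^T$. The only difference is presentational: you package the orthogonality check as a single conjugation identity, whereas the paper writes out the block matrix product line by line.
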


\begin{proof}
Let $Q= \left [ \begin{array}{c}
  G\\
  H
  \end{array} \right ]^{-1}
\left [ \begin{array}{cc}
 Q_1 & 0\\
 0 & Q_2
 \end{array} \right ] \left [ \begin{array}{c}
  G\\
  H
  \end{array} \right ]$ satisfying $Q_1 (G G^T) Q_1^T= G G^T$ and $Q_2 (H H^T)  Q_2^T= H H^T$. Note that $\left [ \begin{array}{c}
  G\\
  H
  \end{array} \right ] Q= \left [ \begin{array}{cc}
 Q_1 & 0\\
 0 & Q_2
 \end{array} \right ] \left [ \begin{array}{c}
  G\\
  H
  \end{array} \right ]= \left [ \begin{array}{c}
   Q_1 G\\
  Q_2 H
  \end{array} \right ].$
  Then, $GQ=Q_1 G$ and $H Q =Q_2 H$. Thus, $\C Q= \C$.
  Note that
  \begin{align*}
   \left [ \begin{array}{c}
  G\\
  H
  \end{array} \right ] Q Q^T  \left [ \begin{array}{c}
  G\\
  H
  \end{array} \right ]^T=& \left ( \left [ \begin{array}{c}
  G\\
  H
  \end{array} \right ] Q\right )  \left (\left [ \begin{array}{c}
  G\\
  H
  \end{array} \right ] Q \right ) ^T\\
  =& \left [ \begin{array}{c}
   Q_1 G\\
  Q_2 H
  \end{array} \right ] \left [ \begin{array}{cc}
   G^TQ_1^T &   H^T Q_2^T
  \end{array} \right ]\\
  =& \left [ \begin{array}{cc}
   Q_1 G G^TQ_1^T &   Q_1G H^T Q_2^T\\
   Q_2 H G^TQ_1^T &   Q_2 H H^T Q_2^T
  \end{array} \right ]\\
  =& \left [ \begin{array}{cc}
   Q_1 G G^TQ_1^T &   0\\
   0 &   Q_2 H H^T Q_2^T
  \end{array} \right ]\\
  =& \left [ \begin{array}{cc}
    G G^T &   0\\
   0 &    H H^T
  \end{array} \right ]\\
  =& \left [ \begin{array}{c}
    G\\
   H
  \end{array} \right ] \left [ \begin{array}{c}
    G\\
   H
  \end{array} \right ]^T.
  \end{align*}
  Then, $QQ^T=I_n$, that is $Q\in \mathbb O_n$.
 Thus, $Q\in \mathrm{St}(\C)$.

  Conversely, let $Q\in \mathrm{St}(\C)$, that is $\C Q=\C$. Then, $GQ=Q_1 G$ with $Q_1\in \mathbb{GL}_k$. For any $\mathbf c\in \C$ and any $\mathbf w \in \C^{\perp}$,
  one has $(\mathbf c Q)\cdot (\mathbf w Q)= \mathbf c \cdot \mathbf w=0$ from $Q\in \mathbb O_n$. Then, $\C^{\perp} Q=\C^{\perp}$. Thus,
  there exists $Q_2 \in \mathbb{GL}_{n-k}$ such that $H Q=Q_2 H$. One gets $\left [ \begin{array}{c}
  G\\
  H
  \end{array} \right ] Q= \left [ \begin{array}{cc}
 Q_1 & 0\\
 0 & Q_2
 \end{array} \right ]  \left [ \begin{array}{c}
  G\\
  H
  \end{array} \right ]$. Then, $Q= \left [ \begin{array}{c}
  G\\
  H
  \end{array} \right ]^{-1}
\left [ \begin{array}{cc}
 Q_1 & 0\\
 0 & Q_2
 \end{array} \right ] \left [ \begin{array}{c}
  G\\
  H
  \end{array} \right ].
$
Since $Q\in \mathbb{O}_n$, one has
\begin{align*}
GG^T=&(GQ)(GQ)^T\\
=&Q_1(G G^T ) Q_1^T,
\end{align*}
and
\begin{align*}
HH^T=&(HQ)(HQ)^T\\
=&Q_2(H H^T ) Q_2^T.
\end{align*}
It completes the proof.
\end{proof}

\begin{corollary}\label{cor:St-C-o-o}
Let $k$ and $n$ be two positive integers with $0<k<n$.

(i) Let $\C_{o,o}$ be the LCD code  with generator matrix $G_{o,o}$ defined by Equation (\ref{eq:G-o-o}). Then,
$$\mathrm{St}(\C_{o,o})=\left \{\left [ \begin{array}{cc}
 Q_1 & 0\\
 0 & Q_2
 \end{array} \right ]: Q_1\in \mathbb O_k, Q_2 \in \mathbb O_{n-k} \right \}.$$

 (ii) Assume $(n-k)$ be even. Let $G_{o,e}$ and $H_{o,e}$  be  matrices defined by Equations (\ref{eq:G-o-o}) and (\ref{eq:H-o-o}) respectively. Then,
 $$\mathrm{St}(\C_{o,e})=\left \{ \left [ \begin{array}{c}
  G_{o,e}\\
  H_{o,e}
  \end{array} \right ]^{-1}  \left [ \begin{array}{cc}
 Q_1 & 0\\
 0 & Q_2
 \end{array} \right ] \left [ \begin{array}{c}
  G_{o,e}\\
  H_{o,e}
  \end{array} \right ]: Q_1\in \mathbb O_k, Q_2 \in \mathbb{S}\mathrm{p}_{n-k} \right \}.$$

   (iii) Assume $k$ be even. Let $G_{e,o}$ and $H_{e,o}$  be the matrices defined by Equations (\ref{eq:G-o-o}) and (\ref{eq:H-o-o}) respectively. Then,
 $$\mathrm{St}(\C_{e,o})=\left \{ \left [ \begin{array}{c}
  G_{e,o}\\
  H_{e,o}
  \end{array} \right ]^{-1}  \left [ \begin{array}{cc}
 Q_1 & 0\\
 0 & Q_2
 \end{array} \right ] \left [ \begin{array}{c}
  G_{e,o}\\
  H_{e,o}
  \end{array} \right ]: Q_1\in \mathbb{S}\mathrm{p}_{k}, Q_2 \in \mathbb O_{n-k}  \right \}.$$
\end{corollary}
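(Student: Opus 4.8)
The plan is to obtain all three statements as direct specializations of Lemma \ref{lem:St-C}. That lemma already reduces the computation of $\mathrm{St}(\C)$ to two independent conditions, namely $Q_1(GG^T)Q_1^T = GG^T$ with $Q_1 \in \mathbb{GL}_k$ and $Q_2(HH^T)Q_2^T = HH^T$ with $Q_2 \in \mathbb{GL}_{n-k}$, where $G$ and $H$ are the fixed generator matrices of $\C$ and $\C^{\perp}$ chosen in (\ref{eq:G-o-o}) and (\ref{eq:H-o-o}). The only remaining work is therefore to evaluate the two Gram matrices $GG^T$ and $HH^T$ in each case and to recognize the resulting constraint groups, using the definitions recalled just before the corollary: the condition $Q M Q^T = M$ cuts out $\mathbb{O}_m$ precisely when $M = I_m$, and cuts out $\mathbb{S}\mathrm{p}_m$ precisely when $M = \mathrm{diag}[J_2,\ldots,J_2]$.

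For part (i), I would first observe that $\left[\begin{smallmatrix}G_{o,o}\\ H_{o,o}\end{smallmatrix}\right]$ is literally $I_n$, since the rows of $G_{o,o}$ are $\mathbf{e}_1,\ldots,\mathbf{e}_k$ and those of $H_{o,o}$ are $\mathbf{e}_{k+1},\ldots,\mathbf{e}_n$. Hence the outer conjugation in Lemma \ref{lem:St-C} is trivial, and because $G_{o,o}G_{o,o}^T = I_k$ and $H_{o,o}H_{o,o}^T = I_{n-k}$, the two conditions reduce to $Q_1 \in \mathbb{O}_k$ and $Q_2 \in \mathbb{O}_{n-k}$, which is exactly the block-diagonal description in (i).

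Parts (ii) and (iii) are symmetric, so I would handle them in parallel, and the genuine computation here is the step I expect to be the main obstacle. Writing the odd- and even-indexed rows of $G_{e,o}$ (and identically those of $H_{o,e}$) as $\mathbf{c}_m = \sum_{i=1}^{2m-1}\mathbf{e}_i + \mathbf{e}_{2m}$ and $\mathbf{c}_m' = \mathbf{e}_{2m}+\mathbf{e}_{2m+1}$, one tracks supports to check that $\mathbf{c}_m\cdot \mathbf{c}_m = \mathbf{c}_m'\cdot \mathbf{c}_m' = 0$ (both weights are even), that $\mathbf{c}_m\cdot \mathbf{c}_m' = 1$ (their supports meet only in position $2m$), and that every cross product between distinct pairs vanishes (distinct supports overlap in an even number of coordinates). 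This gives $G_{e,o}G_{e,o}^T = H_{o,e}H_{o,e}^T = \mathrm{diag}[J_2,\ldots,J_2]$. Dually, the ``identity'' blocks $H_{e,o}$ and $G_{o,e}$ each consist of a single odd-weight row, $\sum_{i=1}^{k+1}\mathbf{e}_i$ respectively $\sum_{i=1}^{n-k+1}\mathbf{e}_i$, followed by standard basis vectors whose supports are disjoint from it and from one another, so that $H_{e,o}H_{e,o}^T = I_{n-k}$ and $G_{o,e}G_{o,e}^T = I_k$; here the hypotheses $k$ even, respectively $n-k$ even, are exactly what makes the leading weight odd and the number of $J_2$-blocks an integer.

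Feeding these Gram matrices back into Lemma \ref{lem:St-C} then yields the two remaining cases: for $\C_{o,e}$ the block $Q_1$ ranges over $\mathbb{O}_k$ and $Q_2$ over $\mathbb{S}\mathrm{p}_{n-k}$, while for $\C_{e,o}$ the roles are interchanged, $Q_1 \in \mathbb{S}\mathrm{p}_{k}$ and $Q_2 \in \mathbb{O}_{n-k}$. Unlike in part (i), the outer conjugation by $\left[\begin{smallmatrix}G\\ H\end{smallmatrix}\right]$ is now nontrivial, and it is carried over verbatim into the stated formulas. Thus the only real content lies in the support bookkeeping of the previous paragraph, everything else being a mechanical substitution into the lemma.
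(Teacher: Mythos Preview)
Your proposal is correct and follows essentially the same approach as the paper: compute the four Gram matrices $G_{o,o}G_{o,o}^T=I_k$, $H_{o,o}H_{o,o}^T=I_{n-k}$, $G_{o,e}G_{o,e}^T=I_k$, $H_{o,e}H_{o,e}^T=\mathrm{diag}[J_2,\ldots,J_2]$, $G_{e,o}G_{e,o}^T=\mathrm{diag}[J_2,\ldots,J_2]$, $H_{e,o}H_{e,o}^T=I_{n-k}$, and then invoke Lemma~\ref{lem:St-C}. The paper simply asserts these Gram identities without the support bookkeeping you spell out, so your version is a more detailed execution of the same argument.
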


\begin{proof}
Note that $G_{o,o}$, $G_{o,e}$ and $G_{e,o}$ are generator matrices of $\C_{o,o}$, $\C_{o,e}$ and $\C_{e,o}$ respectively, and
$H_{o,o}$, $H_{o,e}$ and $H_{e,o}$ are generator matrices of $\C_{o,o}$, $\C_{o,e}$ and $\C_{e,o}$ respectively.
For any $0<k<n$, $G_{o,o}G_{o,o}^T=I_k$ and $H_{o,o}H_{o,o}^T=I_{n-k}$. For even $(n-k)$, $G_{o,e}G_{o,e}^T=I_k$ and
$H_{o,e}H_{o,e}^T=\mathrm{diag}[\underbrace{J_2, \ldots, J_2}_{\frac{n-k}{2}}]$. For even $k$, $G_{e,o}G_{e,o}^T=\mathrm{diag}[\underbrace{J_2, \ldots, J_2}_{\frac{k}{2}}]$
and $H_{e,o}H_{e,o}^T=I_{n-k}$. Then, the results follow from Lemma \ref{lem:St-C}.
\end{proof}

For a finite set $S$, $|S|$ denotes the cardinality of $S$.
To determine the cardinality of the orbit, we need the following formulas, which can be found in \cite{Hum96} and \cite{Mac69}.
\begin{align}\label{eq:num-O}
|\mathbb O_k|= \begin{cases}
 2^{\frac{k^2}{4}}\prod_{i=1}^{\frac{k}{2}-1} (2^{2i}-1), & \text{ if } k \text{ is even,}
 \cr 2^{\frac{(k-1)^2}{4}}\prod_{i=1}^{\frac{k-1}{2}} (2^{2i}-1), & \text{ if } k \text{ is odd,}
 \end{cases}
\end{align}
and, for an even positive integer $k$,
\begin{align}\label{eq:num-Sp}
|\mathbb{S}\mathrm{p}_k|=
 2^{\frac{k^2}{4}}\prod_{i=1}^{\frac{k}{2}} (2^{2i}-1).
 \end{align}
For a number $q$ with $q\neq 1$, the Gaussian or $q$-binomial coefficient ${n \brack k}_q$ is defined to be
\begin{align*}
{n \brack k}_q=\frac{(q^{n}-1)(q^{n-2}-1)\cdots (q^{n-k+1}-1)}{(q-1)(q^2-1)\cdots (q^k-1)}.
\end{align*}
The  Gaussian coefficients has the same symmetry as that of binomial coefficients
$${n \brack k}_q= {n \brack n-k}_q.$$
The number of $k$-dimensional subspaces of an $n$-dimensional vector
space over $\F_q$ is just ${n \brack k}_q$.

\begin{theorem}\label{thm:size-orbits}
Let $k$ and $n$ be positive integers with $k<n$.

(i) Let $\mathrm{LCD}_{o,o}[n,k]$ be the set of odd-like binary $[n,k]$ LCD codes with odd-like duals, then
\begin{align*}
|\mathrm{LCD}_{o,o}[n,k]|= \begin{cases}
 2^{\frac{nk-k^2+n-1}{2}} {\frac{n}{2}-1 \brack \frac{k-1}{2}}_4, & \text{ if }  k \text{ odd}, n \text{ even,}
 \cr  2^{\frac{(n-k)(k-1)}{2}}   (2^{n-k}-1) {\frac{n-1}{2} \brack \frac{k-1}{2}}_4, & \text{ if }  k \text{ odd}, n \text{ odd,}
 \cr  2^{\frac{k(n-k-1)}{2}}   (2^{k}-1) {\frac{n-1}{2} \brack \frac{k}{2}}_4, & \text{ if }  k \text{ even}, n \text{ odd,}
 \cr  2^{\frac{k(n-k)}{2}}   (2^{k}-1) {\frac{n}{2}-1 \brack \frac{k}{2}}_4, & \text{ if }  k \text{ even}, n \text{ even.}
  \end{cases}
\end{align*}

(ii) Let $\mathrm{LCD}_{o,e}[n,k]$ be the set of odd-like binary $[n,k]$ LCD codes with even-like duals, then
\begin{align*}
|\mathrm{LCD}_{o,e}[n,k]|= \begin{cases}
 2^{\frac{(k-1)(n-k)}{2}} {\frac{n-1}{2} \brack \frac{k-1}{2}}_4, & \text{ if }  k \text{ odd}, n \text{ odd,}
 \cr  2^{\frac{k(n-k)}{2}}   {\frac{n}{2}-1 \brack \frac{k}{2}-1}_4, & \text{ if }  k \text{ even}, n \text{ even,}
\cr  0, & \text{ otherwise. }
 \end{cases}
\end{align*}

(iii) Let $\mathrm{LCD}_{e,o}[n,k]$ be the set of even-like binary $[n,k]$ LCD codes with odd-like duals, then
\begin{align*}
|\mathrm{LCD}_{e,o}[n,k]|= \begin{cases}
 2^{\frac{k(n-k-1)}{2}} {\frac{n-1}{2} \brack \frac{k}{2}}_4, & \text{ if }  k \text{ even}, n \text{ odd,}
 \cr  2^{\frac{k(n-k)}{2}}   {\frac{n}{2}-1 \brack \frac{k}{2}}_4, & \text{ if }  k \text{ even}, n \text{ even,}
\cr  0, & \text{ otherwise. }
 \end{cases}
\end{align*}
\end{theorem}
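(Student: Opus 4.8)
The plan is to deduce all three cardinalities from the orbit--stabilizer theorem. By Theorem~\ref{thm:C-Q} together with the theorem immediately preceding this statement (which decomposes $\mathrm{LCD}[n,k]$ into $\mathbb{O}_n$-orbits), each of the sets $\mathrm{LCD}_{o,o}[n,k]$, $\mathrm{LCD}_{o,e}[n,k]$, $\mathrm{LCD}_{e,o}[n,k]$ is, in the parity cases where it is nonempty, a single orbit of the $\mathbb{O}_n$-action, with representatives $\C_{o,o}$, $\C_{o,e}$, $\C_{e,o}$ supplied by Proposition~\ref{prop:C-o-e}. Hence for a nonempty type with representative $\C$ one has $|\text{orbit}| = |\mathbb{O}_n|/|\mathrm{St}(\C)|$, and the entire problem reduces to computing the order of each stabilizer and simplifying the quotient.

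First I would read the stabilizers off Corollary~\ref{cor:St-C-o-o}. Since the conjugation by the fixed invertible matrix appearing in Lemma~\ref{lem:St-C} is a bijection, it preserves cardinality, so $|\mathrm{St}(\C_{o,o})| = |\mathbb{O}_k|\,|\mathbb{O}_{n-k}|$, $|\mathrm{St}(\C_{o,e})| = |\mathbb{O}_k|\,|\mathbb{S}\mathrm{p}_{n-k}|$, and $|\mathrm{St}(\C_{e,o})| = |\mathbb{S}\mathrm{p}_k|\,|\mathbb{O}_{n-k}|$. Substituting the explicit orders from Equations~(\ref{eq:num-O}) and~(\ref{eq:num-Sp}) produces, in each parity case, a ratio of the shape $2^{\alpha}\prod_{i=1}^{N}(2^{2i}-1)\big/\big(2^{\beta}\prod_{i=1}^{r}(2^{2i}-1)\cdot 2^{\gamma}\prod_{i=1}^{s}(2^{2i}-1)\big)$, and the task becomes matching this to the stated closed form.

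I would treat the power of $2$ and the product of factors $(2^{2i}-1)=(4^i-1)$ separately. For the exponent of $2$, a one-line expansion such as $n^2-(k-1)^2-(n-k-1)^2 = 2(nk-k^2+n-1)$, and its analogues in the remaining parity combinations, yields exactly the claimed exponents $\tfrac{nk-k^2+n-1}{2}$, $\tfrac{(n-k)(k-1)}{2}$, $\tfrac{k(n-k)}{2}$, and so on. For the product part, the key observation is that in every case the upper index $N$ equals $r+s$; invoking the identity ${N \brack r}_q = \prod_{i=1}^{N}(q^i-1)\big/\big(\prod_{i=1}^{r}(q^i-1)\prod_{i=1}^{N-r}(q^i-1)\big)$ with $q=4$, the surviving quotient of products is precisely a Gaussian coefficient ${N \brack r}_4$, reproducing the factors ${\frac{n}{2}-1 \brack \frac{k-1}{2}}_4$, ${\frac{n-1}{2} \brack \frac{k}{2}}_4$, and the others in the statement.

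The main obstacle is purely organizational: there are four parity combinations of $(n,k)$ crossed with three code types, and the precise index ranges in (\ref{eq:num-O}) and (\ref{eq:num-Sp}) differ from case to case. In particular $\mathbb{S}\mathrm{p}_m$ carries one extra factor $(4^{m/2}-1)$ relative to an even orthogonal group, which shifts both the power of $2$ and the top parameter of the Gaussian coefficient by one; keeping this bookkeeping straight is where errors would creep in. I would therefore organize the computation as a short table keyed by the parities of $n$, $k$, and $n-k$, verify the relation $r+s=N$ in each row, and confirm that the vanishing cases coincide exactly with the parity obstructions ($n-k$ odd for type $o,e$, and $k$ odd for type $e,o$) already recorded before the statement.
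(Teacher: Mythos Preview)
Your proposal is correct and follows exactly the paper's approach: orbit--stabilizer applied to the single-orbit decomposition from Theorem~\ref{thm:C-Q} and Proposition~\ref{prop:C-o-e}, with stabilizer orders read off Corollary~\ref{cor:St-C-o-o} and then the group-order formulas (\ref{eq:num-O}) and (\ref{eq:num-Sp}) substituted and simplified. One small correction to your bookkeeping sketch: the claim that the upper index $N$ equals $r+s$ ``in every case'' is not quite right for type $(o,o)$ when exactly one of $k$, $n-k$ is even (the even orthogonal product in (\ref{eq:num-O}) runs one step shorter than the odd one), and this off-by-one---not the symplectic shift---is what produces the extra factors $(2^{n-k}-1)$ or $(2^k-1)$ in three of the four lines of part~(i).
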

\begin{proof}
From Proposition \ref{prop:C-o-e} and Theorem \ref{thm:C-Q}, $\mathrm{LCD}_{o,o}= \C_{o,o} \mathbb O_n$. Then,
\begin{align*}
|\mathrm{LCD}_{o,o}|=\frac{| \mathbb O_n|}{|\mathrm{St}(\C_{o,o})|}.
\end{align*}
From Corollary \ref{cor:St-C-o-o}, one obtains
\begin{align*}
|\mathrm{LCD}_{o,o}|=\frac{| \mathbb O_n|}{| \mathbb O_k|\cdot | \mathbb O_{n-k}|}.
\end{align*}
Then, Part (i) follows from Equation (\ref{eq:num-O}).

By a similar discussion, one has
\begin{align*}
|\mathrm{LCD}_{o,e}|=\frac{| \mathbb O_n|}{| \mathbb O_k|\cdot | \mathbb{S}\mathrm{p}_{n-k}|}, \text{ if }  (n-k) \text{ is even, }
\end{align*}
and
\begin{align*}
|\mathrm{LCD}_{e,o}|=\frac{| \mathbb O_n|}{| \mathbb{S}\mathrm{p}_{k}|\cdot | \mathbb O_{n-k}|}, \text{ if }  k \text{ is even}.
\end{align*}
Then, Parts (ii) and (iii) follow from Equations (\ref{eq:num-O}) and (\ref{eq:num-Sp}).
It completes the proof.
\end{proof}

\begin{corollary}\label{cor:size-odd-even}
Let $k$ and $n$ be positive integers with $k<n$.

(i) Let $\mathrm{LCD}_{o}[n,k]$ be the set of odd-like binary $[n,k]$ LCD codes, then
\begin{align*}
|\mathrm{LCD}_{o}[n,k]|= \begin{cases}
 2^{\frac{nk-k^2+n-1}{2}} {\frac{n}{2}-1 \brack \frac{k-1}{2}}_4, & \text{ if }  k \text{ odd}, n \text{ even,}
 \cr  2^{\frac{(n-k)(k+1)}{2}}    {\frac{n-1}{2} \brack \frac{k-1}{2}}_4, & \text{ if }  k \text{ odd}, n \text{ odd,}
 \cr  2^{\frac{k(n-k-1)}{2}}   (2^{k}-1) {\frac{n-1}{2} \brack \frac{k}{2}}_4, & \text{ if }  k \text{ even}, n \text{ odd,}
 \cr  2^{\frac{(k+2)(n-k)}{2}}   {\frac{n}{2}-1 \brack \frac{k}{2}-1}_4, & \text{ if }  k \text{ even}, n \text{ even.}
  \end{cases}
\end{align*}

(ii)  Let $\mathrm{LCD}_{e}[n,k]$ be the set of even-like binary $[n,k]$ LCD codes, then
\begin{align*}
|\mathrm{LCD}_{e}[n,k]|= \begin{cases}
 2^{\frac{k(n-k-1)}{2}} {\frac{n-1}{2} \brack \frac{k}{2}}_4, & \text{ if }  k \text{ even}, n \text{ odd,}
 \cr  2^{\frac{k(n-k)}{2}}   {\frac{n}{2}-1 \brack \frac{k}{2}}_4, & \text{ if }  k \text{ even}, n \text{ even,}
\cr  0, & \text{ otherwise. }
 \end{cases}
\end{align*}
\end{corollary}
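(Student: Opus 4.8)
The plan is to derive both formulas directly from the three orbit cardinalities already computed in Theorem \ref{thm:size-orbits}, together with the disjoint decomposition recorded in Equation (\ref{eq:LCD-o-e}). The key structural observation is that an odd-like LCD code may have a dual that is either odd-like or even-like, so that $\mathrm{LCD}_o[n,k]$ is the disjoint union of $\mathrm{LCD}_{o,o}[n,k]$ and $\mathrm{LCD}_{o,e}[n,k]$, whereas an even-like LCD code can only have an odd-like dual, because $\mathrm{LCD}_{e,e}[n,k]=\emptyset$, so that $\mathrm{LCD}_e[n,k]=\mathrm{LCD}_{e,o}[n,k]$. Consequently part (ii) is immediate: I would simply read off $|\mathrm{LCD}_e[n,k]|=|\mathrm{LCD}_{e,o}[n,k]|$ from Part (iii) of Theorem \ref{thm:size-orbits}, recovering the two nonzero cases and the value $0$ when $k$ is odd.

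For part (i) I would add the cardinalities $|\mathrm{LCD}_{o,o}[n,k]|$ and $|\mathrm{LCD}_{o,e}[n,k]|$ from Parts (i) and (ii) of Theorem \ref{thm:size-orbits}, splitting into the four parity cases. When $k$ is odd and $n$ is even, or $k$ is even and $n$ is odd, the integer $n-k$ is odd, so $\mathrm{LCD}_{o,e}[n,k]$ is empty and the answer is just $|\mathrm{LCD}_{o,o}[n,k]|$, matching the stated formula. When $k$ and $n$ are both odd, the two terms share the Gaussian factor ${\frac{n-1}{2} \brack \frac{k-1}{2}}_4$ and the same power of two up to the scalars $2^{n-k}-1$ and $1$; factoring out and using $(2^{n-k}-1)+1=2^{n-k}$ collapses the exponent to $\frac{(n-k)(k+1)}{2}$, as required.

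The one case that requires a genuine identity is $k$ even and $n$ even, where the two summands carry the \emph{different} Gaussian coefficients ${\frac{n}{2}-1 \brack \frac{k}{2}}_4$ and ${\frac{n}{2}-1 \brack \frac{k}{2}-1}_4$. Writing $N=\frac{n}{2}-1$, $K=\frac{k}{2}$ and $q=4$, and noting $2^k=q^K$ and $2^{n-k}=q^{N+1-K}$, the sum equals $2^{k(n-k)/2}\bigl[(q^K-1){N \brack K}_q+{N \brack K-1}_q\bigr]$. Here I would invoke the basic ratio of consecutive Gaussian coefficients, $(q^K-1){N \brack K}_q=(q^{N-K+1}-1){N \brack K-1}_q$, which follows at once from the defining product formula for ${n \brack k}_q$ stated before the theorem. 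This reduces the bracket to $q^{N-K+1}{N \brack K-1}_q$, so the total becomes $2^{k(n-k)/2}\,2^{n-k}{N \brack K-1}_q=2^{(k+2)(n-k)/2}{\frac{n}{2}-1 \brack \frac{k}{2}-1}_4$, exactly the claimed value.

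The main obstacle is therefore purely this final algebraic simplification: everything else is bookkeeping over parities and a trivial factorization, but the even--even case genuinely needs the Gaussian-coefficient ratio to merge two terms with mismatched bottom indices into one closed form. I expect no difficulty beyond correctly tracking the conversion $2^k=4^{k/2}$, so that the powers of $2$ appearing in the prefactor line up with the powers of $q=4$ occurring inside the $q$-binomial identity.
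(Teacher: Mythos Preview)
Your proposal is correct and follows exactly the same approach as the paper: the paper's proof simply states that Part (i) follows from $|\mathrm{LCD}_{o}[n,k]|=|\mathrm{LCD}_{o,o}[n,k]|+|\mathrm{LCD}_{o,e}[n,k]|$ and Theorem \ref{thm:size-orbits}, and Part (ii) from $|\mathrm{LCD}_{e}[n,k]|=|\mathrm{LCD}_{e,o}[n,k]|$ and Theorem \ref{thm:size-orbits}, leaving all the algebraic simplification (including the even--even case with the Gaussian-coefficient ratio) to the reader. Your write-up supplies precisely those omitted details, and the identity $(q^K-1){N \brack K}_q=(q^{N-K+1}-1){N \brack K-1}_q$ you invoke is indeed immediate from the product formula for $q$-binomials given in the paper.
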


\begin{proof}
 From $|\mathrm{LCD}_{o}[n,k]|=|\mathrm{LCD}_{o,o}[n,k]|+|\mathrm{LCD}_{o,e}[n,k]|$ and Theorem  \ref{thm:size-orbits},
 Part (i) follows. From $|\mathrm{LCD}_{e}[n,k]|=|\mathrm{LCD}_{e,o}[n,k]|$ and Theorem  \ref{thm:size-orbits},
Part (ii) follows.
\end{proof}

\begin{corollary}
Let $k$ and $n$ be two positive integers with $k<n$. Then,
\begin{align*}
|\mathrm{LCD}[n,k]|= \begin{cases}
 2^{\frac{nk-k^2+n-1}{2}} {\frac{n}{2}-1 \brack \frac{k-1}{2}}_4, & \text{ if }  k \text{ odd}, n \text{ even,}
 \cr  2^{\frac{(n-k)(k+1)}{2}}    {\frac{n-1}{2} \brack \frac{k-1}{2}}_4, & \text{ if }  k \text{ odd}, n \text{ odd,}
 \cr  2^{\frac{k(n-k+1)}{2}}   {\frac{n-1}{2} \brack \frac{k}{2}}_4, & \text{ if }  k \text{ even}, n \text{ odd,}
 \cr  2^{\frac{k(n-k)}{2}} \left(2^{n-k} {\frac{n}{2}-1 \brack \frac{k}{2}-1}_4+      {\frac{n}{2}-1 \brack \frac{k}{2}}_4 \right ), & \text{ if }  k \text{ even}, n \text{ even.}
  \end{cases}
\end{align*}
\end{corollary}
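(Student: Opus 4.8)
The plan is to exploit the disjoint decomposition recorded in Equation (\ref{eq:LCD-o-e}), namely
\begin{align*}
|\mathrm{LCD}[n,k]|=|\mathrm{LCD}_{o,o}[n,k]|+|\mathrm{LCD}_{o,e}[n,k]|+|\mathrm{LCD}_{e,o}[n,k]|,
\end{align*}
and then to substitute the three explicit cardinalities furnished by Theorem \ref{thm:size-orbits}, splitting according to the four parities of $(k,n)$ and recalling that $\mathrm{LCD}_{o,e}[n,k]=\emptyset$ when $n-k$ is odd and $\mathrm{LCD}_{e,o}[n,k]=\emptyset$ when $k$ is odd.

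I would first dispatch the three routine cases. When $k$ is odd and $n$ is even, both $\mathrm{LCD}_{o,e}$ and $\mathrm{LCD}_{e,o}$ vanish, so the answer is exactly $|\mathrm{LCD}_{o,o}[n,k]|$, giving the first line verbatim. When $k$ and $n$ are both odd, only $\mathrm{LCD}_{e,o}$ vanishes; factoring $2^{(n-k)(k-1)/2}{\frac{n-1}{2}\brack\frac{k-1}{2}}_4$ out of the two odd-like summands leaves the scalar $(2^{n-k}-1)+1=2^{n-k}$, which shifts the exponent to $(n-k)(k+1)/2$. Symmetrically, when $k$ is even and $n$ is odd, $\mathrm{LCD}_{o,e}$ vanishes and factoring $2^{k(n-k-1)/2}{\frac{n-1}{2}\brack\frac{k}{2}}_4$ out of the two surviving summands produces the scalar $(2^{k}-1)+1=2^{k}$, moving the exponent to $k(n-k+1)/2$. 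Each of these is a one-line factorization.

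The case $k$ even and $n$ even is where all three sets contribute, and it is the step I expect to be the main obstacle. After pulling out the common prefactor $2^{k(n-k)/2}$ one is left with
\begin{align*}
&(2^k-1){\frac{n}{2}-1\brack\frac{k}{2}}_4+{\frac{n}{2}-1\brack\frac{k}{2}-1}_4+{\frac{n}{2}-1\brack\frac{k}{2}}_4\\
&\qquad=2^k{\frac{n}{2}-1\brack\frac{k}{2}}_4+{\frac{n}{2}-1\brack\frac{k}{2}-1}_4,
\end{align*}
whereas the claimed closed form carries the bracket $2^{n-k}{\frac{n}{2}-1\brack\frac{k}{2}-1}_4+{\frac{n}{2}-1\brack\frac{k}{2}}_4$. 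To reconcile the two I would invoke the basic Gaussian recurrence $(q^{j}-1){m\brack j}_q=(q^{m-j+1}-1){m\brack j-1}_q$ with $q=4$, $m=\frac{n}{2}-1$ and $j=\frac{k}{2}$, noting that $2^k=4^{k/2}=q^{j}$ and $2^{n-k}=q^{m-j+1}$. This identity rewrites $(2^k-1){\frac{n}{2}-1\brack\frac{k}{2}}_4$ as $(2^{n-k}-1){\frac{n}{2}-1\brack\frac{k}{2}-1}_4$, after which the two brackets agree term by term. The only care needed is the bookkeeping $n/2=m+1$, so that the exponent $n-k$ is read correctly as $q^{m-j+1}$; the remaining algebra is immediate.
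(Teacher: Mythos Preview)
Your proposal is correct and follows essentially the same approach as the paper: the paper routes the sum through the intermediate Corollary~\ref{cor:size-odd-even} (writing $|\mathrm{LCD}[n,k]|=|\mathrm{LCD}_{o}[n,k]|+|\mathrm{LCD}_{e}[n,k]|$), while you go directly to Theorem~\ref{thm:size-orbits} via the three-term decomposition~(\ref{eq:LCD-o-e}); in the even--even case both routes hinge on the same Gaussian identity $(q^{j}-1){m\brack j}_q=(q^{m-j+1}-1){m\brack j-1}_q$, which you make explicit and the paper leaves implicit.
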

\begin{proof}
From $|\mathrm{LCD}[n,k]|=|\mathrm{LCD}_{o}[n,k]|+|\mathrm{LCD}_{e}[n,k]|$ and Corollary \ref{cor:size-odd-even},
this corollary follows.
\end{proof}
\begin{remark}
In \cite{Sen97}, Sindrier gave a formula of the number of LCD codes, which involves the number of self-orthogonal codes. Since the number of terms in the summation of the formula is
very large, the formula is intractable. The formula we give here is a very simple closed formula.
\end{remark}

In the following, we shall  analyze the asymptotic behavior of the size of the orbits $\C \mathbb O_n$.
For all $q>1$, let $g_{q,n}$ be the number defined by
\begin{align*}
g_{q,n}=\prod_{i=1}^{n} (1-\frac{1}{q^i}).
\end{align*}
Then, we can rewrite ${n \brack k}_q$ as
\begin{align}\label{eq:g-q-n}
{n \brack k}_q=q^{k(n-k)} \frac{g_{q,n}}{g_{q,k}g_{q,n-k}}.
\end{align}

The sequence $g_{q,1}, g_{q,2}, \ldots$ is strictly decreasing and has positive limit, which is denoted by $g_{q, \infty}$ or $\prod_{i=1}^{\infty} (1-\frac{1}{q^i})$.
Then, one has
\begin{align}\label{eq:lim-bino}
\lim_{\substack{k \to \infty  \\ (n-k) \to \infty}} \frac{{n \brack k}_q}{q^{k(n-k)}}= \frac{1}{g_{q,\infty}}.
\end{align}

\begin{theorem}\label{thm:lim-LCD-o-o}
Let $k$ and $n$ be two positive integers with $k<n$.

(i) Let $\mathrm{LCD}_{o,o}[n,k]$ be the set of odd-like binary $[n,k]$ LCD codes with odd-like duals. Then,
\begin{align*}
 \lim_{\substack{k \to \infty  \\ (n-k) \to \infty}} \frac{|\mathrm{LCD}_{o,o}[n,k]|}{2^{k(n-k)}}= \frac{1}{g_{4,\infty}}.
 \end{align*}

(ii) Let $\mathrm{LCD}_{o,e}[n,k]$ be the set of odd-like binary $[n,k]$ LCD codes with even-like duals. If $(n-k)$ is odd, then $|\mathrm{LCD}_{o,e}[n,k]|=0$.
If $(n-k)$ is even, then
\begin{align*}
 \lim_{\substack{k \to \infty  \\ (n-k) \to \infty \\(n-k) \text{ is even }}} \frac{|\mathrm{LCD}_{o,e}[n,k]|}{2^{(k-1)(n-k)}}= \frac{1}{g_{4,\infty}}.
 \end{align*}

 (iii) Let $\mathrm{LCD}_{e,o}[n,k]$ be the set of even-like binary $[n,k]$ LCD codes with odd-like duals.
 If $k$ is odd, then $|\mathrm{LCD}_{e,o}[n,k]|=0$.
If $k$ is even, then
\begin{align*}
 \lim_{\substack{k \to \infty  \\ (n-k) \to \infty \\k \text{ is even }}} \frac{|\mathrm{LCD}_{e,o}[n,k]|}{2^{k(n-k-1)}}= \frac{1}{g_{4,\infty}}.
 \end{align*}
\end{theorem}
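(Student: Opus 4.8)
The plan is to read off all three limits directly from the exact cardinalities established in Theorem~\ref{thm:size-orbits} and feed them into the asymptotic estimate \eqref{eq:lim-bino} for the $4$-ary Gaussian coefficient. In each formula the count is a product of a power of $2$, at most one factor of the form $(2^m-1)$, and a single coefficient ${N \brack K}_4$, so the whole argument reduces to bookkeeping on the exponent of $2$: one shows that the normalizing power of $2$ in the denominator of each limit is chosen precisely so as to cancel the leading exponent, leaving only the constant $1/g_{4,\infty}$.

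The first preparatory step is to rewrite \eqref{eq:lim-bino} at $q=4$ in the multiplicative form
\begin{align*}
{N \brack K}_4 \sim \frac{2^{2K(N-K)}}{g_{4,\infty}}, \qquad K\to\infty,\ N-K\to\infty,
\end{align*}
which is legitimate because in every branch of Theorem~\ref{thm:size-orbits} the upper index $N$ is of order $n/2$ and the lower index $K$ is of order $k/2$, so that the hypotheses $k\to\infty$ and $(n-k)\to\infty$ force both $K\to\infty$ and $N-K\to\infty$. I would also record the trivial estimate $2^m-1\sim 2^m$ as $m\to\infty$, which lets me discard the $-1$ in the $(2^{n-k}-1)$ and $(2^{k}-1)$ prefactors at leading order, since there $m\in\{k,n-k\}\to\infty$.

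Then I would dispatch the parity cases mechanically. For instance, in part~(i) with $k$ and $n$ both even one has $N=\frac n2-1$, $K=\frac k2$, hence $N-K=\frac{n-k-2}2$ and $2K(N-K)=\frac{k(n-k)}2-k$; combining the prefactor $2^{k(n-k)/2}$, the factor $2^{k}-1\sim 2^{k}$, and ${N \brack K}_4\sim 2^{k(n-k)/2-k}/g_{4,\infty}$ yields a total of $2^{k(n-k)}/g_{4,\infty}$, so dividing by $2^{k(n-k)}$ gives $1/g_{4,\infty}$. The remaining three branches of part~(i), together with the two nonzero branches of part~(ii) (there normalized by $2^{(k-1)(n-k)}$) and of part~(iii) (normalized by $2^{k(n-k-1)}$), are treated identically: in each one the prefactor exponent, the contribution $m$ coming from a $(2^m-1)$ factor when present, and $2K(N-K)$ sum to exactly the normalizing exponent. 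The vanishing statements, $|\mathrm{LCD}_{o,e}[n,k]|=0$ for $n-k$ odd and $|\mathrm{LCD}_{e,o}[n,k]|=0$ for $k$ odd, are immediate from Theorem~\ref{thm:size-orbits}.

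The computation is routine; the only point that requires care, and hence the main (mild) obstacle, is that the $(2^m-1)$ prefactors are not negligible but contribute a full $2^m$ to the leading exponent, so they must be retained when one checks that the exponents cancel. Verifying this matching separately in each parity branch is exactly what makes the normalizing powers $2^{k(n-k)}$, $2^{(k-1)(n-k)}$, and $2^{k(n-k-1)}$ come out as stated.
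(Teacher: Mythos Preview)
Your proposal is correct and follows essentially the same approach as the paper: both start from the exact counts in Theorem~\ref{thm:size-orbits} and reduce the limit to the asymptotic behavior of the $4$-ary Gaussian coefficient. The only cosmetic difference is that the paper uses the exact identity~\eqref{eq:g-q-n} to rewrite each count as $2^{k(n-k)}$ times a ratio of finite products $g_{4,m}$ (absorbing the $(2^m-1)$ prefactors into those products) and then lets each $g_{4,m}\to g_{4,\infty}$, whereas you invoke the asymptotic~\eqref{eq:lim-bino} directly and track the exponent of~$2$ separately; the content is the same.
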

\begin{proof}
From Theorem \ref{thm:size-orbits} and Equation (\ref{eq:g-q-n}), one gets
\begin{align*}
|\mathrm{LCD}_{o,o}[n,k]|= \begin{cases}
 2^{k(n-k)}  \frac{g_{4, \frac{n}{2}-1}}{ g_{4, \frac{k-1}{2}}  g_{4, \frac{n-k-1}{2}} }, & \text{ if }  k \text{ odd}, n \text{ even,}
 \cr  2^{k(n-k)}   (1-2^{-(n-k)})  \frac{g_{4, \frac{n-1}{2}}}{ g_{4, \frac{k-1}{2}}  g_{4, \frac{n-k}{2}} }  , & \text{ if }  k \text{ odd}, n \text{ odd,}
 \cr  2^{k(n-k)}   (1-2^{-k})     \frac{g_{4, \frac{n-1}{2}}}{ g_{4, \frac{k}{2}}  g_{4, \frac{n-k-1}{2}} }  , & \text{ if }  k \text{ even}, n \text{ odd,}
 \cr  2^{k(n-k)}   (1-2^{-k})    \frac{g_{4, \frac{n}{2}-1}}{ g_{4, \frac{k}{2}}  g_{4, \frac{n-k}{2}} }  , & \text{ if }  k \text{ even}, n \text{ even.}
  \end{cases}
\end{align*}
From $\lim_{n \to \infty} g_{4,n}=g_{4,\infty}=\prod_{i=1}^{\infty} (1- \frac{1}{2^{2i}})$, Part (i) follows.
By a similar discussion, we can prove that Parts (ii) and (iii) hold. It completes the proof.
\end{proof}

\begin{corollary}\label{cor:LCD-o-o-codes}
 Let $k$ and $n$ be two positive integers with $k<n$.

(i) Let $\mathrm{LCD}_{o,o}[n,k]$ be the set of odd-like binary $[n,k]$ LCD codes with odd-like duals. Then,
\begin{align*}
 \lim_{\substack{k \to \infty  \\ (n-k) \to \infty}} \frac{|\mathrm{LCD}_{o,o}[n,k]|}{{n \brack k}_2}= \frac{1}{\prod_{i=1}^{\infty} (1+\frac{1}{2^i})}.
 \end{align*}

(ii) Let $\mathrm{LCD}_{o,e}[n,k]$ be the set of odd-like binary $[n,k]$ LCD codes with even-like duals. If $(n-k)$ is odd, then $|\mathrm{LCD}_{o,e}[n,k]|=0$.
If $(n-k)$ is even, then
\begin{align*}
 \lim_{\substack{k \to \infty  \\ (n-k) \to \infty \\(n-k) \text{ is even }}} \frac{2^{n-k}|\mathrm{LCD}_{o,e}[n,k]|}{{n \brack k}_2}= \frac{1}{\prod_{i=1}^{\infty} (1+\frac{1}{2^i})}.
 \end{align*}

 (iii) Let $\mathrm{LCD}_{e,o}[n,k]$ be the set of even-like binary $[n,k]$ LCD codes with odd-like duals.
 If $k$ is odd, then  $|\mathrm{LCD}_{e,o}[n,k]|=0$.
If $k$ is even,  then
\begin{align*}
 \lim_{\substack{k \to \infty  \\ (n-k) \to \infty \\k \text{ is even }}} \frac{2^k|\mathrm{LCD}_{e,o}[n,k]|}{{n \brack k}_2}= \frac{1}{\prod_{i=1}^{\infty} (1+\frac{1}{2^i})}.
 \end{align*}
\end{corollary}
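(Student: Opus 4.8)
The plan is to reduce each of the three limits to the asymptotics already established in Theorem \ref{thm:lim-LCD-o-o}, combined with the limiting behavior of the Gaussian binomial coefficient recorded in Equation (\ref{eq:lim-bino}). The common device is to write each target ratio as a product of two factors: one factor whose limit is supplied directly by Theorem \ref{thm:lim-LCD-o-o}, and a second factor of the form $\frac{2^{k(n-k)}}{{n \brack k}_2}$, whose limit is $g_{2,\infty}$ by Equation (\ref{eq:lim-bino}) specialized to $q=2$.

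For Part (i), I would write
\begin{align*}
\frac{|\mathrm{LCD}_{o,o}[n,k]|}{{n \brack k}_2} = \frac{|\mathrm{LCD}_{o,o}[n,k]|}{2^{k(n-k)}} \cdot \frac{2^{k(n-k)}}{{n \brack k}_2},
\end{align*}
so that the limit equals $\frac{1}{g_{4,\infty}} \cdot g_{2,\infty}$. For Parts (ii) and (iii) the only additional ingredient is elementary exponent bookkeeping. In Part (ii), since $2^{n-k} \cdot 2^{(k-1)(n-k)} = 2^{k(n-k)}$, I would write
\begin{align*}
\frac{2^{n-k} |\mathrm{LCD}_{o,e}[n,k]|}{{n \brack k}_2} = \frac{|\mathrm{LCD}_{o,e}[n,k]|}{2^{(k-1)(n-k)}} \cdot \frac{2^{k(n-k)}}{{n \brack k}_2},
\end{align*}
and in Part (iii), since $2^k \cdot 2^{k(n-k-1)} = 2^{k(n-k)}$, I would write
\begin{align*}
\frac{2^k |\mathrm{LCD}_{e,o}[n,k]|}{{n \brack k}_2} = \frac{|\mathrm{LCD}_{e,o}[n,k]|}{2^{k(n-k-1)}} \cdot \frac{2^{k(n-k)}}{{n \brack k}_2}.
\end{align*}
In each case the first factor tends to $\frac{1}{g_{4,\infty}}$ by Theorem \ref{thm:lim-LCD-o-o}, and the second tends to $g_{2,\infty}$, so all three ratios share the common limit $\frac{g_{2,\infty}}{g_{4,\infty}}$.

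It then remains only to recast this common value into the stated form. Here I would invoke the factorization $1 - \frac{1}{2^{2i}} = (1 - \frac{1}{2^i})(1 + \frac{1}{2^i})$, which gives
\begin{align*}
g_{4,\infty} = \prod_{i=1}^{\infty} \left(1 - \frac{1}{2^{2i}}\right) = \prod_{i=1}^{\infty} \left(1 - \frac{1}{2^i}\right) \prod_{i=1}^{\infty} \left(1 + \frac{1}{2^i}\right) = g_{2,\infty} \prod_{i=1}^{\infty} \left(1 + \frac{1}{2^i}\right),
\end{align*}
whence $\frac{g_{2,\infty}}{g_{4,\infty}} = \frac{1}{\prod_{i=1}^{\infty}(1 + \frac{1}{2^i})}$, exactly the claimed limit in all three parts.

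The only point requiring genuine care—and the step I would treat most explicitly—is the splitting of the infinite product for $g_{4,\infty}$: one must verify that both $\prod_{i}(1 - \frac{1}{2^i})$ and $\prod_{i}(1 + \frac{1}{2^i})$ converge to nonzero limits, which holds because $\sum_i \frac{1}{2^i}$ converges, so that the limit of the products factors as the product of the limits. Beyond this convergence check and the routine exponent arithmetic above, the corollary is an immediate consequence of Theorem \ref{thm:lim-LCD-o-o} and Equation (\ref{eq:lim-bino}); I anticipate no substantive obstacle.
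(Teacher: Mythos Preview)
Your proposal is correct and follows essentially the same approach as the paper: combine Theorem \ref{thm:lim-LCD-o-o} with Equation (\ref{eq:lim-bino}) to obtain the common limit $\frac{g_{2,\infty}}{g_{4,\infty}}$, then simplify via the factorization $1-\frac{1}{2^{2i}} = (1-\frac{1}{2^i})(1+\frac{1}{2^i})$. Your exponent bookkeeping for Parts (ii) and (iii) and your convergence remark make explicit what the paper leaves to ``a similar discussion,'' but the argument is the same.
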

\begin{proof}
From Theorem \ref{thm:lim-LCD-o-o} and Equation (\ref{eq:lim-bino}), one obtains
\begin{align*}
 \lim_{\substack{k \to \infty  \\ (n-k) \to \infty}} \frac{|\mathrm{LCD}_{o,o}[n,k]|}{{n \brack k}_2}=\frac{g_{2,\infty}}{g_{4,\infty}}.
 \end{align*}
It is observed that
\begin{align*}
\frac{g_{2,\infty}}{g_{4,\infty}}=&\frac{\lim_{n\to \infty} \prod_{i=1}^n (1-\frac{1}{2^i})}{\lim_{n\to \infty} \prod_{i=1}^n (1-\frac{1}{2^{2i}})}\\
=&\lim_{n\to \infty} \frac{ \prod_{i=1}^n (1-\frac{1}{2^i})}{ \prod_{i=1}^n (1-\frac{1}{2^{2i}})}\\
=&\lim_{n\to \infty} \frac{ 1}{ \prod_{i=1}^n (1+\frac{1}{2^{i}})}\\
=& \frac{ 1}{ \prod_{i=1}^{\infty} (1+\frac{1}{2^{i}})}.
\end{align*}
Then, Part (i) holds. By a similar discussion, we can prove Parts (ii) and (iii).
\end{proof}

\begin{corollary}\label{4.11}
Let $k$ and $n$ be two positive integers with $k<n$.

(i) Let $\mathrm{LCD}[n,k]$ be the set of  binary $[n,k]$ LCD codes. Then,
\begin{align*}
 \lim_{\substack{k \to \infty  \\ (n-k) \to \infty}} \frac{|\mathrm{LCD}[n,k]|}{{n \brack k}_2}= \frac{1}{\prod_{i=1}^{\infty} (1+\frac{1}{2^i})}.
 \end{align*}

(ii) Let $\mathrm{LCD}_{o,o}[n,k]$, $\mathrm{LCD}_{o,e}[n,k]$ and $\mathrm{LCD}_{e,o}[n,k]$ be defined as in Corollary \ref{cor:LCD-o-o-codes}. Then,
\begin{align*}
 \lim_{\substack{k \to \infty  \\ (n-k) \to \infty}} \frac{|\mathrm{LCD}_{o,o}[n,k]|}{|\mathrm{LCD}[n,k]|}= 1,
 \end{align*}
 and
 \begin{align*}
 \lim_{\substack{k \to \infty  \\ (n-k) \to \infty}} \frac{|\mathrm{LCD}_{o,e}[n,k]|}{|\mathrm{LCD}[n,k]|}= \frac{|\mathrm{LCD}_{e,o}[n,k]|}{|\mathrm{LCD}[n,k]|}=0.
 \end{align*}
\end{corollary}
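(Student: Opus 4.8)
The plan is to reduce everything to Corollary \ref{cor:LCD-o-o-codes} by normalizing all three orbit sizes by the Gaussian coefficient ${n \brack k}_2$, which counts the total number of $[n,k]$ linear codes, and then to take limits termwise. First I would record, using the disjoint decomposition in Equation (\ref{eq:LCD-o-e}),
\begin{align*}
\frac{|\mathrm{LCD}[n,k]|}{{n \brack k}_2}=\frac{|\mathrm{LCD}_{o,o}[n,k]|}{{n \brack k}_2}+\frac{|\mathrm{LCD}_{o,e}[n,k]|}{{n \brack k}_2}+\frac{|\mathrm{LCD}_{e,o}[n,k]|}{{n \brack k}_2}.
\end{align*}

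For Part (i), I would take the limit of each summand as $k\to\infty$ and $(n-k)\to\infty$. By Corollary \ref{cor:LCD-o-o-codes}(i) the first term tends to $\frac{1}{\prod_{i=1}^{\infty}(1+\frac{1}{2^i})}$ irrespective of the parities of $k$ and $n$. For the remaining two I would rewrite, whenever they are nonzero,
\begin{align*}
\frac{|\mathrm{LCD}_{o,e}[n,k]|}{{n \brack k}_2}=2^{-(n-k)}\cdot\frac{2^{n-k}|\mathrm{LCD}_{o,e}[n,k]|}{{n \brack k}_2},\qquad \frac{|\mathrm{LCD}_{e,o}[n,k]|}{{n \brack k}_2}=2^{-k}\cdot\frac{2^{k}|\mathrm{LCD}_{e,o}[n,k]|}{{n \brack k}_2},
\end{align*}
and invoke Corollary \ref{cor:LCD-o-o-codes}(ii)--(iii): the second factor on each right-hand side converges to the positive constant $\frac{1}{\prod_{i=1}^{\infty}(1+\frac{1}{2^i})}$, while the prefactors $2^{-(n-k)}$ and $2^{-k}$ tend to $0$. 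Hence both terms vanish in the limit (and when the relevant parity condition fails they are identically $0$, contributing nothing). Summing the three limits yields $\frac{1}{\prod_{i=1}^{\infty}(1+\frac{1}{2^i})}$, which is Part (i).

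For Part (ii), I would divide numerator and denominator by ${n \brack k}_2$ and apply Part (i) together with Corollary \ref{cor:LCD-o-o-codes}. Writing
\begin{align*}
\frac{|\mathrm{LCD}_{o,o}[n,k]|}{|\mathrm{LCD}[n,k]|}=\frac{|\mathrm{LCD}_{o,o}[n,k]|/{n \brack k}_2}{|\mathrm{LCD}[n,k]|/{n \brack k}_2},
\end{align*}
both numerator and denominator converge to the same positive constant $\frac{1}{\prod_{i=1}^{\infty}(1+\frac{1}{2^i})}$, so the quotient tends to $1$. The analogous quotients for $\mathrm{LCD}_{o,e}$ and $\mathrm{LCD}_{e,o}$ have numerators tending to $0$, by the estimate above, divided by a denominator tending to the same positive constant, so they tend to $0$.

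The computation is entirely routine once Corollary \ref{cor:LCD-o-o-codes} is available, so there is no genuine obstacle. The only point requiring care is that the limit is taken jointly in $k$ and $n-k$ while the three subfamilies obey different parity constraints; I must therefore ensure the argument does not secretly fix a parity. This is guaranteed by the fact that Corollary \ref{cor:LCD-o-o-codes}(i) holds for every parity, and that the $o,e$ and $e,o$ contributions are either identically zero (wrong parity) or suppressed by the explicit factors $2^{-(n-k)}$ and $2^{-k}$, so in all cases they are negligible in the limit.
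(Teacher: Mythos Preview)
Your proposal is correct and follows precisely the approach the paper intends: the paper's own proof consists of the single line ``From Corollary \ref{cor:LCD-o-o-codes}, this corollary follows,'' and your argument is exactly the routine unpacking of that line via the decomposition (\ref{eq:LCD-o-e}) and termwise limits. Your care about the parity constraints is appropriate and matches the structure of Corollary \ref{cor:LCD-o-o-codes}.
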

\begin{proof}
From Corollary \ref{cor:LCD-o-o-codes},
this corollary follows.
\end{proof}

\begin{remark}
Part (i) of Corollary \ref{4.11} has been proved in \cite{Sen97}. We  give  a simpler and more direct proof for this result. Part (ii) shows that when $k$ and $(n-k)$  go to
infinity, almost all binary LCD codes $\C$ are odd-like codes with odd-like duals $\C^{\perp}$. Accordingly, for almost all binary LCD  codes,
they  and  their duals both have orthonormal basis.
\end{remark}

In the following,  we  count  the inequivalent binary LCD codes. Let $\mathbb P_n$ be the group generated by  all $n\times n$ permutation matrices, which  are  square  matrices that have  exactly one entry of $1$ in each row and each column and $0$s elsewhere. Two codes $\C_1$, $\C_2$ are equivalent if there exists a permutation $P$ in $\mathbb P_n$ such that
$\C_2=\C_1 P$. For every $\C \in \mathrm{LCD}[n,k]$ and every $P\in \mathbb P_n$, $\C$ and $\C P$ are in the same orbit. Then, we only need to classify LCD codes over every orbit.
We first consider classifying LCD codes in $\mathrm{LCD}_{o,o}[n,k]=\C_{o,o} \mathbb O_n$. For every $\C \in \mathrm{LCD}[n,k]$, $[\C]$ denotes the equivalence class of $\C$, i.e.,
$[\C]=\{\C P: P\in \mathbb P_n\}$. Let $\widetilde{\mathrm{LCD}}_{o,o}[n,k]=\{[\C]: \C \in \mathrm{LCD}_{o,o}[n,k]\}$. From $\mathrm{LCD}_{o,o}[n,k]= \C_{o,o} \mathbb O_n$, there is  a  one-to-one correspondence between
the family $\mathrm{LCD}_{o,o}[n,k]$ of LCD codes and the  the set  $\mathrm{St}(\C_{o,o}) \backslash \mathbb O_n$ of right cosets defined by
\begin{align*}
\pi: \mathrm{LCD}_{o,o}[n,k]   &\longrightarrow    \mathrm{St}(\C_{o,o}) \backslash \mathbb O_n,\\
\C_{o,o} Q      &\longmapsto          \mathrm{St}(\C_{o,o}) Q.
\end{align*}
By the definition of $\widetilde{\mathrm{LCD}}_{o,o}[n,k]$, the map $\pi$ induces a surjection $\tilde{\pi}$ between $\widetilde{\mathrm{LCD}}_{o,o}[n,k]$
and the set $\mathrm{St}(\C_{o,o}) \backslash \mathbb O_n / \mathbb{P}_n$ of double cosets defined as
\begin{align*}
\tilde{\pi}: \widetilde{\mathrm{LCD}}_{o,o}[n,k]   &\longrightarrow    \mathrm{St}(\C_{o,o}) \backslash \mathbb O_n  / \mathbb{P}_n,\\
[\C_{o,o} Q  ]    &\longmapsto          \mathrm{St}(\C_{o,o}) Q  \mathbb{P}_n .
\end{align*}
Then, we obtain a parametrization of the inequivalent linear codes in $\mathrm{LCD}_{o,o}[n,k]$
by the set $\mathrm{St}(\C_{o,o}) \backslash \mathbb O_n  / \mathbb{P}_n$ of double cosets.
Hence, classifying the inequivalent LCD codes in $\mathrm{LCD}_{o,o}[n,k]$ is
equivalent to determining representatives of the set $\mathrm{St}(\C_{o,o}) \backslash \mathbb O_n  / \mathbb{P}_n$ of double cosets.

For any $[n,k]$ linear code $\C$, the automorphism group $\mathrm{Aut}(\C)$ of $\C$ is defined by $\mathrm{Aut}(\C)=\{P\in \mathbb P_n:  \C P=\C\}$.
Then, we have the mass formula for $\mathrm{LCD}_{o,o}[n,k]$.
\begin{proposition}
Let $k$ and $n$ be two positive integers with $k<n$. Then,
\begin{align*}
\sum_{[\C] \in \widetilde{\mathrm{LCD}}_{o,o}[n,k]}  \frac{1}{|\mathrm{Aut}(\C)|}  = \frac{|\mathbb O_n|}{|\mathrm{St}(\C_{o,o})|\cdot  |\mathbb{P}_n|}.
\end{align*}
\end{proposition}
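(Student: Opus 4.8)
The plan is to recognize the left-hand side as a standard mass (class-number) formula, obtained by letting the permutation group $\mathbb P_n$ act on the single orbit $\mathrm{LCD}_{o,o}[n,k]=\C_{o,o}\mathbb O_n$ and then applying the orbit–stabilizer theorem twice. First I would observe that every permutation matrix is orthogonal, so $\mathbb P_n\subseteq \mathbb O_n$. Consequently, for $\C\in \mathrm{LCD}_{o,o}[n,k]$ and $P\in \mathbb P_n$, Theorem \ref{thm:C-Q} guarantees that $\C P$ is again an LCD code of the same type, i.e.\ $\C P\in \mathrm{LCD}_{o,o}[n,k]$. Thus $(\C, P)\mapsto \C P$ defines a right action of $\mathbb P_n$ on the finite set $\mathrm{LCD}_{o,o}[n,k]$, whose orbits are exactly the equivalence classes $[\C]$ and whose point-stabilizers are exactly the automorphism groups $\mathrm{Aut}(\C)=\{P\in \mathbb P_n: \C P=\C\}$.

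Next, for each class I would apply the orbit–stabilizer theorem to this $\mathbb P_n$-action, giving $|[\C]|=|\mathbb P_n|/|\mathrm{Aut}(\C)|$, hence $1/|\mathrm{Aut}(\C)|=|[\C]|/|\mathbb P_n|$. Since the distinct equivalence classes partition $\mathrm{LCD}_{o,o}[n,k]$, summing over a set of representatives yields
\begin{align*}
\sum_{[\C]\in \widetilde{\mathrm{LCD}}_{o,o}[n,k]} \frac{1}{|\mathrm{Aut}(\C)|}
= \frac{1}{|\mathbb P_n|}\sum_{[\C]}|[\C]|
= \frac{|\mathrm{LCD}_{o,o}[n,k]|}{|\mathbb P_n|}.
\end{align*}

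Finally, I would evaluate $|\mathrm{LCD}_{o,o}[n,k]|$ by means of the $\mathbb O_n$-action: since $\mathrm{LCD}_{o,o}[n,k]=\C_{o,o}\mathbb O_n$ is a single $\mathbb O_n$-orbit with stabilizer $\mathrm{St}(\C_{o,o})$ — equivalently, via the bijection $\pi$ already exhibited between $\mathrm{LCD}_{o,o}[n,k]$ and the coset space $\mathrm{St}(\C_{o,o})\backslash\mathbb O_n$ — the orbit–stabilizer theorem gives $|\mathrm{LCD}_{o,o}[n,k]|=|\mathbb O_n|/|\mathrm{St}(\C_{o,o})|$. Substituting this into the previous display produces the asserted identity. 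The computation is routine once both group actions are correctly set up; the main point requiring care, and the only place where one could genuinely slip, is verifying that $\mathbb P_n$ preserves $\mathrm{LCD}_{o,o}[n,k]$ itself (and not merely the larger set $\mathrm{LCD}[n,k]$), which is precisely where I invoke the inclusion $\mathbb P_n\subseteq\mathbb O_n$ together with the type-preservation part of Theorem \ref{thm:C-Q}.
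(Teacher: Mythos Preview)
Your proof is correct and follows essentially the same route as the paper: both apply the orbit--stabilizer theorem to the $\mathbb O_n$-action to get $|\mathrm{LCD}_{o,o}[n,k]|=|\mathbb O_n|/|\mathrm{St}(\C_{o,o})|$, and to the $\mathbb P_n$-action to get $|[\C]|=|\mathbb P_n|/|\mathrm{Aut}(\C)|$, then combine via the partition into equivalence classes. Your explicit justification that $\mathbb P_n\subseteq\mathbb O_n$ preserves the type (hence the set $\mathrm{LCD}_{o,o}[n,k]$) is a detail the paper leaves implicit.
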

\begin{proof}
From $\mathrm{LCD}_{o,o}[n,k]=\C_{o,o} \mathbb O_n$, one has
\begin{align*}
 \frac{|\mathbb O_n|}{|\mathrm{St}(\C_{o,o})|}=&\sum_{\C \in \mathrm{LCD}_{o,o}[n,k]}  1 \\
 =& \sum_{[\C] \in \widetilde{\mathrm{LCD}}_{o,o}[n,k]}  |[\C]| \\
 =& \sum_{[\C] \in \widetilde{\mathrm{LCD}}_{o,o}[n,k]}  |\C \mathbb{P}_n| \\
 =& \sum_{[\C] \in \widetilde{\mathrm{LCD}}_{o,o}[n,k]}  \frac{| \mathbb{P}_n|}{ |\mathrm{Aut}(\C)|},
\end{align*}
which completes the proof.

\end{proof}

We can consider the problem of classifying the inequivalent LCD codes in $\mathrm{LCD}_{o,e}[n,k]$ and $\mathrm{LCD}_{e,o}[n,k]$
exactly like the case $\mathrm{LCD}_{o,o}[n,k]$. In short,  classifying the inequivalent binary LCD codes is
equivalent to determining representatives of the following three sets of double cosets:
\begin{align*}
\mathrm{St}(\C_{o,o}) \backslash \mathbb O_n  / \mathbb{P}_n,  \mathrm{St}(\C_{o,e}) \backslash \mathbb O_n  / \mathbb{P}_n
\end{align*}
and
\begin{align*}
\mathrm{St}(\C_{e,o}) \backslash \mathbb O_n  / \mathbb{P}_n.
\end{align*}

\section{The Characterization of LCD codes in odd characteristic}\label{sec:action-odd}
In this section, we will consider LCD codes over $\F_q$, where $q$ is a power of an odd prime.

The following proposition \cite{Ser12} is very important for the characterization of LCD codes over finite fields of odd characteristic.
\begin{proposition}\label{prop:diagonal-odd}
If $M$ is a $k\times k$ nonsingular symmetric matrix over $\F_q$ with $k\ge 2$,
then there is a $k\times k$ nonsingular matrix $Q$ such that $Q M Q^T=\mathrm{diag}[\underbrace{1, 1,  \ldots, 1}_{k-1}, \delta]$
, where $\delta=1$ if $\mathrm{det}(M)$ is a square  in $\F_q$, and  $\delta$ is any nonsquare in $\F_q$ if $\mathrm{det}(M)$ is a nonsquare in $\F_q$.
\end{proposition}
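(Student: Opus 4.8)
The plan is to prove the result in three stages: first diagonalize $M$ by congruence, then normalize the diagonal entries using the finiteness of $\F_q$, and finally identify the last entry $\delta$ through the determinant.

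\textbf{Stage 1 (orthogonal diagonalization).} Since the characteristic of $\F_q$ is odd, I would first reduce to the diagonal case. Regard $M$ as the Gram matrix of the symmetric bilinear form $B(\mathbf x, \mathbf y) = \mathbf x M \mathbf y^T$. If $M \neq 0$, there must be a vector $\mathbf v$ with $B(\mathbf v, \mathbf v) \neq 0$: otherwise the polarization identity $2 B(\mathbf x, \mathbf y) = B(\mathbf x + \mathbf y, \mathbf x + \mathbf y) - B(\mathbf x, \mathbf x) - B(\mathbf y, \mathbf y)$, together with $2 \neq 0$ in $\F_q$, would force $B \equiv 0$. Completing $\mathbf v$ to a basis and passing to the orthogonal complement of $\mathbf v$, an induction on $k$ produces a nonsingular $Q_0$ with $Q_0 M Q_0^T = \mathrm{diag}[a_1, \ldots, a_k]$, where every $a_i \neq 0$ because $M$ is nonsingular.

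\textbf{Stage 2 (the key lemma).} The crucial point is that over a finite field any nonsingular binary diagonal form represents $1$: for $a, b \in \F_q$ nonzero there exist $x, y \in \F_q$ with $a x^2 + b y^2 = 1$. I would prove this by a counting argument. As $x$ ranges over $\F_q$ the quantity $a x^2$ takes $\tfrac{q+1}{2}$ distinct values, and likewise $1 - b y^2$ takes $\tfrac{q+1}{2}$ distinct values; since both sets lie in $\F_q$ and $\tfrac{q+1}{2} + \tfrac{q+1}{2} = q + 1 > q$, they must intersect. Given such a solution, the vector $\mathbf w = (x, y)$ satisfies $B(\mathbf w, \mathbf w) = 1$ for the form $\mathrm{diag}[a,b]$; splitting off $\mathbf w$ and taking its orthogonal complement shows $\mathrm{diag}[a, b] \sim \mathrm{diag}[1, c]$ for some nonzero $c$. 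Because congruence multiplies the determinant by a square, $c$ lies in the same square class as $ab$, so $\mathrm{diag}[a,b]$ is congruent to $\mathrm{diag}[1, ab]$ up to replacing $ab$ by a square multiple.

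\textbf{Stage 3 (normalization and identification of $\delta$).} Applying Stage 2 repeatedly to the trailing pair of diagonal entries, I can transform $\mathrm{diag}[a_1, \ldots, a_k]$ into $\mathrm{diag}[1, 1, \ldots, 1, \delta_0]$, pushing one more $1$ to the front at each step; here $\delta_0$ is a square multiple of $a_1 \cdots a_k = \mathrm{det}(M)$. Conjugating by $\mathrm{diag}[1, \ldots, 1, t]$ multiplies the last entry by $t^2$, so within its square class $\delta_0$ can be normalized to exactly $1$ or to any prescribed nonsquare. Finally, the identity $\mathrm{det}(Q M Q^T) = (\mathrm{det}\, Q)^2\, \mathrm{det}(M)$ shows the square class of the determinant is a congruence invariant, and the normal form has determinant $\delta$; hence $\delta$ is a square precisely when $\mathrm{det}(M)$ is, giving the stated dichotomy. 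I expect Stage 2 — the representability of $1$ by a nonsingular binary form, resting on the pigeonhole count over $\F_q$ — to be the main obstacle, since it is the only place where finiteness of the field is genuinely used; the remaining steps are formal linear algebra valid over any field of odd characteristic.
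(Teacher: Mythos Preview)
Your proof is correct and follows the classical route to the classification of nondegenerate quadratic forms over a finite field of odd characteristic: diagonalize, then use the pigeonhole argument to show every nonsingular binary diagonal form represents $1$, and iterate to push $1$'s to the front until a single entry remains whose square class is fixed by the determinant invariant.

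Note, however, that the paper does not supply its own proof of this proposition: it is quoted as a known result from \cite{Ser12} (Serre, \emph{A Course in Arithmetic}), and the closely related fact that every element of $\F_q$ is a sum of two squares is likewise quoted as Lemma~\ref{eq:x^2+y^2}. So there is nothing to compare against in the paper itself. Your argument is essentially the standard textbook proof one finds in Serre or in any treatment of quadratic forms over finite fields; the counting step in Stage~2 is indeed the heart of the matter, and you have it right.
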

The following theorem presents a characterization of LCD codes in terms of their basis.
\begin{theorem}
Let $q$ be a power of an odd prime and $\C$ be an $[n,k]$ LCD code over $\F_q$. Then, $\C$ is LCD if and only if there is a generator matrix $G$ of $\C$ such that
$GG^T= \mathrm{diag}[\underbrace{1, 1,  \ldots, 1}_{k-1}, \delta]$, where $\delta\in \F_q^{*}=\F_q\setminus \{0\}$, that is, there is a basis $\mathbf e_1, \ldots,
 \mathbf e_k$ of $\C$ such that for any $i, j\in \{1,2, \ldots, k\}$,

 (i) $\mathbf e_i \cdot \mathbf e_j=0$ if $i\neq j$;

 (ii) $\mathbf e_i \cdot \mathbf e_i=1$ if $i\neq k$;

 (ii) $\mathbf e_k \cdot \mathbf e_k=\delta$.
\end{theorem}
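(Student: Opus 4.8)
The plan is to follow exactly the template set by the proofs of Theorems \ref{thm:odd-basis} and \ref{thm:even-basis}, replacing the binary normal forms by the odd-characteristic diagonalization of Proposition \ref{prop:diagonal-odd}. Since the statement is an equivalence, I would argue the two directions separately. I expect the sufficiency direction to be immediate, and the converse to be where Proposition \ref{prop:diagonal-odd} does essentially all the work; the only subtlety is the degenerate case $k=1$, which falls outside the hypotheses of that proposition.

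First I would dispose of sufficiency. Suppose $\C$ admits a generator matrix $G$ with $GG^T=\mathrm{diag}[1,\ldots,1,\delta]$ for some $\delta\in\F_q^{*}$. Then $\det(GG^T)=\delta\neq 0$, so $GG^T$ is invertible, and $\C$ is LCD by Proposition \ref{prop:LCD-Gram}. No computation beyond evaluating the determinant of a diagonal matrix is required.

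For the converse, assume $\C$ is LCD and fix an arbitrary generator matrix $G'$. By Proposition \ref{prop:LCD-Gram} the $k\times k$ matrix $M:=G'G'^{T}$ is nonsingular, and it is visibly symmetric since $(G'G'^{T})^{T}=G'G'^{T}$. Assuming $k\ge 2$, Proposition \ref{prop:diagonal-odd} produces a nonsingular $Q$ with $QMQ^{T}=\mathrm{diag}[1,\ldots,1,\delta]$, where $\delta=1$ when $\det(M)$ is a square in $\F_q$ and $\delta$ is a fixed nonsquare otherwise, so in either case $\delta\in\F_q^{*}$. Setting $G:=QG'$, the matrix $G$ is again a generator matrix of $\C$ because $Q$ is invertible and hence preserves the row space, and
\[
GG^{T}=(QG')(QG')^{T}=Q(G'G'^{T})Q^{T}=\mathrm{diag}[1,\ldots,1,\delta].
\]
Taking $\mathbf e_1,\ldots,\mathbf e_k$ to be the rows of $G$ then translates this matrix identity directly into the three orthogonality conditions (i)--(iii).

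The hard part, such as it is, will be the boundary case $k=1$, which is excluded by the hypothesis $k\ge 2$ of Proposition \ref{prop:diagonal-odd} and so must be checked by hand. In that case $GG^{T}$ is the $1\times 1$ matrix $[\mathbf e_1\cdot\mathbf e_1]$, and LCD-ness forces $\mathbf e_1\cdot\mathbf e_1=\delta\neq 0$; since $k-1=0$ the prescribed form $\mathrm{diag}[1,\ldots,1,\delta]$ collapses to $[\delta]$, so the claim holds trivially. I therefore expect the full argument to be short, with the diagonalization result carrying the entire substantive content; the same normal form will later explain why $\delta$ is a square (equivalently, why $\C$ has an orthonormal basis) in about half of the cases.
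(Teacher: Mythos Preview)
Your proposal is correct and follows essentially the same approach as the paper, which simply says to mimic the proofs of Theorems \ref{thm:odd-basis} and \ref{thm:even-basis} using Proposition \ref{prop:diagonal-odd} in place of Proposition \ref{prop:symmetic-matix}. Your explicit treatment of the boundary case $k=1$ is a small completeness bonus that the paper's one-line proof does not mention.
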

\begin{proof}
Using Proposition \ref{prop:diagonal-odd}, we can prove this theorem by a similar discussion as in the proof of  Theorem \ref{thm:odd-basis} and Theorem \ref{thm:even-basis}.
\end{proof}

Let $G_1$ and $G_2$ be any two generator matrices of an $[n,k]$ code $\C$. Then, there is a  $k\times k$
nonsingular matrix $Q$ such that $G_2=Q G_1$. Thus, $\mathrm{det}(G_2G_2^T)=\mathrm{det}(Q)^2
\mathrm{det}(G_1G_1^T)$.  Then, $\mathrm{det}(G_2G_2^T) \mathrm{det}(G_1G_1^T)^{-1}\in (\F_q^{*})^2$.
Hence, we can define $\mathrm{LCD}_{+}[n,k]_q$ ($\mathrm{LCD}_{-}[n,k]_q$, respectively) be  the set of all $[n,k]$ LCD codes over $\F_q$ with $\mathrm{det}(GG^T)\in (\F_q^{*})^2$
($\mathrm{det}(GG^T)\not \in (\F_q^{*})^2$, respectively).
Let $\mathrm{LCD}[n,k]_q$ be the set of all $[n,k]$ LCD codes over $\F_q$. Then, $\mathrm{LCD}[n,k]_q=\mathrm{LCD}_{+}[n,k]_q \cup \mathrm{LCD}_{-}[n,k]_q $.
To construct linear codes in $\mathrm{LCD}_{-}[n,k]_q$, one need the following lemma \cite{Ser12}.
\begin{lemma}\label{eq:x^2+y^2}
For any $z\in \F_q$, there exist $x$ and $y$ in $\F_q$ such that $z=x^2+y^2$.
\end{lemma}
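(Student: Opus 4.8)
The plan is to prove this by a pigeonhole (counting) argument on the squares of $\F_q$. The key input is that, because $q$ is a power of an odd prime, the squaring map $x \mapsto x^2$ is two-to-one on $\F_q^{*}$ (since $x$ and $-x$ share the same square and $x \neq -x$ whenever $x \neq 0$); consequently there are exactly $\frac{q+1}{2}$ squares in $\F_q$, counting $0$ among them.

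Fixing $z \in \F_q$, I would introduce the two sets
\begin{align*}
A = \{x^2 : x \in \F_q\} \quad \text{and} \quad B = \{z - y^2 : y \in \F_q\}.
\end{align*}
From the count above, $|A| = \frac{q+1}{2}$; and since $B$ is the image of the set of squares under the bijection $u \mapsto z - u$ of $\F_q$, one also has $|B| = \frac{q+1}{2}$. The decisive step is then the pigeonhole principle: as $A, B \subseteq \F_q$ with
\begin{align*}
|A| + |B| = \frac{q+1}{2} + \frac{q+1}{2} = q + 1 > q = |\F_q|,
\end{align*}
the sets $A$ and $B$ cannot be disjoint. Any common value gives $x, y \in \F_q$ with $x^2 = z - y^2$, that is, $z = x^2 + y^2$, as desired.

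I do not expect a genuine obstacle in this argument: the whole proof rests on the single count of squares, and the only point requiring care is that the squaring map is two-to-one on the nonzero elements of $\F_q$ — a feature specific to odd characteristic. (In characteristic $2$ every element is already a square and the claim is trivial, but that case lies outside the stated hypothesis that $q$ is a power of an odd prime.)
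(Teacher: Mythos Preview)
Your argument is correct and is the standard pigeonhole proof of this fact. The paper does not actually give its own proof of this lemma but simply cites Serre's \emph{A Course in Arithmetic}; the proof there is precisely the counting argument you outline, so your proposal matches the intended reference.
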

Let $\gamma$ be a nonsquare in $\F_q$. From Lemma \ref{eq:x^2+y^2}, there exist $a$ and $b$ in $\F_q$ such that $\gamma=a^2+b^2$.
 Let $G_{+}$ and $G_{-}$ be $k\times n$ matrices defined by
\begin{align}\label{eq:G-+}
G_{+}=\left [\begin{matrix}
\mathbf{e}_1\\
\mathbf{e}_2\\
\vdots\\
\mathbf{e}_{k-1}\\
\mathbf{e}_{k}
\end{matrix} \right ] \text{ and } G_{-}=\left [\begin{matrix}
\mathbf{e}_1\\
\mathbf{e}_2\\
\vdots\\
\mathbf{e}_{k-1}\\
a\mathbf{e}_{k}+b\mathbf{e}_{k+1}
\end{matrix} \right ].
\end{align}
Let $\C_{+}$ and $\C_{-}$ be linear codes generated by $G_{+}$ and $G_{-}$ respectively. Since $G_{+}G_{+}^T=I_{k}$
 and $G_{-}G_{-}^T=\mathrm{diag}[\underbrace{1, 1,  \ldots, 1}_{k-1}, \gamma]$, $\C_{+} \in \mathrm{LCD}_{+}[n,k]_q$ and $\C_{-} \in \mathrm{LCD}_{-}[n,k]_q$.
 Moreover, $\C_{+}$ and $\C_{-}$ have parity-check matrices $H_{+}$ and $H_{-}$ respectively, where
 \begin{align}\label{eq:H-+}
H_{+}=\left [\begin{matrix}
\mathbf{e}_{k+1}\\
\mathbf{e}_{k+2}\\
\vdots\\
\mathbf{e}_{n-1}\\
\mathbf{e}_{n}
\end{matrix} \right ] \text{ and } H_{-}=\left [\begin{matrix}
\mathbf{e}_{k+2}\\
\mathbf{e}_{k+3}\\
\vdots\\
\mathbf{e}_{n}\\
-b\mathbf{e}_{k}+a\mathbf{e}_{k+1}
\end{matrix} \right ].
\end{align}
Then, $\C_{+}^{\perp} \in \mathrm{LCD}_{+}[n,n-k]_q$ and $\C_{-}^{\perp} \in \mathrm{LCD}_{-}[n,n-k]_q$.
Further, we have the following results.
\begin{proposition}\label{prop:LCD+-}
Let $0<k<n$. If $\C\in \mathrm{LCD}_{+}[n,n-k]_q$ ($\C\in \mathrm{LCD}_{-}[n,n-k]_q$, respectively), then $\C^{\perp}\in \mathrm{LCD}_{+}[n,n-k]_q$ ($\C^{\perp}\in \mathrm{LCD}_{-}[n,n-k]_q$, respectively).
\end{proposition}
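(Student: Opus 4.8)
The plan is to reduce the whole statement to a single observation about square classes in $\F_q^{*}$: the type of $\C$ is governed by whether $\det(GG^T)$ is a square, and the type of $\C^{\perp}$ by whether $\det(HH^T)$ is a square (where $H$ generates $\C^{\perp}$, i.e.\ is a parity-check matrix of $\C$). So it suffices to show that $\det(GG^T)$ and $\det(HH^T)$ always lie in the same square class, and for this I would exhibit an element of $\F_q^{*}$ whose square equals their product. Recall that, as noted in the paragraph preceding the proposition, these determinants are well defined up to squares, so the argument does not depend on the particular choice of $G$ and $H$.

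First I would fix a generator matrix $G$ of $\C$ and a generator matrix $H$ of $\C^{\perp}$, and assemble the $n\times n$ matrix $M=\left[\begin{smallmatrix} G \\ H \end{smallmatrix}\right]$ obtained by stacking $G$ on top of $H$. Since $\C$ is LCD we have $\C\cap\C^{\perp}=\{\mathbf 0\}$, and because $\dim\C+\dim\C^{\perp}=n$ this forces $\C\oplus\C^{\perp}=\F_q^{n}$. Hence the $n$ rows of $M$ form a basis of $\F_q^{n}$, so $M$ is invertible and $\det(M)\in\F_q^{*}$.

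Next I would compute $MM^T$ in block form. Every row of $G$ is orthogonal to every row of $H$ (as $H$ generates $\C^{\perp}$), so $GH^T=0$ and $HG^T=0$, whence $MM^T$ is block-diagonal with diagonal blocks $GG^T$ and $HH^T$. Taking determinants gives $\det(M)^2=\det(MM^T)=\det(GG^T)\det(HH^T)$. The left-hand side is a nonzero square in $\F_q^{*}$ regardless of the actual value of $\det(M)$, so $\det(GG^T)$ and $\det(HH^T)$ differ by a square factor; in particular $\det(GG^T)$ is a square if and only if $\det(HH^T)$ is a square.

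From this the proposition follows immediately: if $\C\in\mathrm{LCD}_{+}[n,k]_q$, then $\det(GG^T)\in(\F_q^{*})^2$, hence $\det(HH^T)\in(\F_q^{*})^2$, i.e.\ $\C^{\perp}\in\mathrm{LCD}_{+}[n,n-k]_q$; the nonsquare case is identical. I do not expect a genuine obstacle here. The only points requiring care are the invertibility of $M$, which rests squarely on the LCD hypothesis via $\C\oplus\C^{\perp}=\F_q^{n}$, and the simple but essential remark that $\det(M)^2$ is automatically a square, which is what decouples the type of $\C^{\perp}$ from that of $\C$ up to the common square factor.
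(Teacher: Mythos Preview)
Your proof is correct and follows essentially the same argument as the paper: stack $G$ over $H$, use $GH^T=0$ to see that $MM^T$ is block-diagonal, and conclude $\det(M)^2=\det(GG^T)\det(HH^T)$ so the two Gram determinants lie in the same square class. If anything, your write-up is slightly more complete, since you justify the invertibility of $M$ via $\C\oplus\C^{\perp}=\F_q^{n}$, whereas the paper leaves this implicit.
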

\begin{proof}
Let $G$ be a generator matrix of $\C$ and $H$ be a parity-check matrix of $\C$. Note that $GH^T=0$. Then,
\begin{align*}
\left (\mathrm{det}\left (\left [\begin{matrix}
G\\
H
\end{matrix} \right ]
\right ) \right )^2=&\mathrm{det}\left (\left [\begin{matrix}
G\\
H
\end{matrix} \right ]   \left [\begin{matrix}
G\\
H
\end{matrix} \right ]^T
\right ) \\
=& \mathrm{det}\left (\left [\begin{matrix}
G G^T & 0\\
0 & HH^T
\end{matrix} \right ]
\right ) \\
=& \mathrm{det}\left (   GG^T\right ) \mathrm{det}\left (   HH^T\right ).
\end{align*}
Then, $ \mathrm{det}\left (   GG^T\right ) \mathrm{det}\left (   HH^T\right ) \in (\F_q^*)^2$, which completes the proof.
\end{proof}

Let $\mathbb{O}_n(q)$ be the set of all $n\times n$ matrix $Q$ over $\F_q$ such that $QQ^T=I_n$. It is observed that $\mathbb{O}_n(q)$ acts on $\mathrm{LCD}[n,n-k]_q$
by  $(\mathcal C,   Q) \longmapsto \C Q$, where $\C\in \mathrm{LCD}[n,k]_q$ and $Q\in \mathbb{O}_n(q)$.
Moreover, $\mathrm{LCD}_{+}[n,k]_q$ and $\mathrm{LCD}_{-}[n,k]_q$ are $\mathbb{O}_n(q)$-invariant subsets. In fact, we have the following stronger results.
\begin{proposition}
Let $k$ and $n$ be two positive integers with $k<n$. Then, $\mathrm{LCD}_{+}[n,k]_q= \C_{+} \mathbb{O}_n(q)$ and $\mathrm{LCD}_{-}[n,k]_q= \C_{-} \mathbb{O}_n(q)$.
Hence, $\mathrm{LCD}[n,k]_q= \C_{+} \mathbb{O}_n(q) \cup \C_{-} \mathbb{O}_n(q)$.
\end{proposition}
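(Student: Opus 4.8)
The plan is to reproduce, in odd characteristic, the orbit argument of Theorem~\ref{thm:C-Q}, using Proposition~\ref{prop:diagonal-odd} in place of Proposition~\ref{prop:symmetic-matix}. One inclusion is essentially free. If $G$ is a generator matrix of $\C_{+}$ (resp. $\C_{-}$) and $Q\in\mathbb{O}_n(q)$, then $GQ$ is a generator matrix of $\C_{+}Q$ and $(GQ)(GQ)^T=GQQ^TG^T=GG^T$; hence $\det\bigl((GQ)(GQ)^T\bigr)=\det(GG^T)$ lies in the same square class, so $\C_{+}Q\in\mathrm{LCD}_{+}[n,k]_q$ and $\C_{-}Q\in\mathrm{LCD}_{-}[n,k]_q$. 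This yields $\C_{+}\mathbb{O}_n(q)\subseteq\mathrm{LCD}_{+}[n,k]_q$ and $\C_{-}\mathbb{O}_n(q)\subseteq\mathrm{LCD}_{-}[n,k]_q$, and is precisely the $\mathbb{O}_n(q)$-invariance already noted. The content of the proposition is therefore the reverse inclusions, i.e. that each of $\mathrm{LCD}_{+}[n,k]_q$ and $\mathrm{LCD}_{-}[n,k]_q$ is a single $\mathbb{O}_n(q)$-orbit.

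For the positive case I would take any $\C\in\mathrm{LCD}_{+}[n,k]_q$. Since a generator matrix $G'$ gives an invertible symmetric $G'G'^T$ (Proposition~\ref{prop:LCD-Gram}) whose determinant is a square, Proposition~\ref{prop:diagonal-odd} furnishes a generator matrix $G$ with $GG^T=I_k$, i.e. an orthonormal basis $\mathbf c_1,\ldots,\mathbf c_k$ of $\C$. By Proposition~\ref{prop:LCD+-}, $\C^{\perp}\in\mathrm{LCD}_{+}[n,n-k]_q$, so $\C^{\perp}$ likewise has an orthonormal basis $\mathbf c_{k+1},\ldots,\mathbf c_{n}$. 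Because $\C\cap\C^{\perp}=\{\mathbf 0\}$ and $\dim\C+\dim\C^{\perp}=n$, these $n$ vectors form a basis of $\F_q^n$, and the rows coming from $\C$ are orthogonal to those coming from $\C^{\perp}$. Stacking all of them into a matrix $Q$ gives $QQ^T=I_n$, so $Q\in\mathbb{O}_n(q)$; moreover $\mathbf e_iQ=\mathbf c_i$, whence $\C_{+}Q=\mathrm{Span}\{\mathbf c_1,\ldots,\mathbf c_k\}=\C$.

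For the negative case I would first normalize the Gram matrices so that they coincide with those of $\C_{-}$. Fix the nonsquare $\gamma=a^2+b^2$ from Lemma~\ref{eq:x^2+y^2}. For $\C\in\mathrm{LCD}_{-}[n,k]_q$, Proposition~\ref{prop:diagonal-odd} (which permits $\delta$ to be any prescribed nonsquare, as rescaling the last basis vector multiplies $\delta$ by a square) provides a generator matrix $G$ with $GG^T=\mathrm{diag}[1,\ldots,1,\gamma]$; by Proposition~\ref{prop:LCD+-} the dual $\C^{\perp}\in\mathrm{LCD}_{-}[n,n-k]_q$ admits a generator matrix $H$ with $HH^T=\mathrm{diag}[1,\ldots,1,\gamma]$, and $GH^T=0$. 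Writing $Q_2=\left[\begin{matrix}G\\ H\end{matrix}\right]$ and $Q_1=\left[\begin{matrix}G_{-}\\ H_{-}\end{matrix}\right]$, a direct computation with the matrices in Equations (\ref{eq:G-+}) and (\ref{eq:H-+}) shows
\[
Q_1Q_1^T=Q_2Q_2^T=M:=\mathrm{diag}[\underbrace{1,\ldots,1}_{k-1},\gamma,\underbrace{1,\ldots,1}_{n-k-1},\gamma].
\]
Setting $Q=Q_1^{-1}Q_2$ (both $Q_1,Q_2$ are invertible since $\det M\neq0$), one gets $QQ^T=Q_1^{-1}(Q_2Q_2^T)(Q_1^{-1})^T=Q_1^{-1}(Q_1Q_1^T)(Q_1^{-1})^T=I_n$, so $Q\in\mathbb{O}_n(q)$, while $Q_1Q=Q_2$ forces $G_{-}Q=G$ and hence $\C_{-}Q=\C$.

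The hard part will be exactly this matching of Gram matrices in the negative case: one must ensure that both the code and (via Proposition~\ref{prop:LCD+-}) its dual can be put in standard form with the \emph{same} fixed nonsquare $\gamma$, placed in the last coordinate of each block, so that $Q_1^{-1}Q_2$ comes out genuinely orthogonal rather than merely congruence-equivalent to the identity. Once the two reverse inclusions are established, combining them with the invariance of the first paragraph yields $\mathrm{LCD}_{+}[n,k]_q=\C_{+}\mathbb{O}_n(q)$ and $\mathrm{LCD}_{-}[n,k]_q=\C_{-}\mathbb{O}_n(q)$; the final claim then follows at once from the decomposition $\mathrm{LCD}[n,k]_q=\mathrm{LCD}_{+}[n,k]_q\cup\mathrm{LCD}_{-}[n,k]_q$.
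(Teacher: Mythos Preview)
Your proof is correct and follows essentially the same line the paper intends: the paper's own argument is a one-liner deferring to the decomposition $\mathrm{LCD}[n,k]_q=\mathrm{LCD}_{+}[n,k]_q\cup\mathrm{LCD}_{-}[n,k]_q$ together with Proposition~\ref{prop:LCD+-}, tacitly asking the reader to rerun the orbit argument of Theorem~\ref{thm:C-Q} with Proposition~\ref{prop:diagonal-odd} in place of Proposition~\ref{prop:symmetic-matix}. You have carried out precisely that adaptation, including the key use of Proposition~\ref{prop:LCD+-} to normalize the Gram matrix of the dual, so your proposal is in fact more complete than the paper's sketch.
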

\begin{proof}
From $\mathrm{LCD}[n,k]_q= \mathrm{LCD}_{+}[n,k]_q \cup \mathrm{LCD}_{-}[n,k]_q$ and Proposition \ref{prop:LCD+-}, this proposition follows.
\end{proof}

\begin{lemma}\label{lem:St-C-odd}
Let $\C$ be an $[n,k]$ LCD code  over $\F_q$, $G$ be a generator matrix of $\C$ and $H$ be a generator matrix of $\C^{\perp}$.
Then, $Q\in \mathrm{St}(\C)$ if and only if $Q= \left [ \begin{array}{c}
  G\\
  H
  \end{array} \right ]^{-1}
\left [ \begin{array}{cc}
 Q_1 & 0\\
 0 & Q_2
 \end{array} \right ] \left [ \begin{array}{c}
  G\\
  H
  \end{array} \right ],
$ where $Q_1\in \mathbb{GL}_k(q)$ and $Q_2\in \mathbb{GL}_{n-k}(q)$ such that $Q_1 (G G^T) Q_1^T= G G^T$ and $Q_2 (H H^T)  Q_2^T= H H^T$.
\end{lemma}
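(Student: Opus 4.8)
The plan is to mirror the proof of Lemma \ref{lem:St-C} almost verbatim, since the only structural change is that we now work over $\F_q$ (with $q$ a power of an odd prime) instead of over $\F_2$. The key algebraic facts that made the binary argument work carry over unchanged: the decomposition $\F_q^n = \C \oplus \C^{\perp}$ valid for any LCD code (from Proposition \ref{prop:LCD-Gram}, $GG^T$ invertible), the block matrix $\left[\begin{smallmatrix} G\\ H\end{smallmatrix}\right]$ being invertible, and the orthogonality $GH^T = 0$. First I would prove the ``if'' direction: assume $Q = \left[\begin{smallmatrix} G\\ H\end{smallmatrix}\right]^{-1}\left[\begin{smallmatrix} Q_1 & 0\\ 0 & Q_2\end{smallmatrix}\right]\left[\begin{smallmatrix} G\\ H\end{smallmatrix}\right]$ with $Q_1(GG^T)Q_1^T = GG^T$ and $Q_2(HH^T)Q_2^T = HH^T$. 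Multiplying out gives $\left[\begin{smallmatrix} G\\ H\end{smallmatrix}\right]Q = \left[\begin{smallmatrix} Q_1 G\\ Q_2 H\end{smallmatrix}\right]$, so that $GQ = Q_1 G$ forces $\C Q = \C$. Then I would run the same block computation as in Lemma \ref{lem:St-C} to verify $QQ^T = I_n$, using $GH^T = 0$ (equivalently $Q_1 G H^T Q_2^T = 0$) to kill the off-diagonal blocks and the two invariance conditions to recover $GG^T$ and $HH^T$ on the diagonal; invertibility of $\left[\begin{smallmatrix} G\\ H\end{smallmatrix}\right]$ then cancels to yield $Q \in \mathbb{O}_n(q)$, hence $Q \in \mathrm{St}(\C)$.

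For the converse, I would start from $Q \in \mathrm{St}(\C)$, i.e. $\C Q = \C$, which gives $GQ = Q_1 G$ for some $Q_1 \in \mathbb{GL}_k(q)$. The crucial observation is that $Q \in \mathbb{O}_n(q)$ preserves the Euclidean inner product, so for any $\mathbf c \in \C$ and $\mathbf w \in \C^{\perp}$ we have $(\mathbf c Q)\cdot(\mathbf w Q) = \mathbf c \cdot \mathbf w = 0$; since $\C Q = \C$, this shows $\C^{\perp} Q = \C^{\perp}$, whence $HQ = Q_2 H$ for some $Q_2 \in \mathbb{GL}_{n-k}(q)$. Assembling the two blocks yields $\left[\begin{smallmatrix} G\\ H\end{smallmatrix}\right] Q = \left[\begin{smallmatrix} Q_1 & 0\\ 0 & Q_2\end{smallmatrix}\right]\left[\begin{smallmatrix} G\\ H\end{smallmatrix}\right]$, and inverting gives the stated form of $Q$. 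Finally, from $Q \in \mathbb{O}_n(q)$ one computes $GG^T = (GQ)(GQ)^T = Q_1(GG^T)Q_1^T$ and similarly $HH^T = Q_2(HH^T)Q_2^T$, establishing the two invariance constraints.

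I do not expect any genuine obstacle here: nowhere in the binary proof did the argument rely on the characteristic being $2$ — it only used that $\C$ is LCD (so $GG^T$ and hence $\left[\begin{smallmatrix} G\\ H\end{smallmatrix}\right]$ are invertible), that orthogonal matrices preserve the inner product, and elementary block-matrix algebra. The only point requiring a word of care is ensuring $\left[\begin{smallmatrix} G\\ H\end{smallmatrix}\right]$ is invertible over $\F_q$, which follows because $\F_q^n = \C \oplus \C^{\perp}$ for an LCD code, so the rows of $G$ and $H$ together form a basis of $\F_q^n$. Consequently the proof is essentially a transcription: ``Using the same argument as in the proof of Lemma \ref{lem:St-C}, the result follows,'' though I would include the inner-product-preservation step and the final block computation explicitly for completeness.
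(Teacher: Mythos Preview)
Your proposal is correct and matches the paper's approach exactly: the paper's proof of this lemma is the single line ``The proof is analogous to the proof of Lemma~\ref{lem:St-C},'' and you have faithfully spelled out that analogy, including the only subtlety (invertibility of $\left[\begin{smallmatrix} G\\ H\end{smallmatrix}\right]$ from the LCD decomposition $\F_q^n=\C\oplus\C^{\perp}$) that one might pause over.
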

\begin{proof}
The proof is analogous to the proof of Lemma \ref{lem:St-C}.
\end{proof}

For $\delta\in \F_q^*$, let $\mathbb{O}_k^{\delta}(q)$ be the group  defined by
 \begin{align*}
 \mathbb{O}_k^{\delta}(q)=\{Q\in \mathbb{O}_k(q): Q \mathrm{diag}[\underbrace{1, 1,  \ldots, 1}_{k-1}, \delta] Q^T= \mathrm{diag}[\underbrace{1, 1,  \ldots, 1}_{k-1}, \delta]\}.
 \end{align*}
Then, $\mathbb{O}_k^{1}(q)$ is just $\mathbb{O}_k(q)$.

\begin{corollary}\label{cor:St-C-+}
Let $k$ and $n$ be two positive integers with $0<k<n$.

(i) Let $\C_{+}$ be the LCD code  with 
 the generator matrix $G_{+}$ defined by Equation (\ref{eq:G-+}). Then,
$$\mathrm{St}(\C_{+})=\left \{\left [ \begin{array}{cc}
 Q_1 & 0\\
 0 & Q_2
 \end{array} \right ]: Q_1\in \mathbb O_k(q), Q_2 \in \mathbb O_{n-k}(q) \right \}.$$

 (ii) Let  $G_{-}$ and $H_{-}$  be  matrices defined by Equations (\ref{eq:G-+}) and (\ref{eq:H-+}). Then,
 $$\mathrm{St}(\C_{-})=\left \{ \left [ \begin{array}{c}
  G_{-}\\
  H_{-}
  \end{array} \right ]^{-1}  \left [ \begin{array}{cc}
 Q_1 & 0\\
 0 & Q_2
 \end{array} \right ] \left [ \begin{array}{c}
  G_{-}\\
  H_{-}
  \end{array} \right ]: Q_1\in \mathbb O_k^{\gamma}(q), Q_2 \in \mathbb O_{n-k}^{\gamma}(q) \right \}.$$

 \end{corollary}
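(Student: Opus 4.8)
The plan is to apply Lemma \ref{lem:St-C-odd} directly, after recording the Gram matrices of the explicit generator and parity-check matrices of $\C_{+}$ and $\C_{-}$. As a preliminary, I would note that $G_{+}$ and $G_{-}$ are generator matrices of $\C_{+}$ and $\C_{-}$, while $H_{+}$ and $H_{-}$ are generator matrices of $\C_{+}^{\perp}$ and $\C_{-}^{\perp}$ (established in the paragraph preceding the corollary), so that Lemma \ref{lem:St-C-odd} applies verbatim with $G=G_{\pm}$ and $H=H_{\pm}$. The whole argument is then a transparent specialization of that lemma.

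For part (i), the key observation is that $\left[\begin{matrix} G_{+} \\ H_{+} \end{matrix}\right]$ is exactly the $n\times n$ identity matrix, since the rows of $G_{+}$ are $\mathbf{e}_1, \ldots, \mathbf{e}_k$ and those of $H_{+}$ are $\mathbf{e}_{k+1}, \ldots, \mathbf{e}_n$. A one-line computation gives $G_{+}G_{+}^T = I_k$ and $H_{+}H_{+}^T = I_{n-k}$. Substituting these into Lemma \ref{lem:St-C-odd}, the conjugating factor $\left[\begin{matrix} G_{+} \\ H_{+} \end{matrix}\right]$ and its inverse both vanish, and the constraints $Q_1(G_{+}G_{+}^T)Q_1^T = G_{+}G_{+}^T$ and $Q_2(H_{+}H_{+}^T)Q_2^T = H_{+}H_{+}^T$ collapse to $Q_1 Q_1^T = I_k$ and $Q_2 Q_2^T = I_{n-k}$, i.e. $Q_1\in\mathbb{O}_k(q)$ and $Q_2\in\mathbb{O}_{n-k}(q)$. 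This yields exactly the block-diagonal description asserted in part (i).

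For part (ii), I would compute the two Gram matrices explicitly using $\gamma = a^2 + b^2$. The last row $a\mathbf{e}_k + b\mathbf{e}_{k+1}$ of $G_{-}$ is orthogonal to $\mathbf{e}_1, \ldots, \mathbf{e}_{k-1}$ and has squared norm $a^2+b^2 = \gamma$, so $G_{-}G_{-}^T = \mathrm{diag}[1,\ldots,1,\gamma]$ of size $k$. Likewise the last row $-b\mathbf{e}_k + a\mathbf{e}_{k+1}$ of $H_{-}$ is orthogonal to $\mathbf{e}_{k+2}, \ldots, \mathbf{e}_n$ and again has squared norm $\gamma$, giving $H_{-}H_{-}^T = \mathrm{diag}[1,\ldots,1,\gamma]$ of size $n-k$. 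Feeding these into Lemma \ref{lem:St-C-odd}, the condition $Q_1(G_{-}G_{-}^T)Q_1^T = G_{-}G_{-}^T$ becomes precisely $Q_1\,\mathrm{diag}[1,\ldots,1,\gamma]\,Q_1^T = \mathrm{diag}[1,\ldots,1,\gamma]$, which is the defining relation of $\mathbb{O}_k^{\gamma}(q)$, and similarly $Q_2\in\mathbb{O}_{n-k}^{\gamma}(q)$. Here the conjugation by $\left[\begin{matrix} G_{-} \\ H_{-} \end{matrix}\right]$ is retained, since this matrix is no longer the identity, and this reproduces the stated description of $\mathrm{St}(\C_{-})$.

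The step I expect to require the most care is the bookkeeping in part (ii): verifying that all off-diagonal inner products in $G_{-}G_{-}^T$ and $H_{-}H_{-}^T$ vanish and that both forms reduce to the same shape $\mathrm{diag}[1,\ldots,1,\gamma]$, so that the preserved form matches exactly the form defining $\mathbb{O}_k^{\gamma}(q)$ and $\mathbb{O}_{n-k}^{\gamma}(q)$. Everything else is a routine instantiation of Lemma \ref{lem:St-C-odd}, so no genuinely hard argument is needed beyond these Gram-matrix computations.
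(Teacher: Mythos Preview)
Your proposal is correct and follows exactly the paper's approach: the paper's proof consists of the single sentence ``This corollary follows from Lemma \ref{lem:St-C-odd},'' and you have simply spelled out the Gram-matrix computations $G_{\pm}G_{\pm}^T$ and $H_{\pm}H_{\pm}^T$ that make that application transparent. There is nothing to add.
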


 \begin{proof}
 This corollary follows from Lemma \ref{lem:St-C-odd}.
 \end{proof}

To determine the cardinality of the orbit $\C \mathbb{O}_n(q)$, we need the cardinality of $\mathbb{O}_n^{\delta}(q)$, which can be found in \cite{She,Weyl16}.
If $n$ is odd, one has
\begin{align}\label{eq:num-O-q-odd}
|\mathbb{O}_n^{\delta}(q)|=   2 q^{\frac{(n-1)^2}{4}}   \prod_{i=1}^{\frac{n-1}{2}} (q^{2i}-1).
\end{align}
If $n$ is even, one has
\begin{align}\label{eq:num-O-q-even}
|\mathbb{O}_n^{\delta}(q)|=  2 q^{\frac{n(n-2)}{4}}  \left (q^{\frac{n}{2}}-\eta \left ((-1)^{\frac{n}{2}}\delta \right ) \right )  \prod_{i=1}^{\frac{n}{2}-1} (q^{2i}-1),
\end{align}
where $\eta$ is the  Legendre character of $\F_q$.

\begin{theorem}\label{thm:size-orbits-q}
Let $q$ be a power of odd prime and  $k$, $n$ be 
two positive integers with $k<n$.

(i) $\mathrm{LCD}_{+}[n,k]_q = \C_{+} \mathbb{O}_n(q)$ and 
\begin{align*}
|\mathrm{LCD}_{+}[n,k]_q|= \begin{cases}
 \frac{1}{2}  q^{\frac{k(n-k)-1}{2}}   \left (q^{\frac{n}{2}}- \eta \left( (-1)^{\frac{n}{2}}    \right )  \right )             {\frac{n}{2}-1 \brack \frac{k-1}{2}}_{q^2}, & \text{ if }  k \text{ odd}, n \text{ even,}
 \cr  \frac{1}{2}  q^{\frac{k(n-k)}{2}}   \left (q^{\frac{n-k}{2}}+ \eta \left( (-1)^{\frac{n-k}{2}}    \right )  \right )             {\frac{n-1}{2} \brack \frac{k-1}{2}}_{q^2}, & \text{ if }  k \text{ odd}, n \text{ odd,}
 \cr  \frac{1}{2}  q^{\frac{k(n-k)}{2}}   \left (q^{\frac{k}{2}}+ \eta \left( (-1)^{\frac{k}{2}}    \right )  \right )  {\frac{n-1}{2} \brack \frac{k}{2}}_{q^2}, & \text{ if }  k \text{ even}, n \text{ odd,}
 \cr  \frac{1}{2}  q^{\frac{k(n-k)}{2}}    \frac{\left (q^{\frac{k}{2}}+ \eta \left( (-1)^{\frac{k}{2}}    \right )  \right )
 \left (q^{\frac{n-k}{2}}+ \eta \left( (-1)^{\frac{n-k}{2}}    \right )  \right )  }{\left (q^{\frac{n}{2}}+ \eta \left( (-1)^{\frac{n}{2}}    \right )  \right ) }
     {\frac{n}{2} \brack \frac{k}{2}}_{q^2}, & \text{ if }  k \text{ even}, n \text{ even.}
  \end{cases}
\end{align*}

(ii)  $\mathrm{LCD}_{-}[n,k]_q = \C_{-} \mathbb{O}_n(q)$ and 
\begin{align*}
|\mathrm{LCD}_{-}[n,k]_q|= \begin{cases}
 \frac{1}{2}  q^{\frac{k(n-k)-1}{2}}   \left (q^{\frac{n}{2}}- \eta \left( (-1)^{\frac{n}{2}}    \right )  \right )             {\frac{n}{2}-1 \brack \frac{k-1}{2}}_{q^2}, & \text{ if }  k \text{ odd}, n \text{ even,}
 \cr  \frac{1}{2}  q^{\frac{k(n-k)}{2}}   \left (q^{\frac{n-k}{2}}- \eta \left( (-1)^{\frac{n-k}{2}}    \right )  \right )             {\frac{n-1}{2} \brack \frac{k-1}{2}}_{q^2}, & \text{ if }  k \text{ odd}, n \text{ odd,}
 \cr  \frac{1}{2}  q^{\frac{k(n-k)}{2}}   \left (q^{\frac{k}{2}}- \eta \left( (-1)^{\frac{k}{2}}    \right )  \right )  {\frac{n-1}{2} \brack \frac{k}{2}}_{q^2}, & \text{ if }  k \text{ even}, n \text{ odd,}
 \cr  \frac{1}{2}  q^{\frac{k(n-k)}{2}}    \frac{\left (q^{\frac{k}{2}}- \eta \left( (-1)^{\frac{k}{2}}    \right )  \right )
 \left (q^{\frac{n-k}{2}}- \eta \left( (-1)^{\frac{n-k}{2}}    \right )  \right )  }{\left (q^{\frac{n}{2}}+ \eta \left( (-1)^{\frac{n}{2}}    \right )  \right ) }
     {\frac{n}{2} \brack \frac{k}{2}}_{q^2}, & \text{ if }  k \text{ even}, n \text{ even.}
  \end{cases}
\end{align*}

\end{theorem}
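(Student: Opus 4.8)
The plan is to realize each cardinality as the size of a single orbit and then apply the orbit--stabilizer theorem. By the proposition preceding the statement we have $\mathrm{LCD}_{+}[n,k]_q = \C_{+}\mathbb{O}_n(q)$ and $\mathrm{LCD}_{-}[n,k]_q = \C_{-}\mathbb{O}_n(q)$, so that
$$|\mathrm{LCD}_{\pm}[n,k]_q| = \frac{|\mathbb{O}_n(q)|}{|\mathrm{St}(\C_{\pm})|}.$$
By Corollary \ref{cor:St-C-+} the stabilizers split as direct products: $|\mathrm{St}(\C_{+})| = |\mathbb{O}_k(q)|\cdot|\mathbb{O}_{n-k}(q)|$ and $|\mathrm{St}(\C_{-})| = |\mathbb{O}_k^{\gamma}(q)|\cdot|\mathbb{O}_{n-k}^{\gamma}(q)|$, where $\gamma$ is the fixed nonsquare with $\eta(\gamma)=-1$ and $\mathbb{O}_m(q)=\mathbb{O}_m^{1}(q)$. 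Since $\mathbb{O}_n(q)=\mathbb{O}_n^{1}(q)$ throughout, both cardinalities reduce entirely to the known orders \eqref{eq:num-O-q-odd} and \eqref{eq:num-O-q-even}.

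First I would substitute these orders and split the computation according to the parities of $k$ and $n$, since this controls which of the three groups $\mathbb{O}_n,\mathbb{O}_k,\mathbb{O}_{n-k}$ is odd- or even-dimensional. For odd dimension the order \eqref{eq:num-O-q-odd} is independent of the discriminant, so such factors are identical in the $+$ and $-$ computations; the entire difference between $\mathrm{LCD}_{+}$ and $\mathrm{LCD}_{-}$ enters only through the even-dimensional factors, via the rule $\eta((-1)^{m/2}\gamma)=-\eta((-1)^{m/2})$. The power of $q$ is routine bookkeeping: in each case the numerator exponent ($\tfrac{n(n-2)}{4}$ when $n$ is even, $\tfrac{(n-1)^2}{4}$ when $n$ is odd) minus the two denominator exponents collapses to $\tfrac{k(n-k)}{2}$, except in the case $n$ even and $k$ odd where it equals $\tfrac{k(n-k)-1}{2}$, while the three factors of $2$ leave the overall constant $\tfrac12$.

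The heart of the argument is converting the leftover product $P(s):=\prod_{i=1}^{s}(q^{2i}-1)$ into the stated Gaussian coefficient and matching the $\eta$-factors. For this I would use two identities: the multiplicativity $\eta((-1)^{n/2})=\eta((-1)^{k/2})\,\eta((-1)^{(n-k)/2})$, which follows from $(-1)^{n/2}=(-1)^{k/2}(-1)^{(n-k)/2}$, and the factorization $q^{m}-1=(q^{m/2}-\varepsilon)(q^{m/2}+\varepsilon)$ valid for any $\varepsilon\in\{\pm1\}$ since $\varepsilon^2=1$. Shifting one index in the product ratio by one step through $P(s)=P(s-1)(q^{2s}-1)$ produces exactly the extra factor $q^{m}-1$ required to form the Gaussian coefficient with base $q^2$, and the factorization then cancels the $q^{m/2}\mp\varepsilon$ already present in a stabilizer factor, replacing it by $q^{m/2}\pm\varepsilon$. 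I expect the case $k$ even and $n$ even to be the main obstacle: there all three groups carry $\eta$-factors, and one must combine the multiplicativity relation with two such cancellations to turn the quotient $(q^{n/2}-c)\big/\big[(q^{k/2}-a)(q^{(n-k)/2}-b)\big]$ into the stated form $\big[(q^{k/2}+a)(q^{(n-k)/2}+b)\big]\big/(q^{n/2}+c)$, where $a,b,c$ denote the three Legendre values and $c=ab$. The $\mathrm{LCD}_{-}$ formulas are then obtained by the identical manipulation after flipping $a$ and $b$ through $\eta(\gamma)=-1$, which accounts precisely for the sign changes in the corresponding factors of part (ii).
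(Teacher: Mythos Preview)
Your proposal is correct and follows exactly the paper's approach: apply orbit--stabilizer to the single orbits $\C_{\pm}\mathbb{O}_n(q)$, use Corollary~\ref{cor:St-C-+} to identify the stabilizers as $\mathbb{O}_k(q)\times\mathbb{O}_{n-k}(q)$ and $\mathbb{O}_k^{\gamma}(q)\times\mathbb{O}_{n-k}^{\gamma}(q)$, and then plug in the orders \eqref{eq:num-O-q-odd}--\eqref{eq:num-O-q-even}. The paper simply states that the closed forms follow from these orders without writing out the simplification, whereas you have outlined the algebra (exponent bookkeeping, the $\eta$-multiplicativity, and the $q^{m}-1=(q^{m/2}-\varepsilon)(q^{m/2}+\varepsilon)$ trick) in more detail than the paper provides.
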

\begin{proof}
From $\mathrm{LCD}_{+}[n,k]_q = \C_{+} \mathbb{O}_n(q)$ and Corollary \ref{cor:St-C-+}, one has
\begin{align*}
|\mathrm{LCD}_{+}[n,k]_q| =& \frac{|\mathbb{O}_n(q)|}{|\mathrm{St}(\C_{+})|}\\
=& \frac{|\mathbb{O}_n(q)|}{|\mathbb{O}_{k}(q)| \cdot |\mathbb{O}_{n-k}(q)|}.
\end{align*}
Then, Part (i) follows from Equations (\ref{eq:num-O-q-odd}) and (\ref{eq:num-O-q-even}).

From $\mathrm{LCD}_{-}[n,k]_q = \C_{-} \mathbb{O}_n(q)$ and Corollary \ref{cor:St-C-+}, one has
\begin{align*}
|\mathrm{LCD}_{-}[n,k]_q| =& \frac{|\mathbb{O}_n(q)|}{|\mathrm{St}(\C_{-})|}\\
=& \frac{|\mathbb{O}_n(q)|}{|\mathbb{O}_{k}^{\gamma}(q)| \cdot |\mathbb{O}_{n-k}^{\gamma}(q)|}.
\end{align*}
Then, Part (ii) follows from Equations (\ref{eq:num-O-q-odd}) and (\ref{eq:num-O-q-even}).

\end{proof}

\begin{corollary}\label{cor:size-LCD-q}
Let $q$ be a power of an odd prime and  $k$, $n$ be two positive integers with $k<n$. Then
\begin{align*}
|\mathrm{LCD}[n,k]_q|= \begin{cases}
  q^{\frac{k(n-k)-1}{2}}   \left (q^{\frac{n}{2}}- \eta \left( (-1)^{\frac{n}{2}}    \right )  \right )             {\frac{n}{2}-1 \brack \frac{k-1}{2}}_{q^2}, & \text{ if }  k \text{ odd}, n \text{ even,}
 \cr    q^{\frac{(k+1)(n-k)}{2}}      {\frac{n-1}{2} \brack \frac{k-1}{2}}_{q^2}, & \text{ if }  k \text{ odd}, n \text{ odd,}
 \cr    q^{\frac{k(n-k+1)}{2}}     {\frac{n-1}{2} \brack \frac{k}{2}}_{q^2}, & \text{ if }  k \text{ even}, n \text{ odd,}
 \cr   q^{\frac{k(n-k)}{2}}     {\frac{n}{2} \brack \frac{k}{2}}_{q^2}, & \text{ if }  k \text{ even}, n \text{ even.}
  \end{cases}
\end{align*}

\end{corollary}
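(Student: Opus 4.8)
The plan is to exploit the disjoint decomposition $\mathrm{LCD}[n,k]_q = \mathrm{LCD}_{+}[n,k]_q \cup \mathrm{LCD}_{-}[n,k]_q$ recorded just before the statement. This union is genuinely disjoint: the computation preceding the definition shows that for any two generator matrices $G_1,G_2$ of the same code one has $\det(G_2G_2^T)\det(G_1G_1^T)^{-1}\in(\F_q^*)^2$, so the square class of $\det(GG^T)$ is an invariant of $\C$ and places it in exactly one of the two sets. Hence $|\mathrm{LCD}[n,k]_q| = |\mathrm{LCD}_{+}[n,k]_q| + |\mathrm{LCD}_{-}[n,k]_q|$, and I would simply add the two closed formulas furnished by Theorem \ref{thm:size-orbits-q}, case by case in the four parities of $(k,n)$.

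In the first three cases the addition is routine. When $k$ is odd and $n$ is even the two summands are literally identical, so their sum doubles one of them and cancels the prefactor $\frac12$, producing the stated expression. When $k$ is odd and $n$ is odd (respectively $k$ even and $n$ odd) the two formulas agree except for the sign in front of the Legendre term $\eta((-1)^{(n-k)/2})$ (respectively $\eta((-1)^{k/2})$); adding them cancels these terms and leaves $\frac12\cdot 2\,q^{(n-k)/2}$ (respectively $\frac12\cdot 2\,q^{k/2}$) times the common $q$-binomial coefficient. Absorbing the surviving power of $q$ into the prefactor converts the exponent $\frac{k(n-k)}{2}$ into $\frac{(k+1)(n-k)}{2}$ (respectively $\frac{k(n-k+1)}{2}$), matching the corollary.

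The one step demanding real care is the case $k$ even, $n$ even, where both summands carry the common denominator $q^{n/2}+\eta((-1)^{n/2})$ and differ by flipping the signs of both $\eta((-1)^{k/2})$ and $\eta((-1)^{(n-k)/2})$ inside a product of two binomials. Writing $a=\eta((-1)^{k/2})$ and $b=\eta((-1)^{(n-k)/2})$, the two numerators sum to
\[
(q^{k/2}+a)(q^{(n-k)/2}+b)+(q^{k/2}-a)(q^{(n-k)/2}-b)=2q^{n/2}+2ab.
\]
The decisive point is the multiplicativity of the Legendre character, which gives $ab=\eta\big((-1)^{k/2}(-1)^{(n-k)/2}\big)=\eta((-1)^{n/2})$; thus the combined numerator equals $2\big(q^{n/2}+\eta((-1)^{n/2})\big)$, cancelling the denominator exactly. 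Once the factor $\frac12$ is absorbed, what remains is $q^{k(n-k)/2}{\frac{n}{2} \brack \frac{k}{2}}_{q^2}$, which is precisely the last case, so the proof concludes there.
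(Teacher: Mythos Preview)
Your proposal is correct and follows exactly the approach the paper intends: the paper's own proof is the one-line ``This corollary follows from Theorem~\ref{thm:size-orbits-q},'' and you have simply carried out that addition of $|\mathrm{LCD}_{+}[n,k]_q|+|\mathrm{LCD}_{-}[n,k]_q|$ explicitly in each parity case. In particular your handling of the $k$ even, $n$ even case via the multiplicativity of $\eta$ to obtain $ab=\eta((-1)^{n/2})$ and cancel the denominator is exactly the computation tacitly assumed there.
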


\begin{proof}
This corollary follows from Theorem \ref{thm:size-orbits-q}.
\end{proof}

\begin{corollary} \label{cor:ratio-LCD-+}
Let $k$ and $n$ be positive integers with $k<n$.

(i)  $\mathrm{LCD}_{+}[n,k]_q= \C_{+} \mathbb{O}_n(q)$ and 
\begin{align*}
 \lim_{\substack{k \to \infty  \\ (n-k) \to \infty}} \frac{|\mathrm{LCD}_{+}[n,k]_q|}{{n \brack k}_{q}}= \frac{1}{ 2 \prod_{i=1}^{\infty} (1+\frac{1}{q^i})}.
 \end{align*}

 (ii)  $\mathrm{LCD}_{-}[n,k]_q= \C_{-} \mathbb{O}_n(q)$ and 
\begin{align*}
 \lim_{\substack{k \to \infty  \\ (n-k) \to \infty}} \frac{|\mathrm{LCD}_{-}[n,k]_q|}{{n \brack k}_{q}}= \frac{1}{ 2 \prod_{i=1}^{\infty} (1+\frac{1}{q^i})}.
 \end{align*}

 (iii) Let $\mathrm{LCD}[n,k]_q$ be the set of all $[n,k]$ LCD codes over $\F_q$. Then,
\begin{align*}
 \lim_{\substack{k \to \infty  \\ (n-k) \to \infty}} \frac{|\mathrm{LCD}[n,k]_q|}{{n \brack k}_{q}}= \frac{1}{ \prod_{i=1}^{\infty} (1+\frac{1}{q^i})}.
 \end{align*}

\end{corollary}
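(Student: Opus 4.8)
The plan is to read off the closed-form counts for $|\mathrm{LCD}_{+}[n,k]_q|$ and $|\mathrm{LCD}_{-}[n,k]_q|$ from Theorem \ref{thm:size-orbits-q}, divide each by ${n \brack k}_q$, and extract the limit by the same product-factorization technique already used in the proof of Corollary \ref{cor:LCD-o-o-codes}. Compared with the binary situation the only new feature is the occurrence of the Legendre-character factors of the shape $(q^{m}\pm\eta(\cdots))$, and the core of the argument is to show that these are asymptotically inert.

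First I would rewrite both ${n \brack k}_q$ and the $q^2$-binomial coefficient appearing in each branch of Theorem \ref{thm:size-orbits-q} by means of Equation (\ref{eq:g-q-n}). That equation gives
\[
{n \brack k}_q = q^{k(n-k)}\,\frac{g_{q,n}}{g_{q,k}\,g_{q,n-k}},
\]
and, applied with base $q^{2}$, it turns a coefficient such as ${\frac{n}{2} \brack \frac{k}{2}}_{q^{2}}$ into $q^{\frac{k(n-k)}{2}}$ times a ratio of $g_{q^{2},\cdot}$ factors. A short exponent count, carried out separately over the four parities of $k$ and $n$, shows that in every branch the explicit power of $q$ written in Theorem \ref{thm:size-orbits-q}, together with the $q^{\frac{k(n-k)}{2}}$ produced by the $q^{2}$-binomial and the leading powers of the factors $(q^{m}\pm\eta(\cdots))$, collapses to exactly $q^{k(n-k)}$. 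Dividing by ${n \brack k}_q$ therefore cancels $q^{k(n-k)}$ entirely, leaving the constant $\tfrac12$, a bounded quotient of $\eta$-factors, and a quotient of the $g$-quantities.

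Next I would dispose of the $\eta$-factors. Writing each as $q^{m}\pm\eta(\cdots)=q^{m}\bigl(1\pm\eta(\cdots)/q^{m}\bigr)$ and using $|\eta(\cdots)|\le 1$, the combined quotient of all such numerator and denominator factors is $1+o(1)$ as soon as the relevant exponents (namely $\tfrac k2$, $\tfrac{n-k}2$, $\tfrac n2$, or their odd analogues) grow without bound, which is guaranteed by $k\to\infty$ and $(n-k)\to\infty$. What remains in each branch has the representative shape
\[
\frac{|\mathrm{LCD}_{+}[n,k]_q|}{{n \brack k}_q}
= \tfrac12\,\bigl(1+o(1)\bigr)\,
\frac{g_{q^{2},\,\alpha}}{g_{q^{2},\,\beta}\,g_{q^{2},\,\gamma}}\cdot
\frac{g_{q,k}\,g_{q,n-k}}{g_{q,n}},
\]
where $\alpha,\beta,\gamma$ are the half-integer indices attached to that branch. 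Since the sequence $g_{q,1},g_{q,2},\dots$ decreases to the positive limit $g_{q,\infty}$ (and likewise for base $q^{2}$), every $g$ above tends to $g_{q,\infty}$ or to $g_{q^{2},\infty}$, so the limit equals $\tfrac12\,g_{q,\infty}/g_{q^{2},\infty}$. Applying the factorization $1-\frac{1}{q^{2i}}=\bigl(1-\frac{1}{q^{i}}\bigr)\bigl(1+\frac{1}{q^{i}}\bigr)$ exactly as in Corollary \ref{cor:LCD-o-o-codes} gives $g_{q,\infty}/g_{q^{2},\infty}=1/\prod_{i=1}^{\infty}\bigl(1+\frac{1}{q^{i}}\bigr)$, which is the value claimed in (i). The computation for $\mathrm{LCD}_{-}$ is word for word the same, the opposite signs inside the $\eta$-factors being invisible in the limit, so (ii) follows. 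Part (iii) is then immediate either by adding (i) and (ii), since $\mathrm{LCD}[n,k]_q$ is the disjoint union of the two subsets and $\tfrac1{2\prod}+\tfrac1{2\prod}=\tfrac1{\prod}$, or directly by running the same argument on the already $\eta$-free formula of Corollary \ref{cor:size-LCD-q}.

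The main obstacle is organizational rather than conceptual: one must carry out the exponent bookkeeping separately in each of the four parity branches and verify both that the $q$-powers always collapse to $q^{k(n-k)}$ after cancellation and that the residual ratio of $\eta$-factors stays bounded and tends to $1$. Once this case analysis is complete, the analytic heart of the proof is the single limit identity $g_{q,\infty}/g_{q^{2},\infty}=1/\prod_{i=1}^{\infty}\bigl(1+\frac{1}{q^{i}}\bigr)$, which is already available from the binary treatment in Corollary \ref{cor:LCD-o-o-codes}.
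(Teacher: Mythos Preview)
Your proposal is correct and follows essentially the same route as the paper: the paper's own proof simply invokes Theorem \ref{thm:size-orbits-q} and Equation (\ref{eq:g-q-n}), and what you have written is precisely the detailed execution of that reference, including the same product identity $g_{q,\infty}/g_{q^{2},\infty}=1/\prod_{i\ge1}(1+q^{-i})$ already used in Corollary \ref{cor:LCD-o-o-codes}.
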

\begin{proof}
From Theorem \ref{thm:size-orbits-q} and Equation (\ref{eq:g-q-n}), it completes the proof.
\end{proof}

\begin{corollary}
Let $k$ and $n$ be two positive integers with $k<n$. Then,
\begin{align*}
 \lim_{\substack{k \to \infty  \\ (n-k) \to \infty}} \frac{|\mathrm{LCD}_{+}[n,k]_q|}{|\mathrm{LCD}[n,k]_q|}= \lim_{\substack{k \to \infty  \\ (n-k) \to \infty}} \frac{|\mathrm{LCD}_{-}[n,k]_q|}{|\mathrm{LCD}[n,k]_q|}=\frac{1}{ 2 }.
 \end{align*}
Thus,
\begin{align*}
 \lim_{\substack{k \to \infty  \\ (n-k) \to \infty}} \frac{|\mathrm{LCD}_{+}[n,k]_q|}{|\mathrm{LCD}_{-}[n,k]_q|}= 1.
  \end{align*}

\end{corollary}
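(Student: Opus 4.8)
The plan is to reduce everything to the three limits already recorded in Corollary \ref{cor:ratio-LCD-+}, by dividing each cardinality through by the Gaussian binomial ${n \brack k}_q$ so that its common growth cancels. First I would write
\begin{align*}
\frac{|\mathrm{LCD}_{+}[n,k]_q|}{|\mathrm{LCD}[n,k]_q|}
= \frac{|\mathrm{LCD}_{+}[n,k]_q|\big/ {n \brack k}_q}{|\mathrm{LCD}[n,k]_q|\big/ {n \brack k}_q},
\end{align*}
and likewise with $\mathrm{LCD}_{-}$ in the role of $\mathrm{LCD}_{+}$. By Corollary \ref{cor:ratio-LCD-+}, as $k\to\infty$ and $(n-k)\to\infty$ the numerator tends to $\frac{1}{2\prod_{i=1}^{\infty}(1+\frac{1}{q^i})}$ while the denominator tends to $\frac{1}{\prod_{i=1}^{\infty}(1+\frac{1}{q^i})}$.

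Next I would invoke the fact that the limit of a quotient is the quotient of the limits whenever the denominator has a nonzero limit. Here the denominator's limit is $\frac{1}{\prod_{i=1}^{\infty}(1+\frac{1}{q^i})}$, which is a finite positive number because the infinite product $\prod_{i=1}^{\infty}(1+\frac{1}{q^i})$ converges for $q>1$ (this is exactly the convergence used in the discussion preceding the asymptotic results). Hence
\begin{align*}
\lim_{\substack{k \to \infty \\ (n-k) \to \infty}} \frac{|\mathrm{LCD}_{+}[n,k]_q|}{|\mathrm{LCD}[n,k]_q|}
= \frac{\ \frac{1}{2\prod_{i=1}^{\infty}(1+\frac{1}{q^i})}\ }{\ \frac{1}{\prod_{i=1}^{\infty}(1+\frac{1}{q^i})}\ }
= \frac{1}{2},
\end{align*}
and the identical computation with $\mathrm{LCD}_{-}$ gives the second equality of the first display.

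For the final assertion I would factor out ${n \brack k}_q$ once more, writing
\begin{align*}
\frac{|\mathrm{LCD}_{+}[n,k]_q|}{|\mathrm{LCD}_{-}[n,k]_q|}
= \frac{|\mathrm{LCD}_{+}[n,k]_q|\big/ {n \brack k}_q}{|\mathrm{LCD}_{-}[n,k]_q|\big/ {n \brack k}_q},
\end{align*}
and observing that by Corollary \ref{cor:ratio-LCD-+} both numerator and denominator converge to the same nonzero value $\frac{1}{2\prod_{i=1}^{\infty}(1+\frac{1}{q^i})}$, so the ratio tends to $1$. Equivalently, one may deduce it directly from the first display as $(1/2)/(1/2)=1$. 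There is no real obstacle here; the only point requiring a word of justification is that passing to the quotient of limits is legitimate, which holds precisely because the relevant denominators have nonzero limits, itself a consequence of the convergence of $\prod_{i=1}^{\infty}(1+\frac{1}{q^i})$.
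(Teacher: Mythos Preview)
Your proof is correct and follows exactly the approach the paper takes: the paper's proof is the single line ``This corollary follows from Corollary \ref{cor:ratio-LCD-+},'' and you have simply spelled out the routine quotient-of-limits argument that underlies that sentence.
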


\begin{proof}
This corollary follows from Corollary \ref{cor:ratio-LCD-+}.
\end{proof}

By a similar discussion as the binary case, classifying the inequivalent $q$-array LCD codes is
equivalent to determining representatives of the following two sets of double cosets:
\begin{align*}
\mathrm{St}(\C_{+}) \backslash \mathbb O_n(q)  / \mathbb{P}_n    \text{ and }  \mathrm{St}(\C_{-}) \backslash \mathbb O_n(q)  / \mathbb{P}_n.
\end{align*}

\section{Concluding remarks}
In this paper, we have pushed further the study of the general structure of LCD codes. Firstly, we have provided a new characterization of LCD codes by their basis.  As a consequence, a conjecture on minimum distance of binary LCD codes proposed by Galvez et al. \cite{GKLRW17} was solved. Then, we have considered the action
 of the orthogonal group on the set of all LCD codes. All the possible orbits of this action have been identified  and  closed formulas of the
 size of the orbits have been derived. Our results show that almost all binary LCD codes are odd-like codes with odd-like duals  
 and about half of $q$-LCD codes have orthonormal basis, where $q$ is a power of an odd prime.

\end{document}